\newcounter{mnotecount}[section]
\renewcommand{\themnotecount}{\thesection.\arabic{mnotecount}}
\newcommand{\mnote}[1]
{\protect{\stepcounter{mnotecount}}$^{\mbox{\footnotesize
$
\bullet$\themnotecount}}$ \marginpar{
\raggedright\tiny\em
$\!\!\!\!\!\!\,\bullet$\themnotecount: #1} }
\newcommand{\tim}[1]{\mnote{{\bf tim:}#1}}
\newtheorem{theorem}{\sc  Theorem\rm}[section]
\newtheorem{corollary}[theorem]{\sc  Corollary\rm}
\newtheorem{lemma}[theorem]{\sc Lemma\rm}
\newtheorem{proposition}[theorem]{\sc Proposition\rm}
\newtheorem{remark}[theorem]{\sc Remark\rm}
\newcommand{\ol}[1]{\overline{#1}{}}
\newcommand{\jlcax}[1]{}
\newcommand{\eean}{\nonumber\end{eqnarray}}
\newcommand{\kk}[1]{}
\newcommand{\beq}{\begin{equation}}
\newcommand{\FS}       
                  {F}
\newcommand{\HS} 
       {H_{\mbox{\scriptsize volume}}}
\newcommand{\eeal}[1]{\label{#1}\end{eqnarray}}
\newcommand{\bed}{\begin{deqarr}}
\newcommand{\eed}{\end{deqarr}}
\newcommand{\bedl}[1]{\begin{deqarr}\label{#1}}
\newcommand{\eedl}[2]{\arrlabel{#1}\label{#2}\end{deqarr}}
\newcommand{\bel}[1]{\begin{equation}\label{#1}}
\newcommand{\bea}{\begin{eqnarray}}
\newcommand{\bean}{\begin{eqnarray}\nonumber}
\newcommand{\beal}[1]{\begin{eqnarray}\label{#1}}
\newcommand{\eea}{\end{eqnarray}}
\def\typeout{:<+ #.tex}\include{#}\typeout{:<-}1{\typeout{:<+ #1.tex}\include{#1}\typeout{:<-}}
\newcommand{\qed}{\hfill $\Box$ \medskip}
\newcommand{\be}{\begin{equation}}
\newcommand{\eeq}{\end{equation}}
\newcommand{\ee}{\end{equation}}
\newcommand{\beqa}{\begin{eqnarray}}
\newcommand{\eeqa}{\end{eqnarray}}
\newcommand{\beqan}{\begin{eqnarray*}}
\newcommand{\eeqan}{\end{eqnarray*}}
\newcommand{\ba}{\begin{array}}
\newcommand{\ea}{\end{array}}
\newcommand{\mcM}{{\mycal M}}
\newcommand{\mcD}{{\mycal D}}
\newcommand{\scri}{{\mycal I}}%
\newcommand{\warn}[1]
{\protect{\stepcounter{mnotecount}}$^{\mbox{\footnotesize
$
\bullet$\themnotecount}}$ \marginpar{
\raggedright\tiny\em
$\!\!\!\!\!\!\,\bullet$\themnotecount: {\bf Warning:} #1} }
\newcommand{\eq}[1]{(\ref{#1})}
\newcommand{\ptc}[1]{\mnote{{\bf ptc:}#1}}
\newcommand{\mcL}{{\mycal L}}
\newcommand{\beqar}{\begin{deqarr}}
\newcommand{\eeqar}{\end{deqarr}}
\newcommand{\beaa}{\begin{eqnarray*}}
\newcommand{\eeaa}{\end{eqnarray*}}
\DeclareFontFamily{OT1}{rsfs}{}
\DeclareFontShape{OT1}{rsfs}{m}{n}{ <-7> rsfs5 <7-10> rsfs7 <10-> rsfs10}{}
\DeclareMathAlphabet{\mycal}{OT1}{rsfs}{m}{n}
\global\let\AddToReset=\@addtoreset}
\global\let\AddToReset=\@addtoreset}
\global\let\AddToReset=\@addtoreset}
\begin{document}

\title{Algorithmic characterization results for the Kerr-NUT-(A)dS space-time. II. \\
KIDs for the Kerr-(A)(de Sitter) family%
\thanks{Preprint UWThPh-2016-22. }
}
\author{Tim-Torben Paetz%
\thanks{E-mail:  Tim-Torben.Paetz@univie.ac.at}  \vspace{0.5em}\\  \textit{Gravitational Physics, University of Vienna}  \\ \textit{Boltzmanngasse 5, 1090 Vienna, Austria }}

\maketitle

\vspace{-0.2em}

\begin{abstract}
We characterize Cauchy data sets leading to  vacuum space-times with vanishing Mars-Simon tensor.
 This approach provides
an algorithmic procedure to check whether a given initial  data set  $(\Sigma,h_{ij},K_{ij})$
 evolves into  a space-time which is locally isometric
to a member of the Kerr-(A)(dS) family.
\end{abstract}


\tableofcontents

\section{Introduction}

In  part I \cite{kerr1} of this work we have provided an algorithm to check whether a given vacuum space-time with  cosmological
constant $\Lambda\in\mathbb{R}$ belongs to the Kerr-NUT-(A)dS family, or, more generally,
 admits a (possibly complex) Killing vector field such that the associated Mars-Simon tensor (MST)  vanishes.
An issue, which complements these kind of problems, is to derive analog  results from the point of view of an initial value  problem.
This will be the main object of this article.

It is generally agreed that a main source of understanding
of
dynamical black hole space-times will come from numerical simulations. These make often use of
a $3+1$-decomposition of space-time, whence it becomes relevant to gain a better understanding of the $3+1$ features of the Kerr family (cf.\ \cite{kroon0} and the references given therein).
A $3+1$-decomposition hides the symmetries of a space-time  unless  it is  adapted to them.
Invariant characterization results  are therefore of particular relevance for such kind of problems.

One would like to know whether a given Cauchy data set generates
a development which is isometric to a portion of
a Kerr or a Kerr-NUT-(A)dS-space-time.
Such a result has been obtained in \cite{kroon} for the Kerr space-time based on a $3+1$-splitting of the MST
and its space-time characterization as given in \cite{mars,mars2}.
However, in view of an algorithmic characterization, it suffers from a similar drawback as  corresponding space-time characterization results obtained in \cite{mars,mars2,mars_senovilla},
namely the need to solve PDEs before the results can be applied:
Given vacuum Cauchy  data $(\Sigma,h_{ij},K_{ij})$, it is a  non-trivial issue
to check whether there exists a scalar function $\sigma$ and a vector field $Y^i$ complementing them to Killing initial data which generate a space-time with a Killing vector field.

As for the space-time approach we shall employ the restrictions, coming from the requirement that the emerging space-time admits a Killing vector field
w.r.t.\  which the MST vanishes, to show that there is (up to rescaling) at most one candidate tuple $(\sigma,Y^i)$.
That yields an algorithm to check whether a given set of  Cauchy data $(\Sigma,h_{ij},K_{ij})$, solution to the vacuum constraint equations,
emerges into a space-time which admits a KVF w.r.t.\ which the MST vanishes.
Moreover, it can be extended to
provide an algorithm to check  whether the triple  $(\Sigma,h_{ij},K_{ij})$ constitutes  Kerr-NUT-(A)dS data, by which we mean that the Cauchy data evolve into a vacuum space-time which is  locally isometric to a member of the Kerr-NUT-(A)dS family.
The procedure will be algorithmic in the sense that, given $(\Sigma,h_{ij},K_{ij})$, only differentiation and computation of roots is needed without any need to solve
differential equations.
So far such an algorithmic test has  been given for Schwarzschild data \cite{kroon0},
for Kerr-data \cite{lobo},  and  for Petrov type D-data \cite{lobo2}.

The paper is organized as follows: In Section~\ref{section_preliminaries} we will review definition and some properties of the MST, a space-time characterization
result for the Kerr-(A)dS family based on this tensor, as well as the notion of Killing initial data sets (KIDs).
The definition of the MST  comes along with a scalar function $Q$,  which can be defined in several different ways.
In Section~\ref{section_2}  we analyze  the vanishing of the MST on Cauchy surfaces for different choices of $Q$
and investigate the equivalence of these choices.
  In Section~\ref{section_3} we construct candidates for solving the KID equation and  characterize conditions under which these
candidates are in fact KIDs. This way we are led to an algorithmic characterization of Cauchy data which generate  $\Lambda$-vacuum space-times
which admit a Killing vector field whose associated MST vanishes, cf.\ Theorem~\ref{thm_first_main_result}. A somewhat shortened version of  Theorem~\ref{thm_first_main_result} reads

\begin{theorem}
\label{thm_first_main_result_intro}
Consider Cauchy data $(\Sigma,h_{ij}, K_{ij})$ which solve the vacuum constraint equations and satisfy
\begin{equation*}
 \mathrm{tr}(\mathcal{E}\cdot\mathcal{E}) \,\ne \,0
\;, \enspace
\mathrm{tr}(\mathcal{E}\cdot\mathcal{E})   -\frac{2}{3}\,\Lambda^2 \,\ne \, 0
\;, \enspace
\mathrm{tr}(\mathcal{E}\cdot\mathcal{E})   - \frac{1}{6}\,\Lambda^2 \,\ne \, 0
\;, \enspace
\mathrm{tr}(\mathcal{E}\cdot\mathcal{E})   - \frac{8}{3}\,\Lambda^2 \,\ne \, 0
\;,
\end{equation*}
where
\begin{equation*}
\mathcal{E}_{ij} \,:=\, \mathring R_{ij}  + K K_{ij} - K_{ik}K_j{}^k -\frac{2}{3}\Lambda h_{ij}
 -i\mathring\epsilon_{i}{}^{kl}\mcD_{k}K_{lj}
\;,
\end{equation*}
and where $ \mathring R_{ij}$ and $\mcD$ denote the Ricci tensor and the Levi-Civita covariant derivative of $h_{ij}$.

Then the emerging $\Lambda$-vacuum space-time admits a non-trivial (possibly complex) KVF such that the associated MST vanishes
(at least in some neighborhood of $\Sigma$)
if and only if  two certain scalars  (\eq{alg_cond2} and \eq{solvability_condB_norm}, cf.\ Section~\ref{section_3} for the details) which depend on $\mathcal{E}$, $\mcD\mathcal{E}$, $h$  and $K$ vanish.
\end{theorem}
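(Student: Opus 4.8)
The idea is to turn the rigidity of vanishing-MST geometries into an \emph{algebraic} rigidity on the Cauchy slice: the tuple $(\sigma,Y^i)$, together with the auxiliary scalar $Q$, will — if such a structure exists at all — be determined by $(h_{ij},K_{ij})$ up to an overall constant, so that the whole assertion collapses to checking finitely many scalar identities. I would start on the space-time side. By the characterization recalled in Section~\ref{section_preliminaries}, a $\Lambda$-vacuum metric carries a (possibly complex) KVF $\xi$ with vanishing MST $\mathcal{S}$ exactly when the self-dual Weyl tensor has the algebraically special form dictated by the self-dual Killing two-form $F^+=d\xi$ and the scalar $Q$ (with $Q$ itself tied to a Weyl scalar). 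The key consequence is that $F^+$, then $Q$, then $\xi$ — hence $(\sigma,Y^i)$ — are reconstructible from the Weyl tensor and its first covariant derivative by algebraic operations alone, one square root to fix $F^+$ and one cube root to fix $Q$, leaving free only an overall multiplicative constant. I would also record here the propagation fact from Section~\ref{section_2}: in a $\Lambda$-vacuum space-time already equipped with a KVF, $\mathcal{S}$ satisfies a homogeneous hyperbolic system, so $\mathcal{S}\equiv 0$ near $\Sigma$ is equivalent to the vanishing on $\Sigma$ of a first-order ``MST initial data set'', an explicit expression in $\mathcal{E}_{ij}$, $\mcD_k\mathcal{E}_{ij}$, $h$, $K$, $\sigma$ and $Y^i$.

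Next I would pull these identities back to $\Sigma$. The self-dual Weyl tensor restricts to the complex symmetric trace-free tensor $\mathcal{E}_{ij}$, and $\mathrm{tr}(\mathcal{E}\cdot\mathcal{E})$ is, up to a universal constant, the fundamental Weyl scalar; the hypothesis $\mathrm{tr}(\mathcal{E}\cdot\mathcal{E})\neq 0$ makes the Petrov-type-D eigenstructure of $\mathcal{E}$, and hence the candidate for $F^+$ restricted to $\Sigma$, well defined, while the three remaining inequalities, carrying the shifts $-\tfrac{2}{3}\Lambda^2$, $-\tfrac{1}{6}\Lambda^2$, $-\tfrac{8}{3}\Lambda^2$, are precisely what prevents the induced formulas for $Q$, for $\sigma$, and for the lapse and shift of $\xi$ (i.e.\ for $Y^i$) from degenerating, and what makes the several candidate prescriptions for $Q$ on a Cauchy slice studied in Section~\ref{section_2} agree. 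The outcome is an explicit tuple $(\sigma,Y^i)$, unique up to an overall complex rescaling, produced from $(h_{ij},K_{ij})$ by differentiation and extraction of roots only.

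Finally I would substitute this candidate into the two conditions that are left: (i) the KID equations, whose solvability produces the KVF in the development, and (ii) the MST initial data conditions from the first step. Using the vacuum constraints and the Gauss--Codazzi relations to eliminate transversal derivatives, every component of (i) and (ii) becomes a polynomial expression in $\mathcal{E}$, $\mcD\mathcal{E}$, $h$ and $K$; by construction the bulk of these hold identically — they are the relations used to build the candidate — and one is left with two genuinely independent scalars, namely \eq{alg_cond2}, which says that the candidate really does solve the KID system (so that a KVF exists), and \eq{solvability_condB_norm}, a normalization relating $\sigma$ to an appropriate norm, which upgrades the Weyl alignment to the full vanishing of $\mathcal{S}$. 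The converse is then immediate by running the construction forward: if both scalars vanish, the candidate is a genuine KID, the development carries a KVF, the induced MST initial data vanish, and the propagation result gives $\mathcal{S}\equiv 0$ in a neighborhood of $\Sigma$.

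\noindent\textbf{The main obstacle.} The crux is the reduction in the last step: one must show that inserting the algebraically determined $(\sigma,Y^i)$ collapses the full KID system together with the MST initial data conditions to \emph{exactly} the two advertised scalars, no more and no fewer, and that the four non-degeneracy hypotheses are sharp for this collapse. This demands a disciplined decomposition into real and imaginary, tangential and normal parts, with repeated use of the constraints and the contracted Bianchi identity. A companion difficulty, settled in Section~\ref{section_2}, is that $Q$ admits several inequivalent-looking definitions on $\Sigma$, and these must be shown to coincide under the stated inequalities before the candidate is even well-posed.
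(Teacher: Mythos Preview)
Your overall strategy --- reconstruct a candidate for the Killing data algebraically from $(h,K)$, substitute, and reduce to scalar obstructions --- is the paper's strategy. But you have misidentified what the two residual scalars encode, and this reflects a genuine gap in the plan.

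In the paper the route runs through an auxiliary covector $\mathcal{P}_i$ (the restriction of $\mathcal{F}_{ti}$ to $\Sigma$) and the scalar $q=Q|_\Sigma$. The scalar \eq{solvability_condB_norm} is purely algebraic in $\mathcal{E}$: it is the solvability condition for the equation $\mathcal{E}_{ij}=q_{\mathcal{C}}\,(\mathcal{P}_i\mathcal{P}_j)\breve{}$, i.e.\ the condition that such a $\mathcal{P}$ exists at all (equivalently, that $\mathcal{E}_{ij}$ has the correct rank-one trace-free eigenstructure). It involves only traces of powers of $\mathcal{E}$ --- no $\sigma$, no derivatives, no ``normalization''. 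The scalar \eq{alg_cond2} is the squared norm of the residual of a first-order PDE on $\mathcal{P}$, namely \eq{Equation2_2}/\eq{second_main_eqn}; it is what survives after the trace and the $\mathcal{P}$-contraction of that equation are shown to follow from the vacuum constraints together with the algebraic relation.

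The point you are missing --- and it is the main computational content of Section~\ref{section_3} --- is that the KID equations are \emph{not} among the residual conditions. Once $(\mathcal{P},q)$ satisfy \eq{gen_vanishing_MST2}--\eq{Equation1B_2} and one \emph{defines} $(\sigma,Y)$ by \eq{cand_sigma} and \eq{candidate_field_Y}, the paper proves by direct (lengthy) computation that both KID equations \eq{KID1}--\eq{KID2} hold automatically, and so does the compatibility \eq{dfn_calP} between $\mathcal{P}$ and $(\sigma,Y)$ which guarantees that the ``MST'' built from $\mathcal{P}$ is the genuine MST of the KVF. So your plan to substitute the candidate into the KID system and extract \eq{alg_cond2} as the leftover would yield nothing: the KID system contributes zero. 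Both residual scalars live at the level of $(\mathcal{P},q)$, upstream of the KID equations; one is the algebraic existence of $\mathcal{P}$, the other is the single remaining differential constraint on it.
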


Finally, this result is combined  in Section~\ref{section_4}  with well-known space-time characterizations of Kerr-(NUT-)(A)dS to end up with
an algorithmic characterization  of these space-times in terms of their Cauchy data, Theorem~\ref{thm_second_main_result}.

\section{Preliminaries}
\label{section_preliminaries}

In this section we fix the notation and recall some results which will be relevant for  the subsequent analysis.
We will be rather brief here, for more details we refer the reader to part I \cite{kerr1}.

\subsection{Mars-Simon tensor and the function $Q$}
\label{sect_MST}

Let $(\mcM, g)$ be a smooth $3+1$-dimensional space-time which admits a Killing vector field (KVF) $X$.
Let us denote by $C_{\mu\nu\sigma\rho}$ its \emph{conformal Weyl tensor}, while $F_{\mu\nu}:= \nabla_{\mu}X_{\nu} =  \nabla_{[\mu}X_{\nu]}$
denotes the \emph{Killing form}.
We  define the  \textit{Mars-Simon tensor (MST)}
(compare \cite{ IK} where somewhat different conventions are used)
as
\begin{equation}
\mathcal{S}_{\mu\nu\sigma\rho}\,:=\, \mathcal{C}_{\mu\nu\sigma\rho}  + Q\mathcal{Q}_{\mu\nu\sigma\rho}
\;,
\label{dfn_mars-simon}
\end{equation}
where
\begin{eqnarray}
\mathcal{Q}_{\mu\nu\sigma\rho}  &:=& -  \mathcal{F}_{\mu\nu}\mathcal{F}_{\sigma\rho} + \frac{1}{3}\mathcal{F}^2\mathcal{I}_{\mu\nu\sigma\rho}
\;,
\\
\mathcal{I}_{\mu\nu\sigma\rho} &:=& \frac{1}{4} (g_{\mu\sigma}g_{\nu\rho} -g_{\mu\rho}g_{\nu\sigma} + i\epsilon_{\mu\nu\sigma\rho} )
\;,
\\
\mathcal{F}^2 &:=& \mathcal{F}_{\mu\nu} \mathcal{F}^{\mu\nu} \;,
\end{eqnarray}
and where
\begin{eqnarray}
 \mathcal{C}_{\mu\nu\sigma\rho} &:=& C_{\mu\nu\sigma\rho} +i C^{\star}_{\mu\nu\sigma\rho}
\;,
\\
\mathcal{F}_{\mu\nu} &:=& F_{\mu\nu} +i F^{\star}_{\mu\nu}
\;,
\end{eqnarray}
denote the \emph{self-dual Weyl tensor} and the \emph{self dual Killing form}, respectively.\
%
At this stage $Q:\mcM\rightarrow \mathbb{C}$ is an arbitrary function on $\mcM$.
The MST is a \emph{Weyl field}, i.e.\ it has all the algebraic symmetries of the Weyl tensor.

Let us address the issue how the function $Q$ is to be chosen.
Denote by
\begin{equation}
\chi_{\mu} := 2X^{\alpha}\mathcal{F}_{\alpha\mu}
\end{equation}
 the \emph{Ernst 1-form}.
In a $\Lambda$-vacuum space-time  it is well-known to be closed. Thus, at least locally, there exists a scalar field $\chi$, the \emph{Ernst potential}, such that
$\chi_{\mu}=\nabla_{\mu}\chi$. Note that $\chi$ is only defined up to some additive complex ``$\chi$-constant''.
We further set
\begin{equation}
\mathcal{C}^2 \,:=\,   \mathcal{C}_{\mu\nu\sigma\rho} \mathcal{C}^{\mu\nu\sigma\rho}
\;.
\end{equation}
Supposing that the corresponding denominators are non-zero we  have the following natural%
\footnote{
``Natural'' in the sense that each of  these expressions is obtained by requiring a certain component of the MST to vanish, whence the function $Q$ necessarily needs to coincide with each of these definitions  whenever the MST vanishes, cf.\ Proposition~\ref{prop_Q} below.
}
 definitions for the function $Q$ \cite{mpss, kerr1}
\begin{eqnarray}
Q_0  &:=&  \frac{3}{2}\mathcal{F}^{-4} \mathcal{F}^{\mu\nu}\mathcal{F}^{\sigma\rho}\mathcal{C}_{\mu\nu\sigma\rho}
\,,
\label{def0_Q}
\\
Q_{\mathrm{ev}} &:=&  \frac{3\mathcal{F}^2  + 4\Lambda \chi \pm 3\sqrt{\mathcal{F}^2(\mathcal{F}^2 + 4\Lambda \chi )}}{\chi\mathcal{F}^{2}}
\;,
\label{defev_Q}
\\
Q_{\mathcal{F}} &:=&
\pm\sqrt{ \frac{3}{2}}\,\mathcal{F}^{-2} \sqrt{\mathcal{C}^2}
\;.
\label{yet_another_Q}
\\
 Q_{\mathcal{C}} &:=& \varkappa (\mathcal{C}^2)^{5/6}\Big( \pm\sqrt{\mathcal{C}^2}
-\sqrt{\frac{32}{3}}\,\Lambda\Big)^{-2}
\;, \quad \varkappa\in\mathbb{C}\setminus\{0\}
\;.
\label{yet_another_Q_2}
\end{eqnarray}
In our setting the square roots will be taken only of nowhere vanishing functions, so that one can
prescribe the choice of square root at one point
and extend
it by continuity to the whole manifold, and since no  branch
point  is ever met  the root will be smooth everywhere.
The definitions \eq{defev_Q} and \eq{yet_another_Q_2} both  involve a complex constant which is arbitrary at this stage.
The same is true for the choice of $\pm$.

In \cite{kerr1} we have established the following
%
\begin{proposition}
\label{prop_Q}
Assume that the MST associated to some KVF $X$ vanishes for some function $Q$,
and that  the inequalities
\begin{equation}
 \mathcal{C}^2 \,\ne \,0
\;, \quad
\mathcal{C}^2   - \frac{32}{3}\Lambda^2 \,\ne \, 0
\;,
\label{ineqs_C2_0}
\end{equation}
hold.
Then  \eq{def0_Q}, \eq{yet_another_Q} and \eq{yet_another_Q_2} are regular everywhere, and
there exists
a constant $\varkappa \in\mathbb{C}\setminus\{0\}$ and  a choice of $\pm$
such that
\begin{equation}
Q\,=\, Q_0 \,=\, Q_{\mathcal{F}} \,=\,  Q_{\mathcal{C}}
\,.
\end{equation}
Assume that, in addition,
\begin{equation}
\mathcal{C}^2   - \frac{8}{3}\,\Lambda^2 \,\ne \, 0\;,
\quad
\mathcal{C}^2 - \frac{128}{3}\,\Lambda^2 \,\ne \, 0
\label{ineqs_C2_1}
\end{equation}
 hold.
Then  \eq{defev_Q} is regular, as well, and
there exists an Ernst potential $\chi$, i.e.\ a choice of the $\chi$-constant,  such that
\begin{equation}
Q\,=\, Q_0 \,=\, Q_{\mathrm{ev}}\,=\, Q_{\mathcal{F}} \,=\,  Q_{\mathcal{C}}
\,.
\end{equation}
\end{proposition}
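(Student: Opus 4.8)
The plan is to establish the chain of equalities componentwise, starting from the hypothesis $\mathcal{S}_{\mu\nu\sigma\rho} = 0$ for the given $Q$ and KVF $X$. First I would recall the footnote's guiding principle: each of $Q_0$, $Q_{\mathcal F}$, $Q_{\mathcal C}$, $Q_{\mathrm{ev}}$ is manufactured by contracting the defining relation $\mathcal{C}_{\mu\nu\sigma\rho} = -Q\,\mathcal{Q}_{\mu\nu\sigma\rho}$ in a particular way, so the real content is that the contractions are not degenerate under the stated inequalities. For $Q = Q_0$: contract $\mathcal{C}_{\mu\nu\sigma\rho} = -Q\mathcal{Q}_{\mu\nu\sigma\rho}$ with $\mathcal{F}^{\mu\nu}\mathcal{F}^{\sigma\rho}$. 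Using $\mathcal{Q}_{\mu\nu\sigma\rho} = -\mathcal{F}_{\mu\nu}\mathcal{F}_{\sigma\rho} + \tfrac13\mathcal{F}^2\mathcal{I}_{\mu\nu\sigma\rho}$ and the self-duality identities $\mathcal{F}^{\mu\nu}\mathcal{I}_{\mu\nu\sigma\rho} = \mathcal{F}_{\sigma\rho}$, $\mathcal{F}^{\mu\nu}\mathcal{F}_{\mu\nu} = \mathcal{F}^2$, one computes $\mathcal{F}^{\mu\nu}\mathcal{F}^{\sigma\rho}\mathcal{Q}_{\mu\nu\sigma\rho} = -(\mathcal{F}^2)^2 + \tfrac13(\mathcal{F}^2)^2 = -\tfrac23(\mathcal{F}^2)^2$, so $\mathcal{F}^{\mu\nu}\mathcal{F}^{\sigma\rho}\mathcal{C}_{\mu\nu\sigma\rho} = \tfrac23 Q(\mathcal{F}^2)^2$, which rearranges to $Q = \tfrac32\mathcal{F}^{-4}\mathcal{F}^{\mu\nu}\mathcal{F}^{\sigma\rho}\mathcal{C}_{\mu\nu\sigma\rho} = Q_0$, provided $\mathcal{F}^2 \ne 0$. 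So the first subtask is to show $\mathcal{F}^2 \ne 0$ follows from $\mathcal{C}^2 \ne 0$ — this should come out of the algebraic structure of a Weyl field of the form $-Q\mathcal{Q}$, since one readily gets $\mathcal{C}^2 = \mathcal{C}_{\mu\nu\sigma\rho}\mathcal{C}^{\mu\nu\sigma\rho} = Q^2 \mathcal{Q}_{\mu\nu\sigma\rho}\mathcal{Q}^{\mu\nu\sigma\rho}$, and $\mathcal{Q}^2 := \mathcal{Q}_{\mu\nu\sigma\rho}\mathcal{Q}^{\mu\nu\sigma\rho}$ evaluates to a nonzero multiple of $(\mathcal{F}^2)^2$ (the same contraction as above together with $\mathcal{I}^2$ and $\mathcal{F}^{\mu\nu}\mathcal{I}_{\mu\nu\sigma\rho}\mathcal{F}^{\sigma\rho}$-type terms), so $\mathcal{C}^2 \ne 0 \Rightarrow \mathcal{F}^2 \ne 0$ and simultaneously $Q \ne 0$.

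Next, squaring the relation $Q = Q_0$ and using $\mathcal{C}^2 = Q^2\mathcal{Q}^2 = c\,Q^2(\mathcal{F}^2)^2$ for the explicit constant $c$ computed above gives $Q = \pm\sqrt{3/2}\,\mathcal{F}^{-2}\sqrt{\mathcal{C}^2}$ up to sign, i.e.\ $Q = Q_{\mathcal F}$ for the appropriate choice of square-root branch; the remark in the excerpt about branches being globally well-defined on nowhere-vanishing functions handles the smoothness/global consistency. For $Q = Q_{\mathcal C}$, I would invoke the differential identity relating $\mathcal{C}^2$, $\mathcal{F}^2$ and $Q$ that holds in $\Lambda$-vacuum when the MST vanishes — in part~I this is the relation of the form $\nabla_\mu(\mathcal{C}^2)^{-1/6} \propto \Lambda$-term, equivalently an algebraic relation $Q \propto (\mathcal{C}^2)^{5/6}(\pm\sqrt{\mathcal{C}^2} - \sqrt{32/3}\,\Lambda)^{-2}$; the denominator is nonzero precisely because of $\mathcal{C}^2 - \tfrac{32}{3}\Lambda^2 \ne 0$, which is the second inequality in \eq{ineqs_C2_0}. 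The constant $\varkappa$ absorbs the proportionality factor, and one fixes the sign to match the one already chosen for $Q_{\mathcal F}$.

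For the last part, with the additional inequalities \eq{ineqs_C2_1}, I would bring in the Ernst potential. The key is the $\Lambda$-vacuum identity (established in part~I) relating $\mathcal{F}^2$, $\chi$ and $\mathcal{C}^2$ when the MST vanishes — schematically $\mathcal{C}^2$ is a rational function of $\mathcal{F}^2$ and $\Lambda\chi$, so that solving the defining quadratic for $Q_{\mathrm{ev}}$ reproduces $Q_0$. Concretely: the condition $\mathcal S = 0$ forces $\mathcal{F}^2(\mathcal{F}^2 + 4\Lambda\chi)$ to be a specific multiple of $\mathcal{C}^2$-related quantity, and the inequalities $\mathcal{C}^2 - \tfrac{8}{3}\Lambda^2 \ne 0$, $\mathcal{C}^2 - \tfrac{128}{3}\Lambda^2 \ne 0$ are exactly what is needed to keep the relevant discriminants and denominators (and the combination under the square root in \eq{defev_Q}) from vanishing. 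The freedom in the $\chi$-constant is used to normalize $\chi$ so that the branch of $\sqrt{\mathcal{F}^2(\mathcal{F}^2+4\Lambda\chi)}$ and the $\pm$ sign line up with the previously fixed choices, yielding $Q = Q_{\mathrm{ev}}$ as well.

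The main obstacle I anticipate is not any single contraction but the bookkeeping of branch/sign and constant choices so that one single $(\varkappa, \pm, \chi\text{-constant})$ makes \emph{all four} expressions literally equal simultaneously — the individual identities $Q = Q_i$ each hold up to a sign or constant, and the claim is that these ambiguities can be resolved consistently. This is exactly where the density/continuity argument and the "no branch point is ever met" remark do the work: fix the choices at one point using that $\mathcal{C}^2, \mathcal{F}^2 \ne 0$ there, then propagate by continuity. A secondary technical point is verifying that the listed inequalities are genuinely sufficient — i.e.\ that every denominator appearing in \eq{def0_Q}--\eq{yet_another_Q_2} is controlled by $\mathcal{C}^2 \ne 0$ together with $\mathcal{C}^2 - \tfrac{32}{3}\Lambda^2 \ne 0$ (and, for $Q_{\mathrm{ev}}$, the two extra conditions) — which amounts to re-deriving, in $\Lambda$-vacuum with $\mathcal S = 0$, the algebraic dictionary between $\mathcal{F}^2$, $\chi$, $\mathcal{C}^2$ that was set up in part~I \cite{kerr1}.
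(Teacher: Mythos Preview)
Your proposal is correct and tracks the approach of the paper (which in fact cites the companion paper~\cite{kerr1} for this proposition rather than proving it here, but the argument is reproduced in the Cauchy-surface setting in Section~\ref{section_2}). The contraction arguments for $Q=Q_0$ and $Q=Q_{\mathcal F}$, the integration of the differential identity \eq{PDE_Q} to obtain $Q=Q_{\mathcal C}$ (cf.\ Lemma~\ref{lemma_dq_qC} and \eq{F2_Q_relation}--\eq{QF_QC_relation}), and the Ernst-potential computation leading to $Q=Q_{\mathrm{ev}}$ (cf.\ \eq{eqn_chi_A_P}) are exactly the steps carried out in the paper, and you have correctly identified the branch/sign bookkeeping as the only real subtlety.
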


\begin{remark}
\label{remark_weaker}
{\rm
The second condition in \eq{ineqs_C2_0} and the conditions in \eq{ineqs_C2_1} may be replaced by
\begin{equation}
\pm \sqrt{\mathcal{C}^2}   - \sqrt{\frac{32}{3}}\Lambda \,\ne \, 0
\;, \quad
\pm\sqrt{\mathcal{C}^2}   + \sqrt{\frac{8}{3}}\,\Lambda \,\ne \, 0\;,
\quad
\pm\sqrt{\mathcal{C}^2}   + \sqrt{\frac{128}{3}}\,\Lambda \,\ne \, 0
\;,
\end{equation}
respectively, and merely need to hold for one sign, depending on the sign which one needs to take in \eq{defev_Q}-\eq{yet_another_Q_2}
for  the MST to vanish.
}
\end{remark}

Proposition~\ref{prop_Q} shows that as long as one is interested in space-times with vanishing MST the definitions \eq{def0_Q}-\eq{yet_another_Q_2}
of the functions $Q$ are equally good in the sense that they are all necessary for a space-time to admit a KVF for which the associated MST vanishes.

In this article we want to analyze the vanishing of the MST in terms of an initial value problem.
To derive sufficient conditions which ensure the existence of a KVF w.r.t.\ which the MST vanishes one needs evolution equations for the MST.
More precisely, one would like to have  homogeneous equations  at hand which ensure that, given an appropriate set of zero initial data, the zero-solution is the only one.
While it does not seem to be possible to derive such equations for  $ Q_0$, $Q_{\mathcal{F}}$, and $Q_{\mathcal{C}}$, it can be done \cite{IK, mpss} for  $Q=Q_{\mathrm{ev}}$:
\begin{proposition}
\label{prop_evolution}
Consider  a smooth $3+1$-dimensional $\Lambda$-vacuum space-time which admits a KVF and which satisfies
(cf.\ Remark~\ref{remark_weaker})
%
\begin{equation}
 \mathcal{C}^2 \,\ne \,0
\;, \quad
\mathcal{C}^2   - \frac{32}{3}\Lambda^2 \,\ne \, 0
\;, \quad
\mathcal{C}^2  - \frac{8}{3}\,\Lambda^2 \,\ne \, 0
\;,\quad
\mathcal{C}^2 -\frac{128}{3}\Lambda^2 \,\ne \, 0
\;.
\end{equation}
Then the MST with $Q=Q_{\mathrm{ev}}$  satisfies   a regular  linear homogeneous  symmetric hyperbolic system of evolution equations,
\begin{eqnarray}
\nabla_{\beta} \mathcal{S}^{(\mathrm{ev})}_{\mu\nu\alpha}{}^{\beta}
&=&
 - Q_{\mathrm{ev}}\Big( \mathcal{F}_{\alpha\beta}\delta_{\mu}{}^{\gamma}\delta_{\nu}{}^{\delta}   -  \frac{2}{3}
  \mathcal{F}^{\gamma\delta} \mathcal{I}_{\alpha\beta\mu\nu}
\Big)X^{\lambda}\mathcal{S}^{(\mathrm{ev})}_{\gamma\delta\lambda}{}^{\beta}
\nonumber
\\
&&
\hspace{3em}
-4 \Lambda  \frac{  5  Q_{\mathrm{ev}}\mathcal{F}^2  +4\Lambda }{Q_{\mathrm{ev}}\mathcal{F}^2 + 8\Lambda}
 \mathcal{Q}_{\mu\nu\alpha\beta}
   \mathcal{F}^{-4}\mathcal{F}^{\gamma\delta} X^{\lambda}   \mathcal{S}^{(\mathrm{ev})}_{\gamma\delta\lambda}{}^{\beta}
\;.
\label{phys_ev}
\end{eqnarray}
\end{proposition}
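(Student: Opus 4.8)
The plan is to obtain the identity \eqref{phys_ev} by a direct computation starting from the contracted Bianchi identity, and then to read off symmetric hyperbolicity from the general structure of Weyl fields. In a $\Lambda$-vacuum space-time the trace-free Ricci tensor vanishes, hence the Cotton tensor vanishes and the contracted second Bianchi identity gives $\nabla^\beta\mathcal{C}_{\mu\nu\alpha\beta}=0$. With \eqref{dfn_mars-simon} this reduces matters to
\[
\nabla^\beta\mathcal{S}^{(\mathrm{ev})}_{\mu\nu\alpha\beta}
=(\nabla^\beta Q_{\mathrm{ev}})\,\mathcal{Q}_{\mu\nu\alpha\beta}
+Q_{\mathrm{ev}}\,\nabla^\beta\mathcal{Q}_{\mu\nu\alpha\beta}\,,
\]
so I would need to compute $\nabla^\beta\mathcal{Q}_{\mu\nu\alpha\beta}$ and $\nabla_\mu Q_{\mathrm{ev}}$ and then show the result can be written entirely in terms of $X^\lambda\mathcal{S}^{(\mathrm{ev})}_{\gamma\delta\lambda}{}^{\beta}$.

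For the first piece I would use that $X$ is Killing: $\nabla_\mu F_{\nu\sigma}=R_{\sigma\nu\mu\lambda}X^\lambda$, and after decomposing the Riemann tensor into Weyl plus Ricci and inserting $R_{\mu\nu}=\Lambda g_{\mu\nu}$ this reads, in self-dual form, $\nabla_\mu\mathcal{F}_{\nu\sigma}=X^\lambda\mathcal{C}_{\lambda\sigma\mu\nu}+\Lambda\,(\text{algebraic in }g,X)$. Substituting $\mathcal{C}=\mathcal{S}^{(\mathrm{ev})}-Q_{\mathrm{ev}}\mathcal{Q}$ trades the Weyl tensor for the MST plus a $Q_{\mathrm{ev}}\mathcal{Q}$-term which, contracted with $X^\lambda$ and rewritten through the Ernst one-form $\chi_\mu=2X^\alpha\mathcal{F}_{\alpha\mu}$, is algebraic in $\chi_\mu,\mathcal{F},\mathcal{F}^2,\Lambda,g,X$. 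From here $\nabla_\mu\mathcal{F}^2=2\mathcal{F}^{\nu\sigma}\nabla_\mu\mathcal{F}_{\nu\sigma}$ and $\nabla_\mu\chi=\chi_\mu$ are known, all modulo a single term linear in $X^\lambda\mathcal{S}^{(\mathrm{ev})}$. Finally, for $\nabla_\mu Q_{\mathrm{ev}}$ I would square \eqref{defev_Q}, which shows that $Q_{\mathrm{ev}}$ solves a quadratic with coefficients polynomial in $\mathcal{F}^2,\chi,\Lambda$, namely
\[
\chi\mathcal{F}^4\,Q_{\mathrm{ev}}^2
-2\mathcal{F}^2(3\mathcal{F}^2+4\Lambda\chi)\,Q_{\mathrm{ev}}
-12\Lambda\mathcal{F}^2+16\Lambda^2\chi=0\,,
\]
valid regardless of the sign and of the additive $\chi$-constant; differentiating this and substituting $\nabla_\mu\mathcal{F}^2,\nabla_\mu\chi$ delivers $\nabla_\mu Q_{\mathrm{ev}}$ in the same algebraic form.

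Substituting all of this into the reduction formula and collecting terms, the result splits into a part proportional to $X^\lambda\mathcal{S}^{(\mathrm{ev})}_{\gamma\delta\lambda}{}^{\beta}$ and an inhomogeneous remainder that is algebraic in $\mathcal{F},\mathcal{F}^2,\chi,\Lambda,g$. The hard part will be to show that this remainder vanishes identically: the cancellation is forced precisely by the quadratic relation for $Q_{\mathrm{ev}}$ --- which is exactly why the choice \eqref{defev_Q}, and not $Q_0$, $Q_{\mathcal{F}}$ or $Q_{\mathcal{C}}$, closes the system --- and in particular $\chi$ drops out of the coefficients that survive, leaving \eqref{phys_ev}. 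This is a long but mechanical tensor computation (carried out in \cite{IK,mpss}); its only real subtlety is the bookkeeping of the $\Lambda$-dependent algebraic tensors and recognizing which contractions of $X$ with $\mathcal{F}$ rebuild $\chi_\mu$.

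It remains to check regularity and to conclude hyperbolicity. The coefficients in \eqref{phys_ev} contain $\mathcal{F}^{-4}$ and the factor $(5Q_{\mathrm{ev}}\mathcal{F}^2+4\Lambda)(Q_{\mathrm{ev}}\mathcal{F}^2+8\Lambda)^{-1}$; one verifies directly from \eqref{defev_Q} that $Q_{\mathrm{ev}}\mathcal{F}^2+8\Lambda$ factors as a product of the nowhere-vanishing square roots occurring there, hence is nonzero, while the stated $\mathcal{C}^2$-inequalities --- equivalently, via Remark~\ref{remark_weaker}, the conditions on $\sqrt{\mathcal{C}^2}$ --- keep $Q_{\mathrm{ev}}$ (together with the functions $Q_{\mathcal{F}},Q_{\mathcal{C}}$ coinciding with it when the MST vanishes, cf.\ Proposition~\ref{prop_Q}) smooth and non-degenerate; thus the coefficients are smooth. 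Therefore \eqref{phys_ev} is a regular linear homogeneous equation for the self-dual Weyl field $\mathcal{S}^{(\mathrm{ev})}$. Finally I would invoke the standard fact that for any self-dual Weyl field $W$ the relation $\nabla^\beta W_{\mu\nu\alpha\beta}=(\text{tensor algebraic in }W)$ is a symmetric hyperbolic system --- most transparent in the spinorial formulation $\nabla^{E}{}_{A'}\Psi_{ABCE}=(\text{linear in }\Psi)$ for the totally symmetric spinor $\Psi_{ABCD}$ representing $W$, the other divergence being fixed algebraically by self-duality --- and apply it with $W=\mathcal{S}^{(\mathrm{ev})}$ to obtain the claimed regular linear homogeneous symmetric hyperbolic system.
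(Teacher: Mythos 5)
The paper offers no proof of Proposition~\ref{prop_evolution} beyond the citation to \cite{IK,mpss}, and your outline reconstructs exactly the derivation used there: the contracted Bianchi identity reduces $\nabla_\beta\mathcal{S}^{(\mathrm{ev})}_{\mu\nu\alpha}{}^\beta$ to derivatives of $Q_{\mathrm{ev}}$ and $\mathcal{Q}$, the Killing identity with $\mathcal{C}=\mathcal{S}^{(\mathrm{ev})}-Q_{\mathrm{ev}}\mathcal{Q}$ together with the quadratic satisfied by $Q_{\mathrm{ev}}$ closes the system homogeneously in $X^\lambda\mathcal{S}^{(\mathrm{ev})}$, and symmetric hyperbolicity follows from the standard (spinorial) structure of divergence equations for Weyl fields --- with the mechanical verification deferred to the same references the paper cites. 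One small caveat: your factorization $Q_{\mathrm{ev}}\mathcal{F}^2+8\Lambda=3\chi^{-1}\sqrt{\mathcal{F}^2+4\Lambda\chi}\,\big(\sqrt{\mathcal{F}^2+4\Lambda\chi}\pm\sqrt{\mathcal{F}^2}\big)$ does not by itself show this quantity is nonzero (the second factor can vanish, e.g.\ for $\Lambda=0$ with the wrong sign); its nonvanishing is precisely what the hypothesis $\mathcal{C}^2-\frac{128}{3}\Lambda^2\ne 0$ in \eq{phys_ev}'s statement is there to guarantee, as you in effect acknowledge afterwards.
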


Here and henceforth we denote the MST corresponding to the choice $Q=Q_{\mathrm{ev}}$ by $\mathcal{S}^{(\mathrm{ev})}_{\alpha\beta\mu\nu}$,
the one corresponding to  $Q=Q_{\mathcal{C}}$ by  $\mathcal{S}^{(\mathcal{C})}_{\alpha\beta\mu\nu}$, etc.
Because of Proposition~\ref{prop_Q} this distinction is less essential in the setting where the MST vanishes, whence we will occasionally omit the
superscript.

\subsection{Space-time characterization results}
\label{sec_results}

Our aim is to analyze the implications of the  space-time characterization results for the Kerr-NUT-A(dS) metrics (in particular for vanishing NUT-charge) obtained by Mars and Senovilla  \cite{mars_senovilla}
on a Cauchy  problem, and to derive an analog of \cite[Theorem~5.3]{kerr1}
for Cauchy data.
In paper I  \cite{kerr1} we have reviewed their  results.
Here, let us just recall a characterization result for the Kerr-A(dS) metric \cite{kerr1} which is a reformulation of a special case
of \cite[Theorem 1]{mars_senovilla},   and  which is the most important one for our purposes.

To this end  we define 4 real-valued functions
 $b_1$, $b_2$, $c$ and $k$ \cite{mars_senovilla}
(we assume that  $Q\mathcal{F}^2 - 4\Lambda $ is nowhere vanishing),
\begin{eqnarray}
 b_2-ib_1 &:=& - \frac{ 36 Q (\mathcal{F}^2)^{5/2} }{(Q\mathcal{F}^2-4\Lambda)^3}
\label{equation_b1b2}
\,,
\\
c &:=& - |X|^2-  \mathrm{Re}\Big(  \frac{6\mathcal{F}^2(Q\mathcal{F}^2+2\Lambda)}{(Q\mathcal{F}^2-4\Lambda)^2} \Big)
\,,
\label{equation_c}
\\
k  &:=&  \Big|\frac{36\mathcal{F}^2}{(Q\mathcal{F}^2-4\Lambda)^2}\Big| \nabla_{\mu}Z\nabla^{\mu}Z -  b_2Z +cZ^2 +\frac{\Lambda}{3} Z^4
\,,
\label{equation_k}
\end{eqnarray}
where
\begin{equation}
 Z \,:=\,  6\,\mathrm{Re} \Big( \frac{\sqrt{\mathcal{F}^2}}{Q\mathcal{F}^2-4\Lambda}\Big)
\,.
\label{equation_Z}
\end{equation}
Moreover, we set (because of  the assumptions \eq{second_condition1}-\eq{second_condition3} below the square roots will be  real)
\begin{eqnarray}
\text{for $\Lambda =0$:} &&  \zeta_1 \,:=\, \sqrt{ \frac{k}{c}}\;,
\label{first_condition1b}
\\
\text{for $\Lambda >0$:} &&  \zeta_1 \,:=\, \sqrt{-\frac{3}{\Lambda}\frac{c}{2} + \sqrt{\Big( \frac{3}{\Lambda} \frac{c}{2}\Big)^2  + \frac{3}{\Lambda} k}}
\label{first_condition2b}
\;,
\\
\text{for $\Lambda <0$:} &&  \zeta_1 \,:=\, \sqrt{-\frac{3}{\Lambda}\frac{c}{2} - \sqrt{\Big( \frac{3}{\Lambda} \frac{c}{2}\Big)^2  + \frac{3}{\Lambda} k}}
\label{first_condition3b}
\;.
\end{eqnarray}

\begin{theorem}[cf. \cite{mars_senovilla}]
\label{thm_charact}
Let $(\mcM,g)$ be a smooth $3+1$-dimensional $\Lambda$-vacuum space-time which admits a KVF $X$ such that  the associated MST vanishes for some function $Q$.
Assume  that $Q\mathcal{F}^2$ and $Q\mathcal{F}^2-4\Lambda$ are not identically zero, and that $\mathrm{Im}\Big(\frac{\sqrt{\mathcal{F}^2}}{Q\mathcal{F}^2-4\Lambda}\Big)$ has non-zero gradient somewhere.
Then the functions $b_1$,$b_2$, $c$, and $k$ are constant.
Assume further that    $b_2=0$ and that
\begin{eqnarray}
\text{for $\Lambda =0$:} && c\, > \,0
\quad  (\Longrightarrow \quad k  \,\geq\, 0)
\;,
\label{second_condition1}
\\
\text{for $\Lambda >0$:} &&   c \,>\,  0 \quad  (\Longrightarrow \quad k\,\geq \, 0) \quad \text{or}
\nonumber
\\
&&      c\, \leq  \,0  \quad \text{and} \quad  k\,> \, 0\;,
\label{second_condition2}
\\
\text{for $\Lambda <0$:} &&  c\, > \,0   \quad \text{and} \quad k\,<\, \frac{3}{|\Lambda |} \frac{c^2}{4}
\quad  (\Longrightarrow \quad k\,\geq \, 0)
\;.
\label{second_condition3}
\end{eqnarray}
Then $(\mcM, g)$ is locally isometric to a Kerr-(A)dS space-time with parameters $(\Lambda,m,a)$, where
\begin{equation}
m\,=\,\frac{b_1}{2\big(\frac{\Lambda}{3} \zeta_1^2 + c\big)^{3/2}}\,, \quad a\,=\, \frac{\zeta_1}{\big(\frac{\Lambda}{3} \zeta_1^2 + c\big)^{1/2}}\,.
\end{equation}
The Schwarzschild-(A)dS limit is obtained for $a=0$, equivalently $k=0$.
\end{theorem}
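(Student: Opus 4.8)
The statement is a repackaging, for vanishing NUT parameter, of the local characterization of Kerr--NUT--(A)dS in \cite{mars_senovilla}, made self-contained by fixing the $Q$-ambiguity via Proposition~\ref{prop_Q}. The plan has three parts: (i) show that the four functions $b_1,b_2,c,k$ of \eqref{equation_b1b2}--\eqref{equation_k} are constant; (ii) invoke the Mars--Senovilla local uniqueness to conclude the metric is locally Kerr--(A)dS once $b_2=0$ and the sign conditions \eqref{second_condition1}--\eqref{second_condition3} hold; (iii) match the integration constants with the physical parameters $m,a$. Before anything else I would observe that, under the hypotheses and by Proposition~\ref{prop_Q} together with Remark~\ref{remark_weaker}, the function $Q$ appearing in \eqref{equation_b1b2}--\eqref{equation_Z} may be taken to equal $Q_{\mathrm{ev}}$; in particular $Q\mathcal{F}^2$ becomes an algebraic function of $\mathcal{F}^2$, $\chi$ and $\Lambda$, which is what makes all the quantities in \eqref{equation_b1b2}--\eqref{equation_Z} well defined in the first place.

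For part (i): vanishing of the MST forces $\mathcal{C}_{\mu\nu\sigma\rho}=-Q\,\mathcal{Q}_{\mu\nu\sigma\rho}$, i.e.\ the self-dual Weyl tensor is algebraically determined by the self-dual Killing form $\mathcal{F}$ (in particular the spacetime has a single non-trivial Weyl eigenvalue proportional to $Q\mathcal{F}^2$). Feeding this into the $\Lambda$-vacuum second Bianchi identity, equivalently into the structure equations for $X$, $\mathcal{F}$, the Ernst potential $\chi$ and $|X|^2$ recalled in \cite{kerr1,mars,mars2}, one obtains a closed first-order system for $\nabla_\mu\mathcal{F}^2$, $\nabla_\mu\chi=\chi_\mu$, $\nabla_\mu|X|^2$ and $\nabla_\mu Q$. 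I would then differentiate the explicit expressions \eqref{equation_b1b2}--\eqref{equation_k}, substitute this system, and check that every resulting gradient collapses to zero; the combination $Z$ of \eqref{equation_Z} is arranged precisely so that, after the substitution, $k$ becomes a first integral. This bookkeeping is the computational heart of the argument.

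For parts (ii)--(iii): with $b_1,b_2,c,k$ constant, introduce the complex function $\mathcal{W}:=\sqrt{\mathcal{F}^2}\,(Q\mathcal{F}^2-4\Lambda)^{-1}$, so that $Z=6\,\mathrm{Re}\,\mathcal{W}$ plays (up to an affine change) the role of a radial coordinate while $6\,\mathrm{Im}\,\mathcal{W}$ — which by hypothesis has non-vanishing gradient somewhere — plays the role of an angular coordinate, and the KVF $X$ supplies $\partial_t$; the constancy of $k$ encodes the hidden (Killing-tensor) symmetry needed to complete these to Boyer--Lindquist-type coordinates. The $\Lambda$-vacuum equations then pin the metric down to the explicit Kerr--NUT--(A)dS form with structure functions determined by $b_1,b_2,c,k,\Lambda$; imposing $b_2=0$ kills the NUT charge, and the inequalities \eqref{second_condition1}--\eqref{second_condition3} are exactly what is needed for $\zeta_1$ in \eqref{first_condition1b}--\eqref{first_condition3b} to be real and for the coordinate ranges and Lorentzian signature to be those of Kerr--(A)dS. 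Comparing with the standard Boyer--Lindquist form of Kerr--(A)dS and absorbing the residual rescaling freedom of $X$ then yields $m=\frac{b_1}{2(\frac{\Lambda}{3}\zeta_1^2+c)^{3/2}}$ and $a=\frac{\zeta_1}{(\frac{\Lambda}{3}\zeta_1^2+c)^{1/2}}$, with $a=0$ — equivalently $\zeta_1=0$, i.e.\ $k=0$ — being the Schwarzschild--(A)dS branch.

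The hard part will be part (i): one must first extract, from $\mathcal{S}=0$ and the $\Lambda$-vacuum equations, the correct first-order identities for $\mathcal{F}^2,\chi,|X|^2,Q$ — essentially the MST consequences already exploited in \cite{mars,mars2,mars_senovilla} and surveyed in \cite{kerr1} — and then verify by a careful but mechanical computation that the specific combinations \eqref{equation_b1b2}--\eqref{equation_k} have vanishing differential. A secondary subtlety is tracking the branch of the various square roots; as in the discussion following Proposition~\ref{prop_Q} this is handled by fixing the branch at one point and extending by continuity, no branch point being met because of the assumed inequalities. Beyond this, the actual construction of coordinates and the identification of the metric in (ii)--(iii) can be taken over verbatim from \cite{mars_senovilla}, the only genuinely new element being the check — via Proposition~\ref{prop_Q} and Remark~\ref{remark_weaker} — that the $Q$ entering \eqref{equation_b1b2}--\eqref{equation_k} can be chosen consistently so that those formulas, and hence the statement, make sense.
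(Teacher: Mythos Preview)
The paper does not prove this theorem. It is explicitly stated as a recalled result: the preamble to the theorem says ``Here, let us just recall a characterization result for the Kerr-A(dS) metric \cite{kerr1} which is a reformulation of a special case of \cite[Theorem~1]{mars_senovilla}'', and the theorem carries the attribution ``cf.~\cite{mars_senovilla}'' with no accompanying proof in the paper itself.

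Your outline is a fair high-level sketch of how the argument in \cite{mars_senovilla} actually goes --- derive the closed first-order system for $\mathcal{F}^2$, $\chi$, $|X|^2$, $Q$ from $\mathcal{S}=0$ and the $\Lambda$-vacuum equations, verify that the specific combinations \eqref{equation_b1b2}--\eqref{equation_k} have vanishing differential, then build Boyer--Lindquist-type coordinates from $\mathrm{Re}\,\mathcal{W}$, $\mathrm{Im}\,\mathcal{W}$ and the Killing orbits and read off the parameters. One small caveat: your appeal to Proposition~\ref{prop_Q} to normalise $Q$ requires the inequalities \eqref{ineqs_C2_0}, which are not literally among the hypotheses of Theorem~\ref{thm_charact}; in the Mars--Senovilla setting the constancy of $b_1,b_2,c,k$ is established directly from the vanishing-MST structure equations without first passing through $Q_{\mathrm{ev}}$, so that step is not strictly needed for part~(i). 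Otherwise your plan matches the cited reference, and there is nothing further to compare against in the present paper.
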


\subsection{Killing initial data sets (KIDs)}

In order to define the MST  the emerging space-time needs to admit a KVF. This is ensured by so-called \emph{Killing initial data sets (KIDs)}.

\begin{theorem}[\cite{CBG,ChBActa,moncrief}, cf.\ \cite{beig}]
Consider the tuple $(\Sigma,h_{ij},K_{ij},\sigma,Y^i)$, where $(\Sigma,h_{ij})$ is a Riemannian 3-manifold, and where $K_{ij}$,  $\sigma$ and $Y^i$ are a  symmetric 2-tensor, a scalar function and a vector field on $\Sigma$, respectively.
Then there exists an (up to isometries) unique maximal globally hyperbolic space-time $(\mcM, g_{\mu\nu})$ such that
\begin{enumerate}
\item[(i)]
$(\mcM, g_{\mu\nu})$ solves Einstein's vacuum field equations $R_{\mu\nu}=\Lambda g_{\mu\nu}$,
\item[(ii)] $(\mcM, g_{\mu\nu})$  contains $\Sigma$ as a Cauchy surface with $h_{ij}$ and $K_{ij}$ being its first and second fundamental form, and
\item[(iii)] $(\mcM, g_{\mu\nu})$  admits a KVF $X^{\mu}$ with $(X^t,X^i)|_{\Sigma}=(\sigma,Y^i)$,
\end{enumerate}
 if and only if the vacuum constraint equations
\begin{eqnarray}
 \mathring R- |K|^2 + K^2  - 2\Lambda  &=& 0\;,
\label{constraint1}
\\
\mcD_j K_{i}{}^j - \mcD_i K &=& 0
\;,
\label{constraint2}
\end{eqnarray}
and the KID equations
\begin{eqnarray}
 \mcD_{(i}Y_{j)} + K_{ij}\sigma  &=& 0
\;,
\label{KID1}
\\
 \mcD_{i}\mcD_{j}\sigma  + \mcL_{Y}K_{ij}
-(\mathring R_{ij} + K K_{ij} -  2 K_i{}^kK_{jk} -\Lambda h_{ij})\sigma &=& 0
\;,
\label{KID2}
\end{eqnarray}
are fulfilled.
\end{theorem}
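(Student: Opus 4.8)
\emph{Proof strategy.} This is the classical KID theorem, and I would argue along standard lines, splitting into necessity and sufficiency.

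\emph{Necessity.} Assume a development $(\mcM,g_{\mu\nu})$ with properties (i)--(iii) exists. Since $\Sigma$ is a Cauchy surface with first and second fundamental forms $h_{ij},K_{ij}$, the Gauss--Codazzi relations together with $R_{\mu\nu}=\Lambda g_{\mu\nu}$ give the Hamiltonian and momentum constraints \eq{constraint1}--\eq{constraint2}. For the KID equations, decompose the KVF along $\Sigma$ as $X^\mu=\sigma n^\mu+Y^\mu$, with $n$ the future unit normal and $Y$ tangent to $\Sigma$ (in the coordinates implicit in the statement, $\sigma$ and $Y^i$ are precisely the normal and tangential parts of $X$). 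Decomposing $\mcL_X g_{\mu\nu}=\nabla_\mu X_\nu+\nabla_\nu X_\mu=0$ relative to the foliation, the purely tangential component evaluated on $\Sigma$, rewritten with the Gauss--Weingarten formulas (so that the normal piece contributes $2\sigma K_{ij}$), yields \eq{KID1}; differentiating once more along $n$ and inserting the evolution equation for $K_{ij}$ and the Einstein equations — equivalently, projecting the second-order Killing identity $\nabla_\alpha\nabla_\beta X_\gamma=R_{\gamma\beta\alpha}{}^\delta X_\delta$ onto $\Sigma$ — produces \eq{KID2}.

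\emph{Sufficiency.} First, the constraints \eq{constraint1}--\eq{constraint2} supply, by the classical existence and uniqueness theorem for the vacuum Cauchy problem, a maximal globally hyperbolic $\Lambda$-vacuum development $(\mcM,g_{\mu\nu})$ of $(\Sigma,h_{ij},K_{ij})$, unique up to isometry and satisfying (i)--(ii). On $\mcM$ I would then construct $X^\mu$ as the solution of the linear wave equation $\Box X^\mu+R^\mu{}_\nu X^\nu=0$, i.e.\ $\Box X^\mu=-\Lambda X^\mu$, with Cauchy data on $\Sigma$ determined by $(\sigma,Y^i)$ as follows: $X^\mu|_\Sigma$ is the vector with normal/tangential parts $(\sigma,Y^i)$, and $\nabla_n X^\mu|_\Sigma$ is the unique value forced by requiring $\nabla_\mu X_\nu+\nabla_\nu X_\nu=0$ on $\Sigma$ — the KID equations \eq{KID1}--\eq{KID2} are exactly the compatibility conditions guaranteeing that such a choice of the tangential components of $\nabla X|_\Sigma$ in terms of $\sigma,Y,h,K$ is consistent. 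This gives a smooth, and (when $(\sigma,Y^i)\not\equiv0$) non-trivial, vector field $X$.

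\emph{That $X$ is Killing.} Set $A_{\mu\nu}:=(\mcL_X g)_{\mu\nu}=\nabla_\mu X_\nu+\nabla_\nu X_\mu$ and $\psi:=\nabla_\mu X^\mu$. Using the wave equation for $X$, $R_{\mu\nu}=\Lambda g_{\mu\nu}$, and the (second) Bianchi identity, one shows $\psi$ satisfies a homogeneous wave equation and $A_{\mu\nu}$ a closed linear homogeneous second-order hyperbolic system, schematically $\Box A_{\mu\nu}=2R_{\mu\alpha\nu\beta}A^{\alpha\beta}+(\text{terms linear in }A,\psi)$, with \emph{no} source. The Cauchy data of both systems vanish on $\Sigma$: $\psi|_\Sigma=\nabla_n\psi|_\Sigma=0$ and $A_{\mu\nu}|_\Sigma=\nabla_n A_{\mu\nu}|_\Sigma=0$, which is exactly where \eq{KID1}--\eq{KID2} and the constraints \eq{constraint1}--\eq{constraint2} are used, after unwinding the $3+1$ split of $\nabla_\mu X_\nu$ on $\Sigma$. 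Uniqueness for these systems on the globally hyperbolic $\mcM$ then forces $\psi\equiv0$ and $A_{\mu\nu}\equiv0$, i.e.\ $X$ is a KVF, restricting on $\Sigma$ to the prescribed data; its uniqueness within the maximal development follows from uniqueness for the wave equation. The main obstacle is precisely this last step: producing the closed homogeneous wave system for $(A_{\mu\nu},\psi)$ by carefully absorbing all curvature commutator terms via the Bianchi identities and the Einstein equations, and verifying that the full Cauchy data for $A_{\mu\nu}$ — not only $A_{\mu\nu}|_\Sigma$ but its normal derivative — are annihilated by the KID equations; the remaining parts are routine.
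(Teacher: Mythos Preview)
The paper does not prove this theorem; it is quoted as a known result from the cited literature (Four\`es-Bruhat, Choquet-Bruhat--Geroch for the Cauchy problem, Moncrief and Beig--Chru\'sciel for the KID part), so there is no ``paper's own proof'' to compare against. Your sketch follows precisely the classical strategy of those references --- Gauss--Codazzi for necessity of the constraints, the $3{+}1$ split of $\mcL_X g=0$ and of $\nabla_\alpha\nabla_\beta X_\gamma=R_{\gamma\beta\alpha}{}^{\delta}X_\delta$ for necessity of the KID equations, and then the Moncrief-type argument (propagate $X$ by $\Box X^\mu+R^\mu{}_\nu X^\nu=0$, derive a closed homogeneous hyperbolic system for $A_{\mu\nu}=\mcL_X g_{\mu\nu}$, and use the KID equations to kill its Cauchy data) for sufficiency --- and is correct at the level of a sketch; the only slip is the typo $\nabla_\mu X_\nu+\nabla_\nu X_\nu$ for $\nabla_\mu X_\nu+\nabla_\nu X_\mu$.
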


We use  $\,\mathring{}\,$ to denote objects associated to the Riemannian metric $h_{ij}$. $\mcD$~denotes the covariant derivative associated to $h_{ij}$.

\section{A vanishing MST arising from a space-like Cauchy problem}
\label{section_2}

Suppose  we have been given KIDs, i.e.\ a tuple $(\Sigma,h_{ij},K_{ij},\sigma,Y^i)$ which solves \eq{constraint1}-\eq{KID2}.
We want to extract conditions which characterize  the vanishing of the MST    on the initial surface $\Sigma$.

\subsection{Weyl tensor and Killing form in adapted coordinates}

To make computations as simple as possible  let us impose certain gauge conditions: We assume that the initial surface $\Sigma$ has (locally) been given in
\emph{adapted coordinates} $(t,x^i)$ with $\Sigma=\{t=0\}$, and we further assume a gauge where
%
\begin{equation}
\ol g_{tt} =-1 \;, \quad \ol g_{ti} = 0\;, \quad \ol{\partial_t g_{t\mu}}=0\;,
\end{equation}
where here and henceforth an overbar means restriction of the corresponding space-time object to the initial surface $\Sigma$.
In such a  gauge the second   fundamental form reads
\begin{equation}
K_{ij}\,=\,\frac{1}{2}\ol{\partial_tg_{ij}}
\;,
\end{equation}
while the  traces of the connection coefficients on the initial surface become
\begin{equation}
\ol \Gamma^t_{tt} = \ol \Gamma^t_{ti}=\ol \Gamma^i_{tt} = 0\;, \quad
\ol \Gamma^t_{ij} = K_{ij}\;, \quad \ol \Gamma^i_{tj}=K_j{}^i\;, \quad \ol \Gamma^k_{ij} = \mathring \Gamma^k_{ij}
\;.
\end{equation}
The \emph{electric} and \emph{magnetic} part of the conformal Weyl tensor are given by
\begin{eqnarray}
 E_{ij} &:=& \ol C_{titj} \,=\, \mathring R_{ij}  + K K_{ij} - K_{ik}K_j{}^k -\frac{2}{3}\Lambda h_{ij}
\;,
\\
B_{ij} &:=& \ol C^{\star}_{titj}\,=\, - \mathring\epsilon_{i}{}^{kl}\mcD_{k}K_{lj}
\;.
\end{eqnarray}
(Note that the vacuum constraints imply
%
$
 \mathring\epsilon_{[i}{}^{kl}\mcD_{|k}K_{l|j]}= 0
$.)
It is useful to introduce the following fields on $\Sigma$,
%
\begin{eqnarray}
\mathcal{P}_i &:=& \ol{\mathcal{F}}_{ti} \,=\, \mcD_i\sigma + K_{ij}Y^j+ \frac{i}{2} \mathring \epsilon_i{}^{jk}\mcD_jY_k
\;,
\label{dfn_calP}
\\
\mathcal{E}_{ij} &:=&  \ol{\mathcal{C}}_{titj} \,=\, E_{ij} + i B_{ij}
\;.
\label{dfn_cal_E}
\end{eqnarray}
Moreover, it is  convenient to introduce a notation for the curls of the KVF $Y$ and the co-vector field $\mathcal{P}$,
\begin{eqnarray}
Z^i &:=&  
 \mathring \epsilon^{ijk}\mcD_jY_k
\;,
\\
\mathcal{Q}_i &:=&  \mathring \epsilon_i{}^{jk} \mcD_j\mathcal{P}_k
\;.
\label{dfn_calQ}
\end{eqnarray}

One straightforwardly  checks that it follows from the vacuum constraint equations \eq{constraint1}-\eq{constraint2} that $\mathcal{E}_{ij}$ satisfies the following relations,
\begin{eqnarray}
h^{ij}\mathcal{E}_{ij} \,=\,0\;, \quad
\mcD_j \mathcal{E}_{i}{}^{j} \,=\,
    i \mathring\epsilon_{i}{}^{jk}K_j{}^l \mathcal{E}_{kl}
\;.
\label{constraints_E}
\end{eqnarray}

For the sake of completeness let us collect some more useful formulas which are obtained by
employing the self-duality of the objects under consideration
(to compute the transverse derivative of the self-dual  Weyl tensor  the contracted second
Bianchi identity has been used),
\begin{eqnarray}
\ol{\mathcal{F}}_{ij} &=&
 - i \mathring \epsilon_{ijk} \mathcal{P}^{k}
\;,
\label{var_form_beg}
\\
\ol{\mathcal{C}}_{tijk} &=&
- i \mathring \epsilon_{jkl}\mathcal{E}_{i}{}^{l}
\;,
\qquad
\ol{\mathcal{C}}_{ijkl} \,=\, -\mathring\epsilon_{ij}{}^{p}\mathring \epsilon_{kl}{}^{q}\mathcal{E}_{pq}
\;,
\\
\ol{\nabla_t \mathcal{C}_{titj}} &=&
i\mathring \epsilon_{(i}{}^{kl}\mcD_{|k}\mathcal{E}_{l|j)} +3( K_{(i}{}^k\mathcal{E}_{j)k})\breve{}  - 2K\mathcal{E}_{ij}
\;,\phantom{xxx}
\label{trans_weyl}
\\
\ol{\nabla_t\mathcal{C}_{tijk}} &=&
 - i\mathring \epsilon_{jk}{}^{l}\ol{\nabla_t\mathcal{C}_{titl}}
\;,
\qquad
\ol{\nabla_t\mathcal{C}_{ijkl}} \,=\,
 i\mathring\epsilon_{ij}{}^{p}\mathring\epsilon_{kl}{}^{q}\ol{\nabla_t\mathcal{C}_{tptq}}
\;,
\label{var_form_end}
\end{eqnarray}
where $(.)\breve{}$ denotes the trace-free part of the corresponding 2-tensor w.r.t.\ $h_{ij}$.

\subsection{Vanishing of the MST  on $\Sigma$}

Let us compute the trace of the MST on the initial hypersurface $\Sigma$.
First of all we observe that
\begin{equation}
\mathcal{F}^2|_{\Sigma} \,=\, 4\ol{\mathcal{F}}_{ti}\ol{\mathcal{F}}^{ti} 
\,=\, - 4\mathcal{P}^2
\;.
\label{relation_F_P}
\end{equation}
We further obtain
\begin{equation}
\mathcal{Q}_{titj}|_{\Sigma} \,=\,  -\ol{\mathcal{F}}_{ti}\ol{\mathcal{F}}_{tj} + \frac{1}{3}\ol{\mathcal{F}}^2\ol{\mathcal{I}}_{titj}
\,=\, -(\mathcal{P}_i\mathcal{P}_j)\breve{}
\;.
\label{expr_F2}
\end{equation}
%
We conclude that
\begin{equation}
\mathcal{S}_{titj}|_{\Sigma}\,=\,\ol{\mathcal{C}}_{titj} + \ol Q\, \ol{\mathcal{Q}}_{titj}
\\
\,=\,  \mathcal{E}_{ij}- q (\mathcal{P}_i\mathcal{P}_j)\breve{}
\;,
\label{unproper_MST}
\end{equation}
where we have set
\begin{equation}
q\,:=\, Q|_{\Sigma}
\;.
\end{equation}
Note that $\mathcal{S}_{titj}|_{\Sigma}$ encompasses all independent components of the MST.
Consequently, the MST vanishes on $\Sigma$ for some function $q$ if and only if
\begin{equation}
 \mathcal{E}_{ij} \,=\, q (\mathcal{P}_i\mathcal{P}_j)\breve{}
\;.
\label{main_one}
\end{equation}

%
%
%

It follows immediately from the definition of the function $Q_0$ \cite{kerr1} that whenever the MST vanishes (in space-time, on some hypersurface, or merely at one point)
the function $Q$ needs to coincide with $Q_0$ there, supposing that $\mathcal{F}^2\ne 0$.
In particular, $q=Q_0|_{\Sigma}$ if  the  MST restricted to $\Sigma$ vanishes for some function $q$,
\begin{equation}
 q= Q_0|_{\Sigma}= \frac{3}{2}\ol{\mathcal{F}}^{-4}\ol{ \mathcal{F}}^{\mu\nu}\ol{\mathcal{F}}^{\sigma\rho}\ol{ \mathcal{C}}_{\mu\nu\sigma\rho}
=   24\ol{\mathcal{F}}^{-4} \ol{ \mathcal{F}}^{ti}\ol{\mathcal{F}}^{tj}\ol{ \mathcal{C}}_{titj}
= \frac{3}{2}\mathcal{P}^{-4}\mathcal{P}^i\mathcal{P}^j \mathcal{E}_{ij}
\;.
\label{expr_Q0}
\end{equation}
 We are  thus led to the following result:

\begin{lemma}
\label{some_intermediate}
Suppose that $\mathcal{P}^2\ne0$.
Then a Killing initial data set $(\Sigma,h_{ij},K_{ij},\sigma,Y^i)$  yields a $\Lambda$-vacuum space-time with a KVF such that the
associated MST vanishes \underline{on $\Sigma$} for some function $Q$
 if and only if
\begin{equation}
\mathcal{E}_{ij} \,=\, \frac{3}{2}\mathcal{P}^{-4}\mathcal{P}^k\mathcal{P}^l \mathcal{E}_{kl}  (\mathcal{P}_i\mathcal{P}_j)\breve{}
\;,
\label{main_condition}
\end{equation}
and in that case $q=Q_0|_{\Sigma}$.
\end{lemma}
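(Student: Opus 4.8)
The plan is to assemble Lemma~\ref{some_intermediate} directly from the computation carried out earlier in this section, namely from the formula \eq{unproper_MST} together with the identification \eq{expr_Q0} of the function $Q_0$ on $\Sigma$. The key observation is \eq{unproper_MST}: all independent components of the MST restricted to $\Sigma$ are encoded in $\mathcal{S}_{titj}|_{\Sigma}= \mathcal{E}_{ij}-q(\mathcal{P}_i\mathcal{P}_j)\breve{}$, where $q:=Q|_\Sigma$. Hence the MST vanishes on $\Sigma$ for some function $Q$ if and only if \eq{main_one} holds for $q=Q|_\Sigma$, i.e.\ if and only if there exists a scalar $q$ on $\Sigma$ with $\mathcal{E}_{ij}=q(\mathcal{P}_i\mathcal{P}_j)\breve{}$.

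First I would prove the forward implication. Assume \eq{main_one} holds with some $q$. Under the standing hypothesis $\mathcal{P}^2\ne 0$ we have $\mathcal{F}^2|_\Sigma=-4\mathcal{P}^2\ne 0$ by \eq{relation_F_P}, so the function $Q_0$ of \eq{def0_Q} is well-defined on $\Sigma$. The point made just before the statement is that the defining relation of $Q_0$ forces $q=Q_0|_\Sigma$ whenever the $(titj)$-components of the MST vanish; concretely, contracting \eq{main_one} with $\mathcal{P}^i\mathcal{P}^j$ and using that $(\mathcal{P}_i\mathcal{P}_j)\breve{}$ differs from $\mathcal{P}_i\mathcal{P}_j$ only by a trace term $\tfrac13 h_{ij}\mathcal{P}^2$, one gets $\mathcal{P}^i\mathcal{P}^j\mathcal{E}_{ij}=q\,\big((\mathcal{P}^2)^2-\tfrac13(\mathcal{P}^2)^2\big)=\tfrac23 q(\mathcal{P}^2)^2$, i.e.\ $q=\tfrac32\mathcal{P}^{-4}\mathcal{P}^i\mathcal{P}^j\mathcal{E}_{ij}$. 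This is exactly $Q_0|_\Sigma$ as computed in \eq{expr_Q0}. Substituting this value of $q$ back into \eq{main_one} yields \eq{main_condition}. Conversely, if \eq{main_condition} holds, then setting $q:=\tfrac32\mathcal{P}^{-4}\mathcal{P}^k\mathcal{P}^l\mathcal{E}_{kl}$ (a well-defined smooth function on $\Sigma$ since $\mathcal{P}^2\ne 0$) gives $\mathcal{E}_{ij}=q(\mathcal{P}_i\mathcal{P}_j)\breve{}$, hence $\mathcal{S}_{titj}|_\Sigma=0$ by \eq{unproper_MST}; since these components exhaust the independent components of the Weyl field $\mathcal{S}$, the MST vanishes on all of $\Sigma$. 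One then invokes the KID theorem, which guarantees that $(\Sigma,h_{ij},K_{ij},\sigma,Y^i)$ generates a $\Lambda$-vacuum space-time admitting a KVF whose restriction to $\Sigma$ is $(\sigma,Y^i)$, so that the MST is a legitimately defined object along $\Sigma$ and the preceding discussion applies.

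There is essentially no serious obstacle here: the content of the lemma is already contained in the displayed identities \eq{unproper_MST}--\eq{expr_Q0}, and the only thing to be careful about is the elementary bookkeeping with the trace-free bracket $(\cdot)\breve{}$ when contracting to solve for $q$, and the remark that $\mathcal{S}_{titj}|_\Sigma$ captures every independent component of the MST — this last fact is the genuine structural input and relies on $\mathcal{S}$ being a self-dual Weyl field, so that the formulas \eq{var_form_beg}--\eq{var_form_end} (with $\mathcal{E}$ replaced by the trace $\mathcal{S}_{titj}|_\Sigma$) reconstruct all components from the $(titj)$ block. I would state this reduction explicitly, cite the algebraic symmetries of a Weyl field already recorded after \eq{dfn_mars-simon}, and then the equivalence \eq{main_one} $\Leftrightarrow$ $\mathcal{S}|_\Sigma=0$ is immediate, completing the proof.
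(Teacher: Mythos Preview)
Your proposal is correct and follows essentially the same approach as the paper: the lemma is really just a repackaging of the identities \eq{unproper_MST}, \eq{main_one}, and \eq{expr_Q0} derived in the text immediately preceding it, and your argument spells out the contraction with $\mathcal{P}^i\mathcal{P}^j$ that the paper performs when computing $Q_0|_\Sigma$ in \eq{expr_Q0}. The paper does not give a separate proof environment for this lemma precisely because the content is already on the page.
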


\begin{remark}
{\rm
Equation \eq{main_condition} is of the same form as the corresponding equation on $\scri$ derived in \cite{mpss}. Note, however, that the tensors
$\mathcal{E}_{ij}$ and $\mathcal{P}_i$ obey different equations as the corresponding ones in \cite{mpss}, whence $ \mathcal{P}^k\mathcal{P}^l \mathcal{E}_{kl}$
behaves differently, cf.\ the considerations below.
}
\end{remark}

\subsection{Equivalence of the choices of $Q$ on $\Sigma$}

In the previous section we made the choice $Q=Q_0$.
Here, analog to the proceeding in \cite{kerr1}, our goal is to solve, on $\Sigma$, the equation $\mathcal{S}_{\alpha\beta\mu\nu}=0$ for the KVF,
and to do that it will be key to choose $Q=Q_{\mathcal{C}}$.
Moreover, in order to employ the evolution equations for the MST the choice $Q=Q_{\mathrm{ev}}$ is essential.

 Compared to \cite{kerr1}, though, we consider  a space-like hypersurface
rather than the full  space-time, whence Proposition~\ref{prop_Q} does not apply.
A priori it might e.g.\ happen that    $\mathcal{S}^{(\mathrm{ev})}_{\alpha\beta\mu\nu}$
does not vanish on some space-like hypersurface $\Sigma$  while  $\mathcal{S}^{(0)}_{\alpha\beta\mu\nu}$ or  $\mathcal{S}^{(\mathcal{C})}_{\alpha\beta\mu\nu}$  do. (On the other hand, if $\mathcal{S}^{(\mathrm{ev})}_{\alpha\beta\mu\nu}$ vanishes on $\Sigma$ it follows from the evolution equations
that it also vanishes off $\Sigma$, whence it follows from Proposition~\ref{prop_Q} that $\mathcal{S}^{(0)}_{\alpha\beta\mu\nu}$ and $\mathcal{S}^{(\mathcal{C})}_{\alpha\beta\mu\nu}$ needs to
vanish, as well.)
We therefore aim to make sure that whenever there exists a function $Q$ for which the MST vanishes  on a space-like hypersurface $\Sigma$,
then $Q=Q_0=Q_{\mathrm{ev}}=Q_{\mathcal{C}}$ holds there.
Along the way we will obtain some relations which will be crucial later on.

First of all we employ the KID equations \eq{KID1}-\eq{KID2} to compute
\begin{eqnarray}
\mcD_{i} \mathcal{P}_{j}
&\equiv&
\mcD_{i}\mcD_j\sigma + Y^k \mcD_{i} K_{jk}+ K_{j}{}^{k}\mcD_{(i}Y_{k)}  + K_{j}{}^{k}\mcD_{[i}Y_{k]}
+ \frac{i}{2}\mcD_{i}Z_j
\phantom{xxxxx}
\\
&=& \sigma \Big(\mathcal{E}_{ij} -\frac{\Lambda}{3}  h_{ij} \Big) + \mathring\epsilon_{ikl}B_{j}{}^{k}Y^l
-\frac{1}{2} \mathring\epsilon_{jkl}K_{i}{}^{k} Z^l
+ i\mathring \epsilon_{jk}{}^lK_{i}{}^{k}\mcD_l\sigma
\nonumber
\\
&&
 + \underbrace{ i\mathring \epsilon_{ijk} \mathring R_{l}{}^k Y^l + i\mathring \epsilon_{jkl}  \mathring R_{i}{}^{l} Y^k
-\frac{1}{2}i\mathring R\mathring \epsilon_{ijk}Y^k}_{=i\mathring \epsilon_{ikl}  \mathring R_{j}{}^{l} Y^k+ \frac{1}{2}i\mathring R\mathring \epsilon_{ijk}Y^k}
\\
&=& \sigma\Big(\mathcal{E}_{ij} -\frac{\Lambda}{3}  h_{ij} \Big) + i \mathring\epsilon_{ik}{}^{l}\Big(\mathcal{E}_{jl} -\frac{\Lambda}{3}h_{jl}\Big)Y^k+ i\mathring \epsilon_{jkl}K_{i}{}^{k}\mathcal{P}^l
\nonumber
\\
&&
+ \underbrace{ \frac{1}{2}i(|K|^2-K^2) \mathring\epsilon_{ijk}Y^k
+ i \mathring\epsilon_{[i}{}^{kl}( KK_{j]k} Y_l - K_{j]p}K_k{}^pY_l+  K_{j]k} K_{l}{}^{p}Y_p )}_{=0}
\nonumber
\\
&&+ \underbrace{i \mathring\epsilon_{(i}{}^{kl} (  KK_{j)k}Y_l - K_{j)p}K_k{}^pY_l - K_{j)k} K_{l}{}^{p}Y_p )}_{=0}
\;.
\label{deriv_P}
\end{eqnarray}
In particular,
\begin{equation}
\mcD_{i} \mathcal{P}^i
\,=\, -\Lambda \sigma
\;.
\label{div_P}
\end{equation}
From \eq{deriv_P} we compute
\begin{eqnarray}
 \mathcal{P}^2\mcD_i \mathcal{P}^2 &=&  2 \mathcal{P}^2\mathcal{P}^j\mcD_i \mathcal{P}_j
\\
 &=&  2 \mathcal{P}^2\Big(\sigma\mathcal{P}^j \mathcal{E}_{ij}
  -  i\mathring\epsilon_{ikl}\mathcal{P}^j \mathcal{E}_{j}{}^{k}Y^l
-\frac{\Lambda}{3}  \sigma \mathcal{P}_i
 +i\frac{\Lambda}{3}   \mathring\epsilon_{ijk}\mathcal{P}^jY^k
\Big)
\\
 &=& \Big( 2\mathcal{A}-\frac{2}{3}\Lambda   \mathcal{P}^2\Big)(\sigma \mathcal{P}_i  - i\mathring\epsilon_{ijk}\mathcal{P}^jY^k)
\;,
\label{deriv_P2_0}
\end{eqnarray}
where we have set
\begin{equation}
\mathcal{A} \,:=\, \mathcal{P}^k\mathcal{P}^l \mathcal{E}_{kl}
\;.
\label{dfn_cal_A}
\end{equation}
We would like to derive an algebraic relation between $\mathcal{A}$ and $\mathcal{P}^2$.
%
We use \eq{constraints_E}, \eq{deriv_P}-\eq{div_P} and \eq{deriv_P2_0} to compute the divergence of  \eq{main_condition} (for $\mathcal{P}^2\ne 0$),
\begin{eqnarray}
0 &=& \mcD^j(2\mathcal{P}^4 \mathcal{E}_{ij} - 3\mathcal{A} (\mathcal{P}_i\mathcal{P}_j)\breve{})
\\
&\equiv&  4\mathcal{E}_{i}{}^{j}\mathcal{P}^2\mcD_j\mathcal{P}^2 + 2\mathcal{P}^4 \mcD^j\mathcal{E}_{ij}
- 3 \mathcal{P}_i\mathcal{P}^j\mcD_j\mathcal{A} - 3\mathcal{A}\mathcal{P}^j \mcD_j \mathcal{P}_i
\nonumber
\\
&&- 3\mathcal{A} \mathcal{P}_i\mcD_j\mathcal{P}^j
+\mathcal{P}^2\mcD_i \mathcal{A}+ \mathcal{A}\mcD_i\mathcal{P}^2
\\
&=&  9\sigma\mathcal{A}^2 \mathcal{P}^{-2}\mathcal{P}_i
+\frac{3}{2}i\mathcal{A}^2 \mathcal{P}^{-2} \mathring\epsilon_{ijk}\mathcal{P}^j Y^k
+  i\Lambda  \mathcal{A}\mathring\epsilon_{ijk}\mathcal{P}^jY^k
\nonumber
\\
&&
+\mathcal{P}^2\mcD_i \mathcal{A}
- 3 \mathcal{P}_i\mathcal{P}^j\mcD_j\mathcal{A}
- \mathcal{A}\mcD_i\mathcal{P}^2
\;.
\label{interm_A}
\end{eqnarray}
Contraction with $\mathcal{P}^i$ yields with \eq{deriv_P2_0}
%
\begin{equation}
2 \mathcal{P}^2\mathcal{P}^i\mcD_i \mathcal{A}
\,=\,  9\sigma\mathcal{A}^2
- \mathcal{A}\mathcal{P}^i\mcD_i\mathcal{P}^2
\,=\,  7\sigma\mathcal{A}^2
+\frac{2}{3}\Lambda  \sigma\mathcal{A}
 \mathcal{P}^2
\;.
\end{equation}
We insert this into \eq{interm_A} to deduce that
\begin{equation}
\mathcal{P}^4\mcD_i \mathcal{A}
- \mathcal{A} \mathcal{P}^2\mcD_i\mathcal{P}^2
\,=\, \mathcal{A}  \Big(\frac{3}{2}\mathcal{A}+\Lambda   \mathcal{P}^2\Big) (\sigma\mathcal{P}_i- i  \mathring\epsilon_{ijk}\mathcal{P}^j Y^k )
\;.
\label{interm_A2}
\end{equation}
Combining \eq{interm_A2} and \eq{deriv_P2_0} we end up with the  equation,
\begin{eqnarray}
\Big(\mathcal{A}-\frac{\Lambda}{3}   \mathcal{P}^2\Big) \mathcal{P}^2\mcD_i \mathcal{A}
&=&\Big(\frac{7}{4}\mathcal{A}+\frac{\Lambda}{6}   \mathcal{P}^2\Big) \mathcal{A} \mcD_i \mathcal{P}^2
\;.
\label{de_A}
\end{eqnarray}
For
\begin{equation}
\mathcal{P}^2\ne 0\;, \quad
\mathcal{A}\mathcal{P}^{-2}\ne 0\;, \quad \mathcal{A}  \mathcal{P}^{-2}+\frac{2}{3}\Lambda \ne 0
\;,
\label{new_ineqs}
\end{equation}
this can be written as
\begin{eqnarray}
 \mcD_i\log \mathcal{P}^2
&=&
\frac{4}{3}\frac{\mathcal{A}\mathcal{P}^{-2}-\frac{\Lambda}{3}   }{\mathcal{A}\mathcal{P}^{-2}+\frac{2}{3}\Lambda }
\mcD_i \log(\mathcal{A}\mathcal{P}^{-2})
\label{rln_A_P}
\\
&=&
\mcD_i \log \frac{(\mathcal{A}\mathcal{P}^{-2} + \frac{2}{3}\Lambda)^2}{(\mathcal{A}\mathcal{P}^{-2})^{2/3}}
\;.
\label{de_A2}
\end{eqnarray}
This PDE can be integrated straightforwardly to obtain a relation of the desired form,
\begin{equation}
\mathcal{P}^2
\,=\,
\mu \frac{(\mathcal{A}\mathcal{P}^{-2} + \frac{2}{3}\Lambda)^2}{(\mathcal{A}\mathcal{P}^{-2})^{2/3}}
\;,
\quad \mu\in\mathbb{C}\setminus\{0\}
\;.
\label{de_A3}
\end{equation}
This equation holds whenever $\mathcal{S}^{(0)}_{\alpha\beta\mu\nu}|_{\Sigma}=0$.

Let us return to the equivalence issue concerning the various choices of $Q$.
%
It follows immediately from the computations  in \cite[Section~3.2]{kerr1}
 that
\begin{equation}
Q_{\mathcal{F}}|_{\Sigma} \,=\, Q_0|_{\Sigma}
\label{rel_QF_Q0}
\end{equation}
whenever $\mathcal{S}_{\alpha\beta\mu\nu}|_{\Sigma}=0$.
Note for this that  $Q=Q_{\mathcal{F}}$ can be derived algebraically from $\mathcal{S}_{\alpha\beta\mu\nu}=0$ without differentiation.

We employ \eq{de_A3} to express $\mathcal{F}^2|_{\Sigma}$ in terms of $Q\mathcal{F}^2|_{\Sigma}$, and finally in terms of $\mathcal{C}^2|_{\Sigma}$  (we set $\varkappa :=\mp3^{2/3}\sqrt{\frac{3}{2}} \,\mu^{-1}$),
\begin{eqnarray}
\mathcal{F}^2 |_{\Sigma}
&=&
\pm \Big(\frac{2}{3}\Big)^{1/6}\varkappa^{-1} \frac{( \ol Q_0\ol{\mathcal{F}}^2 - 4\Lambda )^2}{( \ol Q_0\ol{\mathcal{F}}^2)^{2/3}}
\label{F2_Q_relation}
\\
&\overset{\eq{rel_QF_Q0}}{=}& \pm \sqrt{\frac{3}{2}}\,\varkappa^{-1}(\ol{\mathcal{C}}^2)^{-1/3}\Big( \pm\sqrt{\ol{\mathcal{C}}^2} -  \sqrt{ \frac{32}{3}}\Lambda \Big)^2
\;.
\label{F2_C2_relation}
\end{eqnarray}
Thus
(recall that in our current setting where $\mathcal{S}_{\alpha\beta\mu\nu}|_{\Sigma}=0$, there is no freedom to choose $\pm$, cf.\ Propostion~ \ref{prop_Q}),
\begin{equation}
Q_{\mathcal{F}}|_{\Sigma}
\,=\,
\varkappa (\ol{\mathcal{C}}^2)^{5/6}\Big( \pm\sqrt{\ol{\mathcal{C}}^2} -  \sqrt{ \frac{32}{3}}\Lambda \Big)^{-2}
\,=\, Q_{\mathcal{C}}|_{\Sigma}
\;.
\label{QF_QC_relation}
\end{equation}
By \eq{relation_F_P}, \eq{expr_Q0} and \eq{dfn_cal_A}
we have
\begin{equation}
\mathcal{A}\mathcal{P}^{-2}\,=\, - \frac{1}{6}\ol Q_0\ol{\mathcal{F}}^2
\;,
\label{relation_A_P_Q_F}
\end{equation}
whence   \eq{new_ineqs} is equivalent to
\begin{eqnarray}
&& Q \mathcal{F}^2|_{\Sigma}\ne 0\;, \quad Q\mathcal{F}^2|_{\Sigma} -4\Lambda \ne 0
\label{ineq_calC_1_0}
\\
\Longleftrightarrow &&  \mathcal{C}^2|_{\Sigma} \,\ne \,0
\;, \quad
\pm\sqrt{\mathcal{C}^2}|_{\Sigma}   - \sqrt{\frac{32}{3}}\Lambda \,\ne \, 0
\;,
\label{ineq_calC_1}
\end{eqnarray}
and  \eq{F2_Q_relation}-\eq{QF_QC_relation}  are well-defined.
Note that it depends on the sign in \eq{QF_QC_relation} for which the MST vanishes, for which sign \eq{ineq_calC_1} (and the corresponding conditions below) need to hold.

In view of $Q_{\mathrm{ev}}$ let us compute the restriction of the Ernst potential to $\Sigma$.
\begin{eqnarray}
\mcD_i\chi |_{\Sigma} \,=\,\ol \chi_i &=& 2\ol X^{\alpha}\ol{\mathcal{F}}_{\alpha i}
\\
&=& 2\sigma \mathcal{P}_{ i} - 2i\mathring\epsilon_{ijk}\mathcal{P}^jY^{k}
\\
&\overset{\eq{deriv_P2_0}}{=}& \Big(\mathcal{A}\mathcal{P}^{-2} -\frac{\Lambda}{3}   \Big)^{-1}\mcD_i \mathcal{P}^2
\label{ernst1}
\;,
\end{eqnarray}
supposing that, in addition to \eq{new_ineqs},
\begin{equation}
\mathcal{A}\mathcal{P}^{-2} -\frac{\Lambda}{3}  \,\ne\, 0
\;.
\label{new_ineq0}
\end{equation}
Because of \eq{relation_A_P_Q_F} this is equivalent to
\begin{equation}
Q\mathcal{F}^2|_{\Sigma} + 2\Lambda \,\ne\, 0 \quad \Longleftrightarrow \quad  \pm\sqrt{\mathcal{C}^2} |_{\Sigma}  + \sqrt{\frac{8}{3}}\,\Lambda \,\ne\,  0
\;.
\label{ineq_calC_2}
\end{equation}
It follows from  \eq{de_A3} and  \eq{ernst1} that  
\begin{eqnarray}
\mcD_i\chi |_{\Sigma}&=&  \mu  \Big(\mathcal{A}\mathcal{P}^{-2} -\frac{\Lambda}{3}   \Big)^{-1} \mcD_i\Big( \frac{(\mathcal{A}\mathcal{P}^{-2} + \frac{2}{3}\Lambda)^2}{(\mathcal{A}\mathcal{P}^{-2})^{2/3}}\Big)
\\
&=& \mcD_i \Big( 4\mu \frac{  \mathcal{A}\mathcal{P}^{-2}- \frac{\Lambda}{3} }{(\mathcal{A}\mathcal{P}^{-2})^{2/3}}\Big)
\\
&=& \mcD_i \Big( 4\mathcal{P}^2 \frac{  \mathcal{A}\mathcal{P}^{-2}- \frac{\Lambda}{3} }{(\mathcal{A}\mathcal{P}^{-2} + \frac{2}{3}\Lambda)^2}\Big)
\;,
\end{eqnarray}
and thus, for an appropriate choice of the $\chi$-constant,
\begin{equation}
\chi |_{\Sigma} \,=\, 4\mathcal{P}^2 \frac{  \mathcal{A}\mathcal{P}^{-2}- \frac{\Lambda}{3} }{(\mathcal{A}\mathcal{P}^{-2} + \frac{2}{3}\Lambda)^2}
\overset{\eq{relation_A_P_Q_F}}{=}  6 \mathcal{F}^2 \frac{   Q_0\mathcal{F}^2 +2\Lambda  }{(  Q_0\mathcal{F}^2- 4\Lambda)^2}
\;.
\label{eqn_chi_A_P}
\end{equation}
Given $\chi|_{\Sigma}$ and $\mathcal{F}^2|_{\Sigma}$ this can be read as a \emph{quadratic} equation for $Q_0$, and it is precisely the same equation which is satisfied by $Q_{\mathrm{ev}}$ \cite{kerr1}.
Consequently,  one of its solutions satisfies
\begin{equation}
Q_{\mathrm{ev}}|_{\Sigma} \,=\, Q_0|_{\Sigma}
\end{equation}
for an appropriate choice of the  $\chi$-constant.
When working with $Q_{\mathrm{ev}}$ we also need to require $Q\mathcal{F}^2 + 8\Lambda \ne0$ \cite{kerr1}, or, equivalently,
\begin{equation}
\pm \sqrt{\mathcal{C}^2} |_{\Sigma}  + \sqrt{\frac{128}{3}}\,\Lambda \,\ne\,  0
\;.
\label{ineq_calC_3}
\end{equation}

We have
\begin{equation}
\mathcal{C}^2|_{\Sigma}\,\equiv\, \mathcal{C}^{\alpha\beta\mu\nu}  \mathcal{C}_{\alpha\beta\mu\nu} |_{\Sigma}
\,=\,  16\,\mathcal{E}^2
\;,
\end{equation}
whence, \eq{ineq_calC_1}, and \eq{ineq_calC_2} \& \eq{ineq_calC_3}
are equivalent to
\begin{equation}
 \mathcal{E}^2|_{\Sigma} \,\ne \,0
\;, \quad
\pm\sqrt{\mathcal{E}^2}|_{\Sigma}   - \sqrt{\frac{2}{3}}\Lambda \,\ne \, 0
\;,
\label{all_ineqs_Cauchy1}
\end{equation}
and
\begin{equation}
\pm\sqrt{\mathcal{E}^2} |_{\Sigma}  + \sqrt{\frac{1}{6}}\,\Lambda \,\ne\,  0
\;,
\quad
\pm \sqrt{\mathcal{E}^2} |_{\Sigma}  + \sqrt{\frac{8}{3}}\,\Lambda \,\ne\,  0
\;.
\label{all_ineqs_Cauchy2}
\end{equation}
%

We have established the following lemma which is the Cauchy-surface-equivalent of  Proposition~\ref{prop_Q}:
\begin{lemma}
\label{lemma_equivalence_Qs}
Consider  a Killing initial data set, i.e.\ a tuple  $(\Sigma, h_{ij}, K_{ij}, \sigma, Y^i)$ which satisfies the vacuum constraints and the KID equations.
Assume that the restriction to $\Sigma$  of  the MST associated to the KVF $X$ generated by $(\sigma,Y^i)$ vanishes  for some function $Q$.
Assume further that the conditions
\eq{all_ineqs_Cauchy1} hold.
\begin{enumerate}
\item[(i)]
%
Then  $Q_0$, $Q_{\mathcal{F}}$ and $Q_{\mathcal{C}}$ are regular near $\Sigma$, and
there exists
a constant $\varkappa \in\mathbb{C}\setminus\{0\}$ and a choice of $\pm$
such that
\begin{equation}
Q|_{\Sigma}\,=\, Q_0 |_{\Sigma}\,=\, Q_{\mathcal{F}}|_{\Sigma} \,=\,  Q_{\mathcal{C}}|_{\Sigma}
\,.
\end{equation}
\item[(ii)]
Assume that, in addition, the inequalities \eq{all_ineqs_Cauchy2}  are valid.
%
Then, the function $Q_{\mathrm{ev}}$ is   regular near $\Sigma$ as well, and
there exists an Ernst potential $\chi$ for $ Q_{\mathrm{ev}}$ such that
\begin{equation}
Q|_{\Sigma}\,=\, Q_0|_{\Sigma} \,=\, Q_{\mathrm{ev}}|_{\Sigma}\,=\, Q_{\mathcal{F}} |_{\Sigma}\,=\,  Q_{\mathcal{C}}|_{\Sigma}
\,.
\end{equation}
\end{enumerate}
\end{lemma}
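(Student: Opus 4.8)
The plan is simply to collect, in the right order, the relations established in the computations that precede the statement; once the inequalities \eq{all_ineqs_Cauchy1}--\eq{all_ineqs_Cauchy2} are matched against the denominators appearing in the various definitions of $Q$, nothing new is needed. First I would observe that the hypotheses already force $\mathcal{P}^2\ne 0$ on $\Sigma$: by \eq{relation_F_P}--\eq{unproper_MST}, vanishing of $\mathcal{S}_{titj}|_\Sigma$ gives $\mathcal{E}_{ij}=q(\mathcal{P}_i\mathcal{P}_j)\breve{}$, hence $\mathcal{E}^2=\frac{2}{3}q^2(\mathcal{P}^2)^2$, so the first inequality in \eq{all_ineqs_Cauchy1} forces both $q\ne 0$ and $\mathcal{P}^2\ne 0$. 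Then Lemma~\ref{some_intermediate} applies, giving $q=Q_0|_\Sigma$ and \eq{main_condition}. Moreover $\mathcal{F}^2|_\Sigma=-4\mathcal{P}^2\ne 0$ and $\mathcal{C}^2|_\Sigma=16\mathcal{E}^2\ne 0$, and the second inequality in \eq{all_ineqs_Cauchy1} is exactly the one removing the remaining zero in the denominator of $Q_{\mathcal C}$; hence $Q_0$, $Q_{\mathcal F}$, $Q_{\mathcal C}$ are regular in a neighbourhood of $\Sigma$ by continuity, the square roots being chosen once and for all as in the remark following \eq{yet_another_Q_2}, since their radicands are nowhere vanishing there.

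The technical core, which I expect to be the main obstacle, is the derivation of the purely algebraic relation \eq{de_A3} between $\mathcal{A}:=\mathcal{P}^k\mathcal{P}^l\mathcal{E}_{kl}$ and $\mathcal{P}^2$. Using the KID equations \eq{KID1}--\eq{KID2} one computes $\mcD_i\mathcal{P}_j$ as in \eq{deriv_P}, hence \eq{div_P} and \eq{deriv_P2_0}; then one takes the divergence of \eq{main_condition}, employing the constraint identity \eq{constraints_E} together with \eq{deriv_P}--\eq{deriv_P2_0}, and contracts the outcome with $\mathcal{P}^i$ --- this kills the component of $\mcD_i\mathcal{A}$ transverse to $\mathcal{P}_i$ and yields the scalar identity \eq{de_A}. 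Under the inequalities \eq{new_ineqs}, which by \eq{relation_A_P_Q_F} coincide with \eq{all_ineqs_Cauchy1}, equation \eq{de_A} becomes an exact logarithmic-gradient equation, \eq{de_A2}, and integrates to \eq{de_A3} with some $\mu\in\mathbb{C}\setminus\{0\}$. The delicate point is the bookkeeping in the divergence: one must use the vacuum constraints in the forms \eq{constraint1}--\eq{constraint2} and \eq{constraints_E}, and carefully discard the many terms that cancel via the Ricci and second-fundamental-form identities already isolated inside \eq{deriv_P}.

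For part~(i) I would then use that $Q_{\mathcal F}|_\Sigma=Q_0|_\Sigma$ --- immediate from \cite[Section~3.2]{kerr1}, since $Q_{\mathcal F}$ is an \emph{algebraic} consequence of $\mathcal{S}_{\alpha\beta\mu\nu}|_\Sigma=0$ --- together with \eq{de_A3} to rewrite $\mathcal{F}^2|_\Sigma$ first in terms of $Q_0\mathcal{F}^2|_\Sigma$ (equation \eq{F2_Q_relation}) and then in terms of $\mathcal{C}^2|_\Sigma$ (equation \eq{F2_C2_relation}), for the choice $\varkappa:=\mp 3^{2/3}\sqrt{3/2}\,\mu^{-1}$. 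Substituting \eq{F2_C2_relation} back into the definition of $Q_{\mathcal F}$ reproduces verbatim the defining expression \eq{yet_another_Q_2} of $Q_{\mathcal C}$, which is \eq{QF_QC_relation}; hence $Q|_\Sigma=Q_0|_\Sigma=Q_{\mathcal F}|_\Sigma=Q_{\mathcal C}|_\Sigma$.

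For part~(ii) I would compute the restriction of the Ernst potential: from $\chi_\mu=2X^\alpha\mathcal{F}_{\alpha\mu}$ and \eq{deriv_P2_0} one gets $\mcD_i\chi|_\Sigma=(\mathcal{A}\mathcal{P}^{-2}-\frac{\Lambda}{3})^{-1}\mcD_i\mathcal{P}^2$, legitimate once $\mathcal{A}\mathcal{P}^{-2}-\frac{\Lambda}{3}\ne 0$, which by \eq{relation_A_P_Q_F} is the first inequality in \eq{all_ineqs_Cauchy2}; inserting \eq{de_A3} and integrating gives, up to the $\chi$-constant, formula \eq{eqn_chi_A_P}. With $\chi|_\Sigma$ and $\mathcal{F}^2|_\Sigma$ regarded as given, \eq{eqn_chi_A_P} is a quadratic equation for $Q_0$, and it is exactly the quadratic whose roots are $Q_{\mathrm{ev}}$ in \cite{kerr1}; so a suitable choice of the $\chi$-constant makes $Q_{\mathrm{ev}}|_\Sigma=Q_0|_\Sigma$, which together with part~(i) yields the full chain $Q|_\Sigma=Q_0|_\Sigma=Q_{\mathrm{ev}}|_\Sigma=Q_{\mathcal F}|_\Sigma=Q_{\mathcal C}|_\Sigma$. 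Regularity of $Q_{\mathrm{ev}}$ near $\Sigma$ additionally requires $Q\mathcal{F}^2+8\Lambda\ne 0$ \cite{kerr1}, which by \eq{relation_A_P_Q_F} and $\mathcal{C}^2|_\Sigma=16\mathcal{E}^2$ is the second inequality in \eq{all_ineqs_Cauchy2}; as a cross-check, vanishing of $\mathcal{S}^{(\mathrm{ev})}$ on $\Sigma$ propagates off $\Sigma$ by Proposition~\ref{prop_evolution}, so Proposition~\ref{prop_Q} reconfirms the same conclusion.
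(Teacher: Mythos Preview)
Your proposal is correct and follows essentially the same route as the paper: the paper's argument is precisely the sequence \eq{deriv_P}--\eq{de_A3} to obtain the algebraic relation between $\mathcal{A}$ and $\mathcal{P}^2$, then \eq{rel_QF_Q0}--\eq{QF_QC_relation} for part~(i), and \eq{ernst1}--\eq{eqn_chi_A_P} together with the quadratic-equation observation for part~(ii), with the inequalities \eq{all_ineqs_Cauchy1}--\eq{all_ineqs_Cauchy2} matched to \eq{new_ineqs}, \eq{new_ineq0}, \eq{ineq_calC_3} via \eq{relation_A_P_Q_F}. Your preliminary observation that $\mathcal{E}^2=\tfrac{2}{3}q^2(\mathcal{P}^2)^2$ forces $\mathcal{P}^2\ne 0$ from $\mathcal{E}^2\ne 0$ is a clean way to justify invoking Lemma~\ref{some_intermediate}, and the final cross-check via Propositions~\ref{prop_evolution} and~\ref{prop_Q} is a nice addition not spelled out in the paper.
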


Whenever  \eq{all_ineqs_Cauchy1}-\eq{all_ineqs_Cauchy2} hold,
 the evolution equations \eq{phys_ev} for $\mathcal{S}^{(\mathrm{ev})}_{\alpha\beta\mu\nu}$ are regular, at least sufficiently close to $\Sigma$,
and imply by standard result for symmetric hyperbolic systems
 that the MST vanishes in some neighborhood of $\Sigma$:

\begin{corollary}
\label{cor_vanishing_MST}
Let  $(\Sigma, h_{ij}, K_{ij}, \sigma, Y^i)$  be vacuum KIDs which satisfy \eq{all_ineqs_Cauchy1}-\eq{all_ineqs_Cauchy2}.
The MST of the emerging  space-time  $(\mcM,g_{\mu\nu})$ associated to the KVF generated by $(\sigma, Y^i)$  vanishes in some neighborhood of $\Sigma$ for some function $Q$  if and only if
$\mathcal{S}^{(\mathcal{C})}_{\alpha\beta\mu\nu}|_{\Sigma}=0$ (or
$\mathcal{S}^{(0)}_{\alpha\beta\mu\nu}|_{\Sigma}=0$ or
$\mathcal{S}^{(\mathcal{F})}_{\alpha\beta\mu\nu}|_{\Sigma}=0$ or
$\mathcal{S}^{(\mathrm{ev})}_{\alpha\beta\mu\nu}|_{\Sigma}=0$
).
\end{corollary}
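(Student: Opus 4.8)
The plan is to reduce the statement to the evolution equations of Proposition~\ref{prop_evolution} for the choice $Q=Q_{\mathrm{ev}}$, using Lemma~\ref{lemma_equivalence_Qs} to pass freely between the four conditions on $\Sigma$. First I would record that, under the standing hypotheses \eq{all_ineqs_Cauchy1}-\eq{all_ineqs_Cauchy2}, the four conditions $\mathcal{S}^{(\mathcal{C})}_{\alpha\beta\mu\nu}|_{\Sigma}=0$, $\mathcal{S}^{(0)}_{\alpha\beta\mu\nu}|_{\Sigma}=0$, $\mathcal{S}^{(\mathcal{F})}_{\alpha\beta\mu\nu}|_{\Sigma}=0$ and $\mathcal{S}^{(\mathrm{ev})}_{\alpha\beta\mu\nu}|_{\Sigma}=0$ are mutually equivalent: by \eq{unproper_MST} the restriction of any MST to $\Sigma$ equals $\mathcal{E}_{ij}-q\,(\mathcal{P}_i\mathcal{P}_j)\breve{}$ with $q=Q|_{\Sigma}$, and this captures all independent components of the MST on $\Sigma$, so it depends on $Q$ only through $q$. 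If it vanishes for one of the four choices, then Lemma~\ref{lemma_equivalence_Qs} (with Remark~\ref{remark_weaker} for the sign bookkeeping) yields $Q_0|_{\Sigma}=Q_{\mathrm{ev}}|_{\Sigma}=Q_{\mathcal{F}}|_{\Sigma}=Q_{\mathcal{C}}|_{\Sigma}$ for an appropriate constant $\varkappa$, sign $\pm$, and $\chi$--constant, hence it vanishes for all four. It therefore suffices to prove the asserted equivalence with $\mathcal{S}^{(\mathrm{ev})}_{\alpha\beta\mu\nu}|_{\Sigma}=0$.

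For the implication ``$\Leftarrow$'', assume $\mathcal{S}^{(\mathrm{ev})}_{\alpha\beta\mu\nu}|_{\Sigma}=0$. As established in \eq{ineq_calC_1}, \eq{ineq_calC_2}, \eq{ineq_calC_3} together with the identity $\mathcal{C}^2|_{\Sigma}=16\,\mathcal{E}^2$, the inequalities \eq{all_ineqs_Cauchy1}-\eq{all_ineqs_Cauchy2} are exactly the restrictions to $\Sigma$ of the (one-signed, cf.\ Remark~\ref{remark_weaker}) $\mathcal{C}^2$--inequalities required in Proposition~\ref{prop_evolution}; being open conditions they persist on a neighborhood of $\Sigma$ by continuity, so there $Q_{\mathrm{ev}}$ is regular (Lemma~\ref{lemma_equivalence_Qs}(ii)) and \eq{phys_ev} is a regular linear homogeneous symmetric hyperbolic system. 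Standard uniqueness for such systems with vanishing Cauchy data then forces $\mathcal{S}^{(\mathrm{ev})}_{\alpha\beta\mu\nu}\equiv0$ on the domain of dependence of $\Sigma$, which is a neighborhood of $\Sigma$; thus the MST of $(\mcM,g_{\mu\nu})$ vanishes there for $Q=Q_{\mathrm{ev}}$.

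For the implication ``$\Rightarrow$'', assume the MST vanishes on a neighborhood of $\Sigma$ for some $Q$. The inequalities \eq{all_ineqs_Cauchy1}-\eq{all_ineqs_Cauchy2}, i.e.\ the $\mathcal{C}^2$--inequalities \eq{ineqs_C2_0}-\eq{ineqs_C2_1}, again hold on a (possibly smaller) neighborhood by continuity, so Proposition~\ref{prop_Q} (with Remark~\ref{remark_weaker}) applies there and gives $Q=Q_{\mathcal{C}}$; consequently $\mathcal{S}^{(\mathcal{C})}=\mathcal{C}+Q_{\mathcal{C}}\mathcal{Q}=\mathcal{C}+Q\mathcal{Q}=\mathcal{S}\equiv0$ on that neighborhood, and in particular $\mathcal{S}^{(\mathcal{C})}_{\alpha\beta\mu\nu}|_{\Sigma}=0$. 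By the equivalence of the four conditions established in the first step, all of them hold, which finishes the proof.

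The step I expect to require the most care is organizational rather than conceptual: one must make sure that the sign $\pm$ forced by the condition $\mathcal{S}|_{\Sigma}=0$ is precisely the one for which the weaker inequalities of Remark~\ref{remark_weaker} have been assumed, so that the square roots defining $Q_{\mathcal{F}}$, $Q_{\mathcal{C}}$ and $Q_{\mathrm{ev}}$ are smooth and the denominators in \eq{phys_ev} (in particular $Q_{\mathrm{ev}}\mathcal{F}^2+8\Lambda$) stay nonzero on a genuine neighborhood of $\Sigma$; and one must invoke the symmetric hyperbolic uniqueness theorem in the form that propagates the zero solution off $\Sigma$, not merely along it.
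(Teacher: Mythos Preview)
Your proof is correct and follows essentially the same approach as the paper: use Lemma~\ref{lemma_equivalence_Qs} to identify the four restricted $Q$'s on $\Sigma$, then invoke the regular symmetric hyperbolic system \eq{phys_ev} of Proposition~\ref{prop_evolution} (with the inequalities extended off $\Sigma$ by continuity) to propagate $\mathcal{S}^{(\mathrm{ev})}=0$, and handle the converse via Proposition~\ref{prop_Q}. The paper compresses all of this into a single sentence preceding the corollary; your write-up simply fills in the details.
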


The main advantage of the equation  $\mathcal{S}^{(\mathcal{C})}_{\alpha\beta\mu\nu}|_{\Sigma}=0$, as e.g.\ opposed to $\mathcal{S}^{(0)}_{\alpha\beta\mu\nu}|_{\Sigma}=0$, cf.\  \eq{main_condition}, is that it can be solved for $\mathcal{P}_i$ (supposing that a solution exists after all).

\section{Construction of solutions to the KID\,equations}
\label{section_3}

\subsection{Candidates for solving the KID equations}

Similar to the proceeding in \cite{kerr1} we do not want to assume that Killing initial data  $(\Sigma,h_{ij},K_{ij},\sigma,Y^i)$  have been given but only Cauchy data $(\Sigma, h_{ij}, K_{ij})$.
Indeed, it is  a grievance of  Corollary~\ref{cor_vanishing_MST}  that, given  $(\Sigma,h_{ij},K_{ij}$), it is a  non-trivial and non-algorithmic  issue
to check whether there exists $(\sigma,Y^i)$ complementing them to Killing initial data. Only then, it is straightforward to check whether $\mathcal{S}_{\mu\nu\sigma\rho}|_{\Sigma}=0$
holds.
We therefore intend to derive conditions from the equation $\mathcal{S}_{\mu\nu\sigma\rho}|_{\Sigma}=0$ which impose restrictions on  $(\sigma,Y^i)$.
As in the space-time case \cite{kerr1} it turns out that
 up to rescaling  only one candidate  remains.

\subsubsection{Candidate fields}

Let us assume for the time being that we have been given a $\Lambda$-vacuum space-time which admits a KVF for which the associated MST vanishes, and which moreover satisfies $\mathcal{F}^2|_{\Sigma}\ne0$ and $Q\mathcal{F}^2|_{\Sigma} +2\Lambda\ne0$, or, equivalently,
\begin{equation}
\mathcal{P}^2\,\ne \, 0\;,\quad    q\mathcal{P}^2 - \frac{\Lambda}{2}\,\ne \, 0
\;.
\label{spec_ineqs}
\end{equation}
We will collect a  number of necessary conditions, which need to be satisfied in any such space-time. Among other things,
they will provide candidates for $\sigma$ and $Y^i$.

It has been shown in \cite{kerr1} that in vacuum space-times with vanishing MST and in which the space-time analog of \eq{spec_ineqs}  holds,
the self-dual Killing form $\mathcal{F}_{\mu\nu}$ and the function $Q$ necessarily satisfy the equation%
%
\footnote{
The equation can be integrated to express $\mathcal{F}^2$ in terms of $Q\mathcal{F}^2$. The restriction of this  solution to $\Sigma$ recovers \eq{F2_Q_relation}.
}
\begin{equation}
\nabla_{\mu} Q + \frac{1}{4} \frac{Q\mathcal{F}^2 + 20 \Lambda}{Q\mathcal{F}^2 + 2\Lambda} Q\nabla_{\mu}\log\mathcal{F}^2 \,=\, 0
\;.
\label{PDE_Q}
\end{equation}
For vanishing MST the different choices for $Q$ are equivalent, whence we have    written $Q$ without any  subscript.
Let us consider its restriction to the initial surface $\Sigma$, where it suffices to take the spatial components into account, and complement it by
\eq{main_one}
\begin{eqnarray}
 \mathcal{E}_{ij} - q (\mathcal{P}_i\mathcal{P}_j)\breve{} &=& 0
\;,
\label{gen_vanishing_MST}
\\
 \mcD_i q + \frac{1}{4} \frac{q\mathcal{P}^2 -5 \Lambda}{q\mathcal{P}^2 - \frac{\Lambda}{2}} q \mcD_i \log\mathcal{P}^2  &=& 0
\;.
\label{q_PDE}
\end{eqnarray}
(Alternatively, \eq{q_PDE} may be obtained 
by  combining
 \eq{expr_Q0} and \eq{rln_A_P}.)
For the computations below, \eq{q_PDE} will often be needed in the form
\begin{equation}
 \mcD_i (q \mathcal{P}^2)\,=\, \frac{3}{4} \frac{q\mathcal{P}^2 + \Lambda}{q\mathcal{P}^2 - \frac{\Lambda}{2}} q \mcD_i \mathcal{P}^2
\;.
\label{PDE_QB}
\end{equation}
\begin{lemma}
\label{lemma_dq_qC}
Assume that \eq{spec_ineqs} and \eq{gen_vanishing_MST}  hold. Then \eq{q_PDE}  holds if and only if $q=Q_{\mathcal{C}}|_{\Sigma}$.
\end{lemma}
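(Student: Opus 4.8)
Proof proposal (plan). The plan is to prove the equivalence through a short chain of reformulations, reusing the algebraic identities established in Section~\ref{section_2}. The common starting point: under \eq{gen_vanishing_MST} with $\mathcal{P}^2\ne0$, contracting \eq{gen_vanishing_MST} with $\mathcal{P}^i\mathcal{P}^j$ gives $\mathcal{A}:=\mathcal{P}^k\mathcal{P}^l\mathcal{E}_{kl}=\tfrac23 q(\mathcal{P}^2)^2$, so $q=Q_0|_\Sigma$ by \eq{expr_Q0}, and a one-line trace computation yields $\mathcal{E}^2|_\Sigma=\mathcal{E}^{ij}\mathcal{E}_{ij}=\tfrac23 q^2(\mathcal{P}^2)^2$; together with \eq{relation_F_P} ($\mathcal{F}^2|_\Sigma=-4\mathcal{P}^2$) and $\mathcal{C}^2|_\Sigma=16\mathcal{E}^2$ this turns $\mathcal{F}^2|_\Sigma$, $\mathcal{C}^2|_\Sigma$, $q\mathcal{F}^2|_\Sigma-4\Lambda=-4(q\mathcal{P}^2+\Lambda)$ and $q\mathcal{F}^2|_\Sigma+2\Lambda=-4(q\mathcal{P}^2-\tfrac{\Lambda}{2})$ into explicit functions of $q$ and $\mathcal{P}^2$, all nonvanishing under \eq{spec_ineqs} (together with the standing nonvanishing of $\mathcal{C}^2$ and of $q\mathcal{F}^2-4\Lambda$, which via \eq{relation_A_P_Q_F} amount to the conditions $\mathcal{A}\mathcal{P}^{-2}\ne0$, $\mathcal{A}\mathcal{P}^{-2}+\tfrac23\Lambda\ne0$ of \eq{new_ineqs}). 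In particular \eq{relation_A_P_Q_F}, $\mathcal{A}\mathcal{P}^{-2}=-\tfrac16\ol Q_0\ol{\mathcal{F}}^2=\tfrac23 q\mathcal{P}^2$, is available.

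First I would show \eq{q_PDE}$\iff$\eq{rln_A_P}. This is direct substitution: with $\mathcal{A}\mathcal{P}^{-2}=\tfrac23 q\mathcal{P}^2$ one has $\mathcal{A}\mathcal{P}^{-2}-\tfrac{\Lambda}{3}=\tfrac23(q\mathcal{P}^2-\tfrac{\Lambda}{2})$, $\mathcal{A}\mathcal{P}^{-2}+\tfrac23\Lambda=\tfrac23(q\mathcal{P}^2+\Lambda)$ and $\mcD_i\log(\mathcal{A}\mathcal{P}^{-2})=\mcD_i\log q+\mcD_i\log\mathcal{P}^2$; inserting these into \eq{rln_A_P} and solving for $\mcD_i q$ reproduces \eq{q_PDE} verbatim (equivalently, with $\mathcal{F}^2|_\Sigma=-4\mathcal{P}^2$, \eq{q_PDE} is just the spatial restriction to $\Sigma$ of the space-time equation \eq{PDE_Q}). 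Next, \eq{rln_A_P} is an exact-form equation: its right-hand side is $\mcD_i\log\!\big[(\mathcal{A}\mathcal{P}^{-2}+\tfrac23\Lambda)^2(\mathcal{A}\mathcal{P}^{-2})^{-2/3}\big]$, as recorded in \eq{de_A2}, so on the connected surface $\Sigma$ it integrates to \eq{de_A3} with a genuine, globally constant $\mu\in\mathbb{C}\setminus\{0\}$, and conversely differentiating \eq{de_A3} recovers \eq{rln_A_P}; hence \eq{q_PDE}$\iff$\eq{de_A3}. Finally, substituting \eq{relation_A_P_Q_F} and \eq{rel_QF_Q0} (which holds algebraically, without differentiation, once \eq{gen_vanishing_MST} does) into \eq{de_A3} rewrites it first as \eq{F2_Q_relation} and then, using $\mathcal{C}^2|_\Sigma=16\mathcal{E}^2$ and $Q_{\mathcal{F}}|_\Sigma=\pm\sqrt{3/2}\,\mathcal{F}^{-2}\sqrt{\mathcal{C}^2}$, as \eq{F2_C2_relation}; rearranging \eq{F2_C2_relation} is precisely $Q_0|_\Sigma=\varkappa(\mathcal{C}^2)^{5/6}\big(\pm\sqrt{\mathcal{C}^2}-\sqrt{32/3}\,\Lambda\big)^{-2}=Q_{\mathcal{C}}|_\Sigma$, as in \eq{QF_QC_relation}, with $\varkappa$ fixed by $\mu$ through $\varkappa=\mp3^{2/3}\sqrt{3/2}\,\mu^{-1}$. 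Reading the chain forward gives \eq{q_PDE}$\Rightarrow q=Q_{\mathcal{C}}|_\Sigma$ (with $\varkappa$ and the sign chosen appropriately), and reading it backward — the easy direction, a single differentiation of an algebraic identity — gives the converse.

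The only genuinely delicate points are bookkeeping ones, and I expect the sign/constant matching to be the main (though still routine) obstacle. One must track the cube roots and sign ambiguities so that the relation $\sqrt{\mathcal{C}^2}|_\Sigma=\pm\sqrt{32/3}\,q\mathcal{P}^2$ is fed into $Q_{\mathcal{C}}$ with the sign for which $\pm\sqrt{\mathcal{C}^2}|_\Sigma-\sqrt{32/3}\,\Lambda$ is nonzero; comparing \eq{de_A3} — where the cosmological term enters as $+\tfrac23\Lambda$ — with the denominator of $Q_{\mathcal{C}}$ shows that when $\Lambda\ne0$ it is forced to be the sign giving $-\sqrt{32/3}(q\mathcal{P}^2+\Lambda)$, and this is exactly where the nonvanishing of $q\mathcal{F}^2|_\Sigma-4\Lambda$ (i.e.\ $\mathcal{A}\mathcal{P}^{-2}+\tfrac23\Lambda\ne0$) is used, both to define $Q_{\mathcal{C}}|_\Sigma$ and to carry out the integration of \eq{rln_A_P}. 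For $\Lambda=0$ the whole equivalence degenerates to $q^4\mathcal{P}^2=\mathrm{const}$ versus $q^3\,q\mathcal{P}^2=\mathrm{const}$ and no sign issue arises. Everything else is substitution plus that one elementary integration.
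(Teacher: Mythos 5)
Your proposal is correct and follows essentially the same route as the paper: the paper's proof rewrites \eq{q_PDE} as the exact-form equation $\mcD_i\log\big((q\mathcal{P}^2)^{-2/3}(q\mathcal{P}^2+\Lambda)^2\big)=\mcD_i\log\mathcal{P}^2$, integrates it, and substitutes $q\mathcal{P}^2=\mp\sqrt{3/32}\,\sqrt{\mathcal{C}^2}$ to land on $Q_{\mathcal{C}}|_{\Sigma}$, which is exactly your chain \eq{q_PDE}$\iff$\eq{rln_A_P}$\iff$\eq{de_A3}$\iff$\eq{QF_QC_relation} expressed in the variable $\mathcal{A}\mathcal{P}^{-2}=\tfrac23 q\mathcal{P}^2$ instead of $q\mathcal{P}^2$. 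Your explicit tracking of the auxiliary nonvanishing conditions $q\mathcal{P}^2\ne0$ and $q\mathcal{P}^2+\Lambda\ne0$ (needed for the logarithms and for $Q_{\mathcal{C}}$ to be defined) is slightly more careful than the paper's terse argument, but the substance is identical.
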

\begin{proof}
Equation  \eq{q_PDE}   is equivalent to
%
\begin{equation}
\mcD_i\log\Big((q\mathcal{P}^2)^{-2/3} (q\mathcal{P}^2  + \Lambda)^2 \Big)\,=\, \mcD_i\log \mathcal{P}^2
\label{int_q_eqn}
\;.
\end{equation}
It follows from \eq{gen_vanishing_MST} that
$
q\mathcal{P}^2= \mp \sqrt{\frac{3}{32}}\sqrt{\mathcal{C}^2}
$,
and, after integration,  we deduce that \eq{int_q_eqn} is equivalent to
\begin{equation}
q  \,=\,\varkappa (\mathcal{C}^2)^{5/6}\Big(\pm \sqrt{\mathcal{C}^2}
-\sqrt{\frac{32}{3}}\,\Lambda\Big)^{-2}
\,=\, Q_{\mathcal{C}}|_{\Sigma}
\;,
\label{another_relation_q}
\end{equation}
as claimed.
\qed
\end{proof}

Our aim is  to derive candidate fields for $\sigma$ and $Y^i$  in terms of $\mathcal{P}_i$ and $q$.
Recall \eq{deriv_P}, whose derivation  required the KID equations. With \eq{gen_vanishing_MST} it becomes
\begin{equation}
\mcD_i \mathcal{P}_j
\,=\, \sigma\Big( q \mathcal{P}_i\mathcal{P}_j -\frac{1}{3}(q\mathcal{P}^2 + \Lambda) h_{ij}\Big)  +i \mathring\epsilon_{ik}{}^{l}\Big(q \mathcal{P}_j\mathcal{P}_l -\frac{1}{3}(q\mathcal{P}^2 + \Lambda) h_{jl}\Big)Y^k
+ i\mathring \epsilon_{jkl}K_{i}{}^{k}\mathcal{P}^l
\;.
\label{rep_deriv_P}
\end{equation}
Contraction with $\mathcal{P}^j$ yields
\begin{equation}
\mcD_i \mathcal{P}^2
\,=\, \frac{4}{3}\sigma\Big(  q \mathcal{P}^2-\frac{\Lambda}{2} \Big)\mathcal{P}_i
 + \frac{4}{3} i \mathring\epsilon_{i}{}^{kl}\Big(q \mathcal{P}^2-\frac{\Lambda}{2} \Big) Y_k\mathcal{P}_l
\;.
\label{deriv_P2}
\end{equation}
Contraction with $\mathcal{P}^i$ provides an expression for $\sigma$,
%
%
%
%
\begin{equation}
 \sigma   \,=\,
 \frac{3}{4} \Big(  q\mathcal{P}^2 - \frac{\Lambda}{2}\Big)^{-1}\mathcal{P}^i \mcD_i\log \mathcal{P}^2
\;.
\label{cand_sigma}
\end{equation}
An application of   $\mathring\epsilon_{pq}{}^i$  to \eq{deriv_P2} and   relabeling indices gives
\begin{equation}
  Y_{[k}\mathcal{P}_{l]}
\,=\, \frac{i}{2}\sigma\mathring\epsilon_{kl}{}^m\mathcal{P}_m
-\frac{3}{8}i \Big(  q \mathcal{P}^2-\frac{\Lambda}{2} \Big)^{-1}\mathring\epsilon_{kl}{}^m\mcD_m \mathcal{P}^2
\;.
\label{anti_P_Y}
\end{equation}
%
%
%
We insert  \eq{anti_P_Y} into \eq{rep_deriv_P} to  obtain the useful relation
\begin{equation}
\mcD_i \mathcal{P}_j
\,=\, \frac{1}{3}(q\mathcal{P}^2 + \Lambda)\Big(   i \mathring\epsilon_{ijk}Y^k  -\sigma h_{ij}   \Big)
+ \frac{3}{4}
\frac{ q\mathcal{P}^2}{ q\mathcal{P}^2-\frac{\Lambda}{2} }\mathcal{P}_j\mcD_i \log\mathcal{P}^2
+ i\mathring \epsilon_{jkl}K_{i}{}^{k}\mathcal{P}^l
\;.
\label{rep_deriv_P2}
\end{equation}
Supposing that
\begin{equation}
q\mathcal{P}^2 + \Lambda \,\ne\,0
\;,
\label{spec_ineqs2}
\end{equation}
its anti-symmetric part yields  an equation which can be solved for $Y$,
\begin{equation}
Y_i\,=\,
-\frac{3}{2} \Big( q\mathcal{P}^{2} + \Lambda\Big)^{-1}\Big[\frac{3}{4} i\mathring\epsilon_i{}^{kl} \frac{ q\mathcal{P}^{2}}{  q\mathcal{P}^{2} - \frac{\Lambda}{2}}\mathcal{P}_k\mcD_l\log\mathcal{P}^2
+ i \mathcal{Q}_i
 +K_{i}{}^{k}\mathcal{P}_{k}-  K \mathcal{P}_{i}
\Big]
\;.
\label{candidate_field_Y}
\end{equation}
Whenever a $\Lambda$-vacuum space-time  with 
\eq{spec_ineqs} and  \eq{spec_ineqs2}  admits a KVF such that the associated MST vanishes, the corresponding KIDs
\emph{necessarily}  need to satisfy \eq{cand_sigma}
and \eq{candidate_field_Y} where $(q,\mathcal{P}_i)$ solve \eq{gen_vanishing_MST}-\eq{q_PDE} (cf.\ Proposition~\ref{interm_prop_1} below).

We would like to gain some  insight under which conditions the candidates  \eq{cand_sigma}
and \eq{candidate_field_Y} for $\sigma$ and $Y^i$ do  provide a solution of the KID equations.
For this purpose in turns out to be fruitful to derive a couple of relations between  the co-vector field $\mathcal{P}$   and the  function~$q$.

Of course, in general, there is no reason why $(\sigma,Y^i)$ should be real. As in \cite{kerr1} this does not cause any problems,
and we can enlarge our space-times of interest to those which admit a \emph{complex} KVF whose associated MST vanishes.

\subsubsection{Necessary conditions on $\mathcal{P}$}

Let us  compute the symmetric part of \eq{rep_deriv_P2} which provides a useful relation satisfied by $\mathcal{P}$ which does not involve $Y$,
\begin{equation}
\mcD_{(i} \mathcal{P}_{j)}
\,=\,   -\frac{1}{3}\sigma(q\mathcal{P}^2 + \Lambda) h_{ij}
+\frac{3}{4}  \frac{q\mathcal{P}^2}{  q \mathcal{P}^2-\frac{\Lambda}{2}} \mathcal{P}_{(i}\mcD_{j)} \log\mathcal{P}^2
+ i \mathring\epsilon_{(i}{}^{kl} K_{j)k}\mathcal{P}_l
\;.
\label{Equation2}
\end{equation}
Two special components will be of particular importance:
Its contraction  with~$\mathcal{P}^j$
\begin{equation}
\mathcal{P}^j\mcD_{(i} \mathcal{P}_{j)}
\,=\,   \frac{1}{6}\sigma(q\mathcal{P}^2 -2 \Lambda)\mathcal{P}_i
+\frac{3}{8}  \frac{q\mathcal{P}^2}{  q \mathcal{P}^2-\frac{\Lambda}{2}}\mcD_{i} \mathcal{P}^2
+ \frac{i}{2} \mathring\epsilon_{i}{}^{jk} K_{j}{}^l\mathcal{P}_k\mathcal{P}_l
\;,
\label{Equation2_contr}
\end{equation}
and  its trace
\begin{equation}
\mcD_{i} \mathcal{P}^i
\,=\,   -\Lambda\sigma
\;.
\label{divP}
\end{equation}
%

\subsubsection{Vanishing of the transverse derivative of  the MST on~$\Sigma$}

We would like to derive an expression, analog to \eq{Equation2}, for the anti-symmetric part of the covariant derivative of $\mathcal{P}$.
The anti-symmetric part of \eq{rep_deriv_P2}, though, was used to obtain an expression for the candidate field $Y$, whence it does not seem
to be usable for this.

Later on we will be interested in initial data $(\Sigma,h_{ij}, K_{ij})$ for which we do not know whether they admit a solution $(\sigma, Y^i)$ of the KID equations.
Instead, we want to assume that we have been given a co-vector field $\mathcal{P}$ which  solves \eq{gen_vanishing_MST}. Then the ``MST''
$ \mathcal{S}_{\alpha\beta\mu\nu}$ vanishes on $\Sigma$. The quotation marks are to emphasize that we do not know whether it is associated to  a KVF:
First of all, a solution of the KID equations does not need to exist.
And secondly, even if a solution  exists, there is a priori no reason why a solution  $\mathcal{P}$ of \eq{gen_vanishing_MST} should arise from  $(\sigma,Y^i)$ via \eq{dfn_calP}.
In both cases the ``MST'' is not the proper one,
so a priori  there is no reason to expect that  the transverse derivative of the ``MST'' vanishes on $\Sigma$ (which otherwise would follow from the fact that the MST satisfies
the symmetric hyperbolic system \eq{phys_ev}).
For this reason, it seems  promising to analyze the vanishing of the transverse derivative of the MST.
 Clearly,  relations  obtained this way \emph{necessarily} need to be fulfilled by Cauchy  data  $(\Sigma,h_{ij}, K_{ij})$
which yield a $\Lambda$-vacuum space-time with vanishing MST.

For the computation of  $\nabla_t\mathcal{S}_{\alpha\beta\mu\nu}|_{\Sigma}$ we need to determine the transverse derivatives  of $Q$ and $\mathcal{Q}_{\alpha\beta\mu\nu}$.
We assume
\begin{equation}
q\mathcal{P}^2 \,\ne\, 0\;, \quad q\mathcal{P}^2 - \frac{\Lambda}{2}\,\ne\,0
\;.
\end{equation}
In any vacuum space-time with vanishing MST the following relations hold \cite{mars_senovilla},
\begin{eqnarray}
\nabla_{\mu}\mathcal{F}^2 &=&  \frac{4}{3} (Q \mathcal{F}^2  +2\Lambda) X^{\alpha}\mathcal{F}_{\alpha \mu}
\;,
\label{nabla_F2}
\\
\nabla_{\mu}\mathcal{F}_{\alpha\beta} &=& QX^{\kappa}\mathcal{F}_{\kappa\mu}\mathcal{F}_{\alpha\beta}
+\frac{1}{3} (Q\mathcal{F}^2 - 4\Lambda )X^{\nu}\mathcal{I}_{\alpha\beta\mu\nu}
\;.
\label{nabla_F}
\end{eqnarray}
%
%
We employ  \eq{PDE_Q} and \eq{nabla_F2} to calculate
\begin{eqnarray}
\nabla_{t}Q|_{\Sigma}
&=&
 -\frac{1}{3}(\ol Q\ol{\mathcal{F}}^2 + 20 \Lambda) \ol Q\ol{\mathcal{F}}^{-2}\ol  X^{\alpha}\ol{\mathcal{F}}_{\alpha t}
\\
&=&
 \frac{1}{3}(q\mathcal{P}^2 - 5 \Lambda) q\mathcal{P}^{-2} Y^{k}\mathcal{P}_{k }
\;.
\end{eqnarray}
%
%
Finally, using \eq{trans_weyl}, \eq{nabla_F2} and \eq{nabla_F} we determine
 the transverse derivative of the MST on $\Sigma$,
\begin{eqnarray}
\nabla_t\mathcal{S}_{titj}|_{\Sigma} &\equiv &  \ol{ \nabla_t\mathcal{C}_{titj}}  -\ol{\nabla_t Q}\Big(
\ol{\mathcal{F}}_{ti}\ol{\mathcal{F}}_{tj} -\frac{\ol{\mathcal{F}}^2}{3}\ol{\mathcal{I}}_{titj}
\Big)
 - \ol{Q}\Big(2  \ol{\mathcal{F}}_{t(i}\ol{\nabla_{|t}\mathcal{F}_{t|j)}} - \frac{\ol{\nabla_t\mathcal{F}}^2}{3}\ol{\mathcal{I}}_{titj} \Big)
\nonumber
\\
 &=&  i\mathring \epsilon_{(i}{}^{kl}\mcD_{|k|}\mathcal{E}_{j)l} + K_{(i}{}^k\mathcal{E}_{j)k}  -K\mathcal{E}_{ij} + \mathring\epsilon_{i}{}^{kp}\mathring\epsilon_{j}{}^{lq}
K_{kl}\mathcal{E}_{pq}
\nonumber
\\
&&
  + \frac{1}{3}q(q\mathcal{P}^2 +  \Lambda) \Big(5\mathcal{P}^{-2} Y^{k}\mathcal{P}_{k } \mathcal{P}_{i}\mathcal{P}_{j}
-2   \mathcal{P}_{(i}Y_{j)}
- Y^{k}\mathcal{P}_{k} h_{ij}
\Big)
\;.
\phantom{xxx}
\end{eqnarray}
Assume now that the MST vanishes \emph{initially}, i.e.\ $\mathcal{S}_{\alpha\beta\mu\nu}|_{\Sigma}=0$, or, equivalently, that \eq{gen_vanishing_MST} holds
for a function $q=q(x^i)$ which satisfies \eq{q_PDE}.
Then
\begin{eqnarray}
\nabla_t\mathcal{S}_{titj}|_{\Sigma}
 &=&  \frac{i}{4} \frac{q\mathcal{P}^2 -5 \Lambda}{q\mathcal{P}^2 - \frac{\Lambda}{2}} q\mathring \epsilon_{(i}{}^{kl} \mathcal{P}_{j)}\mathcal{P}_k  \mcD_l\log \mathcal{P}^2
+   iq\mathring \epsilon_{(i}{}^{kl}  \mathcal{P}_{|l} \mcD_{k|}\mathcal{P}_{j)}
+   i q \mathcal{P}_{(i}\mathcal{Q}_{j)}
\nonumber
\\
&&
+ q \mathring\epsilon_{i}{}^{kp}\mathring\epsilon_{j}{}^{lq}K_{kl}  \mathcal{P}_p\mathcal{P}_q
 +  q K_{(i}{}^k\mathcal{P}_{j)}\mathcal{P}_k
 - q K \mathcal{P}_i\mathcal{P}_j
\nonumber
\\
&&
  + \frac{1}{3}q(q\mathcal{P}^2 +  \Lambda) \Big(5\mathcal{P}^{-2} Y^{k}\mathcal{P}_{k } \mathcal{P}_{i}\mathcal{P}_{j}
-2   \mathcal{P}_{(i}Y_{j)}
- Y^{k}\mathcal{P}_{k} h_{ij}
\Big)
\;.
\end{eqnarray}
We plug in the expression \eq{candidate_field_Y} we derived  for $Y$,
\begin{eqnarray}
\nabla_t\mathcal{S}_{titj}|_{\Sigma}
 &=&  \frac{i}{4} \frac{4q\mathcal{P}^2 -5 \Lambda}{q\mathcal{P}^2 - \frac{\Lambda}{2}}
 q\mathring \epsilon_{(i}{}^{kl} \mathcal{P}_{j)}\mathcal{P}_k  \mcD_l\log \mathcal{P}^2
+  \frac{i}{2} q\Big( \mathring \epsilon_{i}{}^{kl}  \mathcal{P}_{l} \mcD_{(k}\mathcal{P}_{j)}
+ \mathring \epsilon_{j}{}^{kl}  \mathcal{P}_{l} \mcD_{(k}\mathcal{P}_{i)}
\Big)
\nonumber
\\
&&
+ \frac{5}{2}  i q \Big(\mathcal{Q}_{(i}   -  \mathcal{P}^{-2}\mathcal{P}^k \mathcal{Q}_k\mathcal{P}_{(i}\Big)\mathcal{P}_{j)}
+ q \mathring\epsilon_{i}{}^{kp}\mathring\epsilon_{j}{}^{lq}K_{kl}  \mathcal{P}_p\mathcal{P}_q
 + 2 q K_{(i}{}^k\mathcal{P}_{j)}\mathcal{P}_k
\nonumber
\\
&&
   - \frac{1}{2}q\Big(5  \mathcal{P}^{-2} K^{kl}\mathcal{P}_k\mathcal{P}_{l}-K\Big)\mathcal{P}_{i}\mathcal{P}_{j}
  + \frac{1}{2}q\Big(K^{kl}\mathcal{P}_k\mathcal{P}_{l}-  K \mathcal{P}^2\Big)h_{ij}
\;.
\label{trans_MST_1}
\end{eqnarray}
Contracting this with $\mathcal{P}^j$ yields
\begin{eqnarray}
\mathcal{P}^j\nabla_t\mathcal{S}_{titj}|_{\Sigma}
 &=&
 i q \mathcal{P}^2(\mathcal{Q}_{i}   - \mathcal{P}^{2}\mathcal{P}^k \mathcal{Q}_k\mathcal{P}_{i}) +  \frac{i}{4} \frac{q\mathcal{P}^2 -2 \Lambda}{q\mathcal{P}^2 - \frac{\Lambda}{2}}
 q\mathcal{P}^2\mathring \epsilon_{i}{}^{kl} \mathcal{P}_k  \mcD_l \log\mathcal{P}^2
\nonumber
\\
&&  +  q\mathcal{P}^2 K_{i}{}^k\mathcal{P}_k
   - q  K^{kl}\mathcal{P}_k\mathcal{P}_{l}\mathcal{P}_i
\;.
\end{eqnarray}
Vanishing of $\mathcal{P}^j\nabla_t\mathcal{S}_{titj}|_{\Sigma}=0$  requires
%
\begin{equation}
\mathcal{Q}_{i} -  \mathcal{P}^{-2} \mathcal{P}^k \mathcal{Q}_k\mathcal{P}_{i}
\,=\,  -  \frac{1}{4} \frac{q\mathcal{P}^2 -2 \Lambda}{q\mathcal{P}^2 - \frac{\Lambda}{2}}
\mathring \epsilon_{i}{}^{kl} \mathcal{P}_k  \mcD_l \log\mathcal{P}^2
  +i   K_{i}{}^k\mathcal{P}_k
   -i \mathcal{P}^{-2}  K^{kl}\mathcal{P}_k\mathcal{P}_{l}\mathcal{P}_i
\;,
\label{Equation1B}
\end{equation}
which yields the desired relation for the anti-symmetric part of  $\nabla_i\mathcal{P}_j$.
%
%
%
%
If one  inserts \eq{Equation1B} and \eq{Equation2}  into \eq{trans_MST_1},
%
%
the right-hand side vanishes automatically, so no additional relation can be extracted from
$\nabla_t\mathcal{S}_{\alpha\beta\mu\nu}|_{\Sigma}=0$.

\subsubsection{An intermediate result}

It follows from \eq{gen_vanishing_MST} that
\vspace{-0.30em}
\begin{equation}
q\mathcal{P}^2\,=\, \mp \sqrt{\frac{3}{2}}\sqrt{\mathcal{E}^2}
\;,
\label{relation_qP_E}
\end{equation}
whence our assumptions on $q\mathcal{P}^2$,
\vspace{-0.25em}
\begin{equation}
 q\mathcal{P}^2 \,\ne \,0
\;, \quad
q\mathcal{P}^2 + \Lambda \,\ne \, 0
\;, \quad
q\mathcal{P}^2 -\frac{\Lambda}{2} \,\ne \, 0
\;, \quad
q\mathcal{P}^2  -  2 \Lambda \,\ne \, 0
\;.
\label{all_ineqs_toget2}
\end{equation}
 can be expressed in terms of $\mathcal{E}^2$,  \eq{E_ineqs} below, i.e.\  in terms of the Cauchy data.

Let us collect the equations we have found in the preceding sections. Taking also Corollary~\ref{cor_vanishing_MST} into account we end up with the following
\begin{proposition}
\label{interm_prop_1}
Consider Cauchy data $(\Sigma, h_{ij}, K_{ij})$ which satisfy the vacuum constraint equations and%
\footnote{
\label{footnote_E_conditions}
A similar comment as in Remark~\ref{remark_weaker} applies: It is actually sufficient when the following conditions
 are satisfied for one sign $\pm$, depending on the sign
for which $\mathcal{S}^{(\mathcal{C})}_{\alpha\beta\gamma\delta}$  is fulfilled,
$$
\mathcal{E}^2 \,\ne\, 0\;, \quad
\pm\sqrt{\mathcal{E}^2} + \sqrt{\frac{1}{6}} \,\Lambda  \,\ne\, 0\;, \quad
\pm\sqrt{\mathcal{E}^2}- \sqrt{\frac{2}{3}} \,\Lambda \,\ne\, 0\;, \quad
\pm\sqrt{\mathcal{E}^2} +  \sqrt{\frac{8}{3}} \,\Lambda  \,\ne\, 0
\,.
$$
}
\begin{equation}
\mathcal{E}^2 \,\ne\, 0\;, \quad
\mathcal{E}^2- \frac{1}{6}\,\Lambda^2  \,\ne\, 0\;, \quad
\mathcal{E}^2- \frac{2}{3} \,\Lambda^2 \,\ne\, 0\;, \quad
\mathcal{E}^2 - \frac{8}{3} \,\Lambda^2  \,\ne\, 0
\;.
\label{E_ineqs}
\end{equation}
A necessary condition for the emerging Cauchy development to admit a (possibly complex) KVF $X$  such that the associated MST vanishes is:
\begin{enumerate}
\item[(i)] There exists
a function $q: \Sigma \rightarrow \mathbb{C}$ and a co-vector field $\mathcal{P}$ such that
\eq{gen_vanishing_MST}, \eq{q_PDE},  \eq{Equation2}, and \eq{Equation1B} hold.
\item[(ii)] $X^{\mu}|_{\Sigma}= (\sigma, Y^i)$ is given by \eq{cand_sigma} and \eq{candidate_field_Y}.
\end{enumerate}
If, in addition to (i)-(ii),
\begin{enumerate}
\item[(iii)] $ (\sigma, Y^i)$  satisfies the KID equations, and
\item[(iv)] $(\sigma,Y^i)$ and  $\mathcal{P}_i$  are related via \eq{dfn_calP},
\end{enumerate}
then the Cauchy development of  $(\Sigma, h_{ij}, K_{ij})$ admits a
 KVF $X$ with $X^{\mu}|_{\Sigma}= (\sigma, Y^i)$  such that the associated MST vanishes in some neighborhood of $\Sigma$.
\end{proposition}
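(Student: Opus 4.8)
The proposition has two halves --- the necessity of (i)--(ii) and the sufficiency once (iii)--(iv) are added --- and the plan for each is simply to collect relations already derived in the previous subsections together with the results of Section~\ref{section_2}, the only genuine work being the bookkeeping of non-degeneracy conditions. For necessity I would start from a Cauchy development $(\mcM,g)$ carrying a (possibly complex) KVF $X$ with vanishing MST for some $Q$, put $(\sigma,Y^i):=(X^t,X^i)|_\Sigma$ (so the KID Theorem makes $(\Sigma,h_{ij},K_{ij},\sigma,Y^i)$ a vacuum KID), define $\mathcal{P}_i$ by~\eq{dfn_calP} --- which makes (iv) automatic --- and set $q:=Q|_\Sigma$. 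Restricting the space-time MST to $\Sigma$ and reading off~\eq{unproper_MST} gives~\eq{gen_vanishing_MST}; contracting~\eq{gen_vanishing_MST} with $\mathcal{E}^{ij}$ and using $h^{ij}\mathcal{E}_{ij}=0$ yields $\mathcal{E}^2=\tfrac23 q^2\mathcal{P}^4$, which is~\eq{relation_qP_E}. Hence $\mathcal{E}^2\ne0$ forces $q\mathcal{P}^2\ne0$, in particular $\mathcal{P}^2\ne0$, and --- via $\mathcal{F}^2|_\Sigma=-4\mathcal{P}^2$, $Q\mathcal{F}^2|_\Sigma=-4q\mathcal{P}^2$ and continuity --- the hypotheses~\eq{E_ineqs} become equivalent to~\eq{all_ineqs_toget2} and imply~\eq{spec_ineqs}, \eq{spec_ineqs2} as well as the regularity conditions under which~\eq{PDE_Q} holds near $\Sigma$. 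Then~\eq{q_PDE} is the spatial restriction of~\eq{PDE_Q} to $\Sigma$; \eq{cand_sigma} and~\eq{candidate_field_Y} come from contracting~\eq{deriv_P2} with $\mathcal{P}^i$ and from the antisymmetric part of~\eq{rep_deriv_P2}, giving (ii); and~\eq{Equation2}, \eq{Equation1B} come from the symmetric part of~\eq{rep_deriv_P2} and from contracting the (here trivially vanishing) $\nabla_t\mathcal{S}_{titj}|_\Sigma$ with $\mathcal{P}^j$ after inserting~\eq{candidate_field_Y} into~\eq{trans_MST_1}. These are precisely the computations of the earlier subsections, now licensed because~\eq{KID1}-\eq{KID2} and~\eq{gen_vanishing_MST} hold; this establishes (i).

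For sufficiency I would assume (i)--(iv). By (iii) and the KID Theorem, $(\mcM,g)$ admits a KVF $X$ with $X^\mu|_\Sigma=(\sigma,Y^i)$; as above~\eq{gen_vanishing_MST} gives~\eq{relation_qP_E}, so~\eq{E_ineqs} forces $\mathcal{P}^2\ne0$. By (iv) the Killing form of this space-time satisfies $\ol{\mathcal{F}}_{ti}=\mathcal{P}_i$, while $\mathcal{E}_{ij}=\ol{\mathcal{C}}_{titj}$ always, so by~\eq{unproper_MST} and the fact that $\mathcal{S}_{titj}|_\Sigma$ captures all independent components of the MST, equation~\eq{gen_vanishing_MST} --- equivalently~\eq{main_condition}, since $\mathcal{P}^2\ne0$ --- says exactly that the MST of $X$ with $Q|_\Sigma=q$ vanishes on $\Sigma$. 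Lemma~\ref{some_intermediate} then gives $q=Q_0|_\Sigma$ and $\mathcal{S}^{(0)}_{\alpha\beta\mu\nu}|_\Sigma=0$; and since~\eq{E_ineqs} is equivalent to~\eq{all_ineqs_Cauchy1}-\eq{all_ineqs_Cauchy2} because $\mathcal{C}^2|_\Sigma=16\mathcal{E}^2$, Corollary~\ref{cor_vanishing_MST} --- which rests on the $Q$-equivalence Lemma~\ref{lemma_equivalence_Qs} and on the symmetric hyperbolic evolution~\eq{phys_ev} of $\mathcal{S}^{(\mathrm{ev})}$ --- shows that the MST associated to $X$ vanishes in some neighbourhood of $\Sigma$, as claimed.

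I do not expect a serious obstacle, since every displayed identity above has already been derived; the one point that needs care is checking that the single hypothesis~\eq{E_ineqs} simultaneously controls (a) all denominators occurring in the candidate formulas and in~\eq{PDE_Q}, i.e.\ \eq{all_ineqs_toget2}, \eq{spec_ineqs} and~\eq{spec_ineqs2}, and (b) the inequalities~\eq{all_ineqs_Cauchy1}-\eq{all_ineqs_Cauchy2} required for Lemma~\ref{lemma_equivalence_Qs} and Corollary~\ref{cor_vanishing_MST}. The identities~\eq{relation_qP_E}, turning the $q\mathcal{P}^2$-conditions into conditions on $\sqrt{\mathcal{E}^2}$, and $\mathcal{C}^2|_\Sigma=16\mathcal{E}^2$ are exactly what makes (a) and (b) coincide. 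Finally, it is worth flagging that (iii) and (iv) genuinely do not follow from (i)--(ii): a $\mathcal{P}$ merely solving~\eq{gen_vanishing_MST} need not arise from a KID through~\eq{dfn_calP}, which is why these two compatibility conditions must be imposed separately.
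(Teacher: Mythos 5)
Your argument is correct and takes essentially the same route as the paper's (very terse) proof: necessity is just the assembly of the relations derived in the preceding subsections, and sufficiency is the KID theorem plus the $Q$-equivalence on $\Sigma$ and Corollary~\ref{cor_vanishing_MST} --- the paper passes through Lemma~\ref{lemma_dq_qC} and $Q_{\mathcal{C}}$ where you pass through Lemma~\ref{some_intermediate} and $Q_0$, which is immaterial since the corollary accepts either version. One caveat on your closing aside: the claim that (iii)--(iv) ``genuinely do not follow from (i)--(ii)'' is contradicted by the paper's Proposition~\ref{interm_prop_2}, which shows they \emph{do} follow once all four equations in (i) (not merely \eq{gen_vanishing_MST}) are imposed; at this stage the point is only that this implication has not yet been established, not that it fails.
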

\begin{proof}
The first part follows directly from the considerations above.
(iii)-(iv) guarantee that the tensor $\mathcal{S}_{\alpha\beta\mu\nu}$, whose vanishing on the Cauchy surface $\Sigma$ is ensured by (i), is actually the MST associated to the KVF $X$ generated by $(\sigma,Y^i)$.
The result follows now from Lemma~\ref{lemma_dq_qC} and Corollary~\ref{cor_vanishing_MST}.
\qed
\end{proof}

\subsection{The KID equations}

Let us  analyze to what extent (iii)-(iv) follow from (i)-(ii).
We consider Cauchy data $(\Sigma, h_{ij}, K_{ij})$ which satisfy the vacuum constraint equations, \eq{spec_ineqs} and  \eq{spec_ineqs2}.
Moreover, we assume that there exist a  function $q: \Sigma \rightarrow \mathbb{C}$ and a co-vector field $\mathcal{P}$ such that
\eq{gen_vanishing_MST}, \eq{q_PDE},  \eq{Equation2}, and \eq{Equation1B} hold.
Finally, we define a  (possibly complex) function $\sigma: \Sigma \rightarrow \mathbb{C}$ via  \eq{cand_sigma},  and a (possibly complex) vector field $Y$
via \eq{candidate_field_Y}.

Using \eq{q_PDE} we find that
\begin{equation}
\mcD_k\sigma \,=\,
\frac{3}{4} \Big(  q\mathcal{P}^2 - \frac{\Lambda}{2}\Big)^{-1}\Big(\mathcal{P}^l\mcD_k \mcD_l\log \mathcal{P}^2
+\mcD_k\mathcal{P}^l \mcD_l\log \mathcal{P}^2
 - \sigma\frac{q\mathcal{P}^2 + \Lambda}{q\mathcal{P}^2 - \frac{\Lambda}{2}} q\mathcal{P}^2\mcD_k\log \mathcal{P}^2
\Big)
\;.
\label{deriv_sigma}
\end{equation}
Differentiating \eq{Equation2} yields with \eq{q_PDE}
\begin{eqnarray}
\mcD_k \mcD_{(i} \mathcal{P}_{j)}
&=&
 \frac{1}{4} \frac{q\mathcal{P}^2 + \Lambda}{q\mathcal{P}^2-\frac{\Lambda}{2}}\Big(
 \frac{3}{2}\frac{\Lambda}{q\mathcal{P}^2-\frac{\Lambda}{2}} \sigma q\mathcal{P}^2\mcD_k\log \mathcal{P}^2
- \mcD_k\mathcal{P}^l \mcD_l\log \mathcal{P}^2
 - \mathcal{P}^l\mcD_k \mcD_l\log \mathcal{P}^2
\Big) h_{ij}
\nonumber
\\
&&
-\frac{9}{32} \Lambda   \frac{q\mathcal{P}^2 + \Lambda}{ (q \mathcal{P}^2-\frac{\Lambda}{2})^3} q\mathcal{P}^2\mathcal{P}_{(i}\mcD_{j)} \log\mathcal{P}^2
 \mcD_k\log \mathcal{P}^2
+\frac{3}{4}  \frac{q\mathcal{P}^2}{  q \mathcal{P}^2-\frac{\Lambda}{2}} \mathcal{P}_{(i}\mcD_{j)} \mcD_k\log\mathcal{P}^2
\nonumber
\\
&&
+\frac{3}{4}  \frac{q\mathcal{P}^2}{  q \mathcal{P}^2-\frac{\Lambda}{2}} \mcD_k\mathcal{P}_{(i}\mcD_{j)} \log\mathcal{P}^2
+ i \mathring\epsilon_{(i}{}^{pq}\Big( \mathcal{P}_q\mcD_k K_{j)p}
+K_{j)p}\mcD_k\mathcal{P}_q
\Big)
\;.
\label{2nd_order}
\end{eqnarray}
For the covariant derivative of $Y$ a somewhat lengthy calculation, which makes use of   \eq{2nd_order}, \eq{Equation2}, and the vacuum constraints,
reveals that
\begin{eqnarray}
\mcD_iY_j
&=&
\frac{3}{2} \Big( q\mathcal{P}^{2} + \Lambda\Big)^{-1}\Big[
\frac{3}{4} \frac{q\mathcal{P}^2}{q\mathcal{P}^2 - \frac{\Lambda}{2}}  \mcD_i \log\mathcal{P}^2 \Big( K_{j}{}^{k}\mathcal{P}_{k}
- K \mathcal{P}_{j}
\Big)
\nonumber
\\
&&
+i\mathring\epsilon_j{}^{kl}
\Big(
h_{ik}(2\mathring R_{lm}\mathcal{P}^m-\mathring R  \mathcal{P}_l) - 2\mathring R_{il} \mathcal{P}_k
+
\frac{3}{4} \frac{ q\mathcal{P}^2 }{q\mathcal{P}^2 - \frac{\Lambda}{2}} \mcD_k \mathcal{P}_l\mcD_i \log\mathcal{P}^2
\nonumber
\\
&&+\frac{3}{4} \frac{ 1}{  q\mathcal{P}^{2} - \frac{\Lambda}{2}} \Big(1+\frac{3}{4}   \frac{ q\mathcal{P}^{2}}{ q\mathcal{P}^{2} - \frac{\Lambda}{2}}
+ \frac{3}{8}\Lambda\frac{q\mathcal{P}^2 + \Lambda}{(q\mathcal{P}^2 - \frac{\Lambda}{2})^2}
\Big) q\mathcal{P}^{2}\mathcal{P}_k\mcD_l\log\mathcal{P}^2\mcD_i \log\mathcal{P}^2
\nonumber
\\
&&
- \frac{3}{4} \frac{ q\mathcal{P}^{2}}{  q\mathcal{P}^{2} - \frac{\Lambda}{2}}\mcD_i\mathcal{P}_k\mcD_l\log\mathcal{P}^2
 - \frac{3}{4} \frac{ q\mathcal{P}^{2}}{  q\mathcal{P}^{2} - \frac{\Lambda}{2}}\mathcal{P}^{-2}\mathcal{P}_k\mcD_i\mcD_l\mathcal{P}^2
- 2\mcD_k\mcD_{(i}  \mathcal{P}_{l)}
\Big)
\nonumber
\\
&&
 -\mathcal{P}_{k} \mcD_i K_{j}{}^{k}+\mathcal{P}_{j}\mcD_i K
-K_{j}{}^{k}\mcD_i\mathcal{P}_{k}+  K \mcD_i\mathcal{P}_{j}
\Big]
\\
&=&
\frac{3}{2} \Big( q\mathcal{P}^{2} + \Lambda\Big)^{-1}\Big[
\frac{3}{4} \frac{q\mathcal{P}^2}{q\mathcal{P}^2 - \frac{\Lambda}{2}} ( K_{j}{}^{k}\mathcal{P}_{k}
- K \mathcal{P}_{j})  \mcD_i \log\mathcal{P}^2
\nonumber
\\
&&+\frac{9}{16}  i \frac{ (q\mathcal{P}^{2})^2}{ (q\mathcal{P}^{2} - \frac{\Lambda}{2})^2}
\mathring\epsilon_j{}^{kl}\mathcal{P}_k\mcD_l\log\mathcal{P}^2\mcD_i \log\mathcal{P}^2
- \frac{3}{2}i \frac{ q\mathcal{P}^{2}}{  q\mathcal{P}^{2} - \frac{\Lambda}{2}}\mathring\epsilon_j{}^{kl}\mcD_{(i}\mathcal{P}_{k)}\mcD_l\log\mathcal{P}^2
\nonumber
\\
&&
+\mathring\epsilon_j{}^{kl}\mathring\epsilon_{i}{}^{pq} K_{lp}\mcD_k\mathcal{P}_q
 -2\mathring\epsilon_{i}{}^{kl}B_{jk}\mathcal{P}_l
- \Lambda K_{ij}\sigma
- K_{i}{}^k\mcD_k\mathcal{P}_j -K_{j}{}^{k}\mcD_i\mathcal{P}_{k}+  K \mcD_i\mathcal{P}_{j}
\nonumber
\\
&&
 + 2i\mathring\epsilon_j{}^{kl}\mathring R_{ik} \mathcal{P}_l
-i \mathring\epsilon_{ij}{}^{k}(2\mathring R_{kl}\mathcal{P}^l-\mathring R  \mathcal{P}_k)
\Big]
\nonumber
\\
&&
-  \frac{3}{4}i \frac{1}{q\mathcal{P}^2-\frac{\Lambda}{2}}\mathring\epsilon_{ij}{}^{k}\Big(
 \frac{3}{2}\frac{\Lambda}{q\mathcal{P}^2-\frac{\Lambda}{2}} \sigma q\mathcal{P}^2\delta_k{}^l
- \mcD_k\mathcal{P}^l
 - \mathcal{P}^l\mcD_k
\Big) \mcD_l\log \mathcal{P}^2
\\
&=&
\frac{3}{2} \Big( q\mathcal{P}^{2} + \Lambda\Big)^{-1}\Big[
\frac{3}{4} \frac{q\mathcal{P}^2}{q\mathcal{P}^2 - \frac{\Lambda}{2}} \Big(
  K_{i}{}^k\mathcal{P}_j\mcD_k\log\mathcal{P}^2
+( K_{j}{}^{k}\mathcal{P}_{k}- K \mathcal{P}_{j})\mcD_i \log\mathcal{P}^2
\Big)
\nonumber
\\
&&
+\mathring\epsilon_j{}^{kl}\mathring\epsilon_{i}{}^{pq} K_{lp}\Big(
\mcD_k\mathcal{P}_q
- \frac{3}{4} \frac{ q\mathcal{P}^{2}}{  q\mathcal{P}^{2} - \frac{\Lambda}{2}}\mathcal{P}_q\mcD_k\log\mathcal{P}^2
\Big)
 -2\mathring\epsilon_{i}{}^{kl}B_{jk}\mathcal{P}_l
- K_{i}{}^k\mcD_k\mathcal{P}_j
\nonumber
\\
&&
 -K_{j}{}^{k}\mcD_i\mathcal{P}_{k}
+  K \mcD_i\mathcal{P}_{j}
 + 2i\mathring\epsilon_j{}^{kl}\mathring R_{ik} \mathcal{P}_l
-i \mathring\epsilon_{ij}{}^{k}(2\mathring R_{kl}\mathcal{P}^l-\mathring R  \mathcal{P}_k)
\Big] - \frac{3}{2}K_{ij}\sigma
\nonumber
\\
&&
\underbrace{-  \frac{3}{4}i \frac{1}{q\mathcal{P}^2-\frac{\Lambda}{2}}\mathring\epsilon_{ij}{}^{k}\Big(
 \frac{q\mathcal{P}^2 + \Lambda}{q\mathcal{P}^2-\frac{\Lambda}{2}} \sigma q\mathcal{P}^2\delta_k{}^l
- \mcD_k\mathcal{P}^l
 - \mathcal{P}^l\mcD_k
\Big)\mcD_l\log \mathcal{P}^2 }_{=i\mathring\epsilon_{ij}{}^k\mcD_k\sigma}
\;.
\label{full_deriv_Y}
\end{eqnarray}

Now we are ready to consider the KID equation \eq{KID1}.
Taking the symmetric, part of  \eq{full_deriv_Y} another lengthy calculation shows  (we use  \eq{2nd_order}, \eq{Equation2}, \eq{Equation1B},  \eq{gen_vanishing_MST} and the vacuum constraints),
%
\begin{eqnarray}
\mcD_{(i}Y_{j)}
 &=&
\frac{3}{2} \Big( q\mathcal{P}^{2} + \Lambda\Big)^{-1}\Big[
\frac{3}{4} \frac{q\mathcal{P}^2}{q\mathcal{P}^2 - \frac{\Lambda}{2}} \Big( K_{(i}{}^{k}\mathcal{P}_{k}
- K \mathcal{P}_{(i} \Big) \mcD_{j)} \log\mathcal{P}^2
\nonumber
\\
&&
- 2i\mathring\epsilon_{(i}{}^{kl} \mathring R_{j)l} \mathcal{P}_k
- \frac{3}{2}i \frac{ q\mathcal{P}^2 }{q\mathcal{P}^2 - \frac{\Lambda}{2}}
\mathring\epsilon_{(i}{}^{kl} \mcD_{(j)}\mathcal{P}_{k)}\mcD_l\log\mathcal{P}^2
\nonumber
\\
&&+\frac{9}{16} i\frac{ (q\mathcal{P}^{2})^2}{  (q\mathcal{P}^{2} - \frac{\Lambda}{2})^2}
\mathring\epsilon_{(i}{}^{kl}\mathcal{P}_k\mcD_{j)} \log\mathcal{P}^2 \mcD_l\log\mathcal{P}^2
-\Lambda\sigma K_{ij}
-2\mathring\epsilon_{(i}{}^{pq}B_{j)p} \mathcal{P}_q
\nonumber
\\
&&
+ \mathring\epsilon_{(i}{}^{kl}\mathring\epsilon_{j)}{}^{pq} K_{lp}\mcD_{(k}\mathcal{P}_{q)}
-2K_{(i}{}^{k}\mcD_{(j)}\mathcal{P}_{k)}+  K \mcD_{(i}\mathcal{P}_{j)}
\Big]
\\
 &=&
\frac{3}{2} i\Big( q\mathcal{P}^{2} + \Lambda\Big)^{-1}\Big[
\underbrace{  \mathring\epsilon_{(i}{}^{kl}\Big(K_{j)k}K_{lp} \mathcal{P}^p
+K_{j)}{}^{p} K_{pk}\mathcal{P}_l
-  K K_{j)k}\mathcal{P}_l
\Big)}_{=0}
\nonumber
\\
&& +  2\mathring\epsilon_{(i}{}^{kl}\mathcal{ E}_{j)k}   \mathcal{P}_l
\Big]
- \sigma K_{ij}
\\
&=&
- \sigma K_{ij}
\;.
\end{eqnarray}
%
The first KID equation \eq{KID1} is therefore automatically fulfilled in this setting.

Before we analyze the second KID equation, it is useful to focus attention to another equation first, namely
\eq{dfn_calP},
\begin{equation}
\mathcal{P}_{i}
\,=\,  \mcD_{i}\sigma +K_{ij}Y^j + \frac{i}{2} Z_i
\;.
\label{dfn_calP2}
\end{equation}
%
%
It ensures that $\mathcal{P}$ and $(\sigma,Y)$ are related in the right way, so that the  ``MST'',~given on $\Sigma$ in terms of $\mathcal{P}$, is actually the proper MST associated to a KVF, namely the one generated by $(\sigma, Yî)$.
To check it, we determine the anti-symmetric part of  \eq{full_deriv_Y}.
With \eq{Equation1B},  \eq{deriv_sigma}, \eq{gen_vanishing_MST} and the vacuum constraints we find
%

\begin{eqnarray*}
\mcD_{[i}Y_{j]}
&=&
\frac{3}{2} \Big( q\mathcal{P}^{2} + \Lambda\Big)^{-1}\Big[
\frac{3}{4} \frac{q\mathcal{P}^2}{q\mathcal{P}^2 - \frac{\Lambda}{2}} \Big(
  K_{[i}{}^k\mathcal{P}_{j]}\mcD_{k}\log\mathcal{P}^2
- (  K_{[i}{}^k\mathcal{P}_{|k|} -    K \mathcal{P}_{[i}) \mcD_{j]}\log\mathcal{P}^2
\Big)
\\
&&
-\mathring\epsilon_{[i}{}^{kl}\mathring\epsilon_{j]}{}^{pq} K_{lp}\Big(
\mcD_{[k}\mathcal{P}_{q]}
- \frac{3}{4} \frac{ q\mathcal{P}^{2}}{  q\mathcal{P}^{2} - \frac{\Lambda}{2}}\mathcal{P}_{[q}\mcD_{k]}\log\mathcal{P}^2
\Big)
+ K_{[i}{}^{k}\mcD_{j]}\mathcal{P}_{k}
- K_{[i}{}^{k}\mcD_{|k|}\mathcal{P}_{j]}
\\
&&
+  K \mcD_{[i}\mathcal{P}_{j]}
 -2\mathring\epsilon_{[i}{}^{kl}B_{j]k}\mathcal{P}_l
 - 2i\mathring\epsilon_{[i}{}^{kl}\mathring R_{j]k} \mathcal{P}_l
-i \mathring\epsilon_{ij}{}^{k}(2\mathring R_{kl}\mathcal{P}^l-\mathring R  \mathcal{P}_k)
\Big]
+ i\mathring\epsilon_{ij}{}^k\mcD_k\sigma
\\
&=&
\frac{3}{4} \frac{1 }{q\mathcal{P}^2 - \frac{\Lambda}{2}} \Big(
  K_{[i}{}^k\mathcal{P}_{j]}\mcD_{k}\log\mathcal{P}^2
- \mathcal{P}_{k} K_{[i}{}^k\mcD_{j]}\log\mathcal{P}^2
+ K \mathcal{P}_{[i}\mcD_{j]} \log\mathcal{P}^2
\\
&&
\qquad
-\mathring\epsilon_{[i}{}^{kl}\mathring\epsilon_{j]}{}^{pq} K_{lp}
\mathcal{P}_{[k}\mcD_{q]}\log \mathcal{P}^2
\Big)
\\
&&
+\frac{3}{2} \Big( q\mathcal{P}^{2} + \Lambda\Big)^{-1}\underbrace{\frac{1}{2}\Big(\mathring\epsilon_{ij}{}^{p} K_{lp}
+2 K_{[i}{}^{k}\mathring\epsilon_{j]k}{}^l
+ K\mathring\epsilon_{ij}{}^l
\Big)}_{=\mathring\epsilon_{ij}{}^{p} K_{p}{}^l}
\\
&&
\qquad
\times
\Big(\mathcal{P}^{-2}\mathcal{P}_{l}\mathcal{P}^m \mathcal{Q}_m
+i K_{l}{}^{r}\mathcal{P}_{r} - i\mathcal{P}^{-2}   K^{mn}\mathcal{P}_m\mathcal{P}_{n}\mathcal{P}_{l}
\Big)
\\
&&
+ \frac{3}{2} \Big( q\mathcal{P}^{2} + \Lambda\Big)^{-1}\Big[
 -2i\mathring\epsilon_{[i}{}^{kl}(K K_{j]k} - K_{j]l}K_k{}^l -\frac{2}{3}\Lambda h_{j]k} )\mathcal{P}_l
 +2i\mathring\epsilon_{[i}{}^{kl}\mathcal{E}_{j]k}\mathcal{P}_l
\\
&&
\qquad
-i\underbrace{ \Big(\mathring\epsilon_{ij}{}^{k}(2\mathring R_{kl}\mathcal{P}^l-\mathring R  \mathcal{P}_k)
 + 4\mathring\epsilon_{[i}{}^{kl}\mathring R_{j]k} \mathcal{P}_l
\Big)}_{=\mathring\epsilon_{ij}{}^k\mathring R\mathcal{P}_k}
\Big]
+ i\mathring\epsilon_{ij}{}^k\mcD_k\sigma
\end{eqnarray*}
\begin{eqnarray*}
\phantom{\mcD_{[i}Y_{j]} }&=&
\frac{3}{4} \frac{1 }{q\mathcal{P}^2 - \frac{\Lambda}{2}} \Big[
  K_{[i}{}^k\mathcal{P}_{j]}\mcD_{k}\log\mathcal{P}^2
- \mathcal{P}_{k} K_{[i}{}^k\mcD_{j]}\log\mathcal{P}^2
+ K \mathcal{P}_{[i}\mcD_{j]} \log\mathcal{P}^2
\\
&&
\qquad
-\mathring\epsilon_{[i}{}^{kl}\mathring\epsilon_{j]}{}^{pq} K_{lp}
\mathcal{P}_{[k}\mcD_{q]}\log \mathcal{P}^2
-\mathring\epsilon_{ij}{}^{p} K_{p}{}^l\mathring\epsilon_l{}^{km}\mathcal{P}_k\mcD_m\log\mathcal{P}^2
\Big]
- \frac{3}{2} i\Big( q\mathcal{P}^{2} + \Lambda\Big)^{-1}
\\
&&
\times
\underbrace{\Big[
 2\mathring\epsilon_{[i}{}^{kl}(K K_{j]k} - K_{j]m}K_k{}^m  )\mathcal{P}_l
+ \mathring\epsilon_{ij}{}^p[ (|K|^2-K^2)\mathcal{P}_p
-  K_{p}{}^l(K_{l}{}^{k}\mathcal{P}_{k} -K \mathcal{P}_{l})]
\Big] }_{=0}
\\
&&
+ i\mathring\epsilon_{ij}{}^{k} ( \mcD_k\sigma -\mathcal{P}_k + K_{kl}Y^l)
\;.
\end{eqnarray*}
Recall our definition \eq{candidate_field_Y} of $Y$. With \eq{Equation1B}
it can be written as
\begin{eqnarray*}
Y_i
&=&
-\frac{3}{2}i \Big( q\mathcal{P}^{2} + \Lambda\Big)^{-1}\Big(\mathcal{P}^{-2}\mathcal{P}_{i}\mathcal{P}^k \mathcal{Q}_k
+i K_{i}{}^{k}\mathcal{P}_{k} - i\mathcal{P}^{-2}   K^{kl}\mathcal{P}_k\mathcal{P}_{l}\mathcal{P}_{i}
\Big)
\\
&&
-\frac{3}{4}i \Big( q\mathcal{P}^{2} - \frac{\Lambda}{2}\Big)^{-1}\mathring\epsilon_i{}^{kl}\mathcal{P}_k\mcD_l\log\mathcal{P}^2
-\frac{3}{2} \Big( q\mathcal{P}^{2} + \Lambda\Big)^{-1}\Big(K_{i}{}^{k}\mathcal{P}_{k}-  K \mathcal{P}_{i}
\Big)
\;.
\end{eqnarray*}
We insert this into the expression we have derived for $\mcD_{[i}Y_{j]}$ to end up with
\begin{equation}
\mcD_{[i}Y_{j]}
\,=\,
i\mathring\epsilon_{ij}{}^{k} ( \mcD_k\sigma -\mathcal{P}_k + K_{kl}Y^l)
\;,
\label{relation_P_Y_sigma}
\end{equation}
which is equivalent to \eq{dfn_calP2}, i.e.\ $\mathcal{P}$ and $(\sigma,Y)$ are automatically related to each other in the desired
way.
Morover, it follows  immediately  from \eq{relation_P_Y_sigma} that for $\mathcal{P}^2 \ne 0$ the emerging KVF cannot be trivial on $\Sigma$.

Finally, let us devote attention to the second KID equation \eq{KID2}.
We differentiate \eq{dfn_calP2}.
Using \eq{Equation2}, \eq{KID1}, \eq{gen_vanishing_MST}, \eq{candidate_field_Y}, \eq{Equation1B}
yields
%
\begin{eqnarray}
\mcD_i \mcD_{j}\sigma  &=&
\mcD_{(i}\mathcal{P}_{j)} -  \frac{i}{2}\mcD_{(i}Z_{j)}- Y^k \mcD_{(i}K_{j)k}- K_{k(i}\mcD_{j)}Y^k
\\
&=&
\mcD_{(i}\mathcal{P}_{j)}
-  \frac{i}{2}(\mathring\epsilon_{i}{}^{kl} \mcD_{k}\mcD_{(j}Y_{l)}  + \mathring\epsilon_{j}{}^{kl} \mcD_{k}\mcD_{(i}Y_{l)})
+  \frac{i}{2}\mathring\epsilon_{(i}{}^{kl} \mathring R_{j)pkl}Y^p
\nonumber
\\
&&
 - Y^k \mcD_{(i}K_{j)k}- K_{(i}{}^k\mcD_{j)}Y^k
\\
&=&
  -\frac{1}{3}\sigma(q\mathcal{P}^2 + \Lambda) h_{ij}
+\frac{3}{4}  \frac{q\mathcal{P}^2}{  q \mathcal{P}^2-\frac{\Lambda}{2}} \mathcal{P}_{(i}\mcD_{j)} \log\mathcal{P}^2
 -  i\sigma B_{ij}
- Y^k \mcD_{(i}K_{j)k}
\nonumber
\\
&&
- K_{k(i}\mcD_{j)}Y^k
+ i \mathring\epsilon_{(i}{}^{kl}\Big( K_{j)k}\mathcal{P}_l
 +K_{j)l} \mcD_{k}\sigma + \mathring R_{j)k} Y_l\Big)
\\
&=&
- \sigma\Big(   i B_{ij}  +\frac{1}{3}q\mathcal{P}^2   + \frac{\Lambda}{3}  h_{ij}\Big)
+\frac{3}{4}  \frac{q\mathcal{P}^2}{  q \mathcal{P}^2-\frac{\Lambda}{2}} \mathcal{P}_{(i}\mcD_{j)} \log\mathcal{P}^2
-  \mcL_YK_{ij}
\nonumber
\\
&&
+  K_{(i}{}^k \mcD_{(j)}Y_{k)}
+ i\mathring\epsilon_{(i}{}^{kl}K_{j)k} K_{lp}Y^p
+ i\mathring\epsilon_{(i}{}^{kl} (\mathring R_{j)k} +i B_{j)k}) Y_l
\\
&=&
 \sigma\Big(
 \mathring R_{ij}  +  K K_{ij} - 2  K_{ik}K_j{}^k - \Lambda h_{ij} - \mathcal{E}_{ij} - \frac{1}{3}q\mathcal{P}^2   \Big)
\nonumber
\\
&&
+\frac{3}{4}  \frac{q\mathcal{P}^2}{  q \mathcal{P}^2-\frac{\Lambda}{2}} \mathcal{P}_{(i}\mcD_{j)} \log\mathcal{P}^2
+ i\mathring\epsilon_{(i}{}^{kl} \mathcal{E} _{j)k}  Y_l
-  \mcL_YK_{ij}
\nonumber
\\
&&
+\underbrace{ i\mathring\epsilon_{(i}{}^{kl} \Big( K_{j)k} K_{l}{}^{p}Y_p  + K_{j)p}K_k{}^pY_l -   K K_{j)k} Y_l  \Big)}_{=0}
\\
&=&
 \sigma\Big(
 \mathring R_{ij}  +  K K_{ij} - 2  K_{ik}K_j{}^k - \Lambda h_{ij}    \Big)-  \mcL_YK_{ij}
\;,
\end{eqnarray}
i.e.\ the second KID equation holds automatically, as well.


We are thus led to the following improvement of Proposition~\ref{interm_prop_1}
\begin{proposition}
\label{interm_prop_2}
Consider Cauchy data $(\Sigma, h_{ij}, K_{ij})$ which satisfy the vacuum constraint equations and \eq{E_ineqs} (cf.\ footnote~\ref{footnote_E_conditions}).
%
The emerging Cauchy development
 admits a (possibly complex) KVF $X$  such that the associated MST vanishes if and only if
 there exists
a function $q: \Sigma \rightarrow \mathbb{C}$ and a co-vector field $\mathcal{P}$ such that
\eq{gen_vanishing_MST}, \eq{q_PDE},  \eq{Equation2} and \eq{Equation1B} hold.
In that case $X^{\mu}|_{\Sigma}= (\sigma, Y^i)$, where $\sigma$ and $Y^i$ are given by \eq{cand_sigma} and \eq{candidate_field_Y}, respectively,
and $X^{\mu}|_{\Sigma}$ is non-trivial.
\end{proposition}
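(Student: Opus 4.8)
The plan is to derive this statement by combining Proposition~\ref{interm_prop_1} with the computations carried out in the present subsection, the point being that hypotheses (iii) and (iv) of Proposition~\ref{interm_prop_1} will turn out to be automatic consequences of (i) and (ii).

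For the ``only if'' direction I would first observe that, in the presence of a solution $q$ of \eq{gen_vanishing_MST}, the hypothesis \eq{E_ineqs} is equivalent to the inequalities \eq{spec_ineqs}, \eq{spec_ineqs2} and \eq{all_ineqs_toget2}: indeed \eq{gen_vanishing_MST} gives $q\mathcal{P}^2=\mp\sqrt{3/2}\,\sqrt{\mathcal{E}^2}$ as in \eq{relation_qP_E}, while $\mathcal{C}^2|_\Sigma=16\,\mathcal{E}^2$, so in particular $\mathcal{P}^2\ne0$. Under these conditions Proposition~\ref{interm_prop_1}(i)--(ii) applies verbatim and produces a function $q$ and a co-vector field $\mathcal{P}$ solving \eq{gen_vanishing_MST}, \eq{q_PDE}, \eq{Equation2}, \eq{Equation1B}, together with the representations \eq{cand_sigma}, \eq{candidate_field_Y} for $X^\mu|_\Sigma$.

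For the ``if'' direction, suppose such a pair $(q,\mathcal{P})$ is given and let $\sigma$ and $Y^i$ be the fields defined by \eq{cand_sigma} and \eq{candidate_field_Y}; as above, \eq{gen_vanishing_MST} and \eq{E_ineqs} ensure that every denominator occurring in these and the subsequent formulas is non-zero. It then remains to check hypotheses (iii) and (iv) of Proposition~\ref{interm_prop_1}, and this is precisely what the calculations of this subsection accomplish. Concretely, I would (a) compute $\mcD_k\sigma$ from \eq{cand_sigma} and \eq{q_PDE}, and $\mcD_k\mcD_{(i}\mathcal{P}_{j)}$ by differentiating \eq{Equation2}; (b) assemble these into the full expression \eq{full_deriv_Y} for $\mcD_iY_j$; (c) read off from its symmetric part that $\mcD_{(i}Y_{j)}=-\sigma K_{ij}$, which is the first KID equation \eq{KID1}; (d) read off from its antisymmetric part the identity \eq{relation_P_Y_sigma}, $\mcD_{[i}Y_{j]}=i\mathring\epsilon_{ij}{}^{k}(\mcD_k\sigma-\mathcal{P}_k+K_{kl}Y^l)$, which is equivalent to \eq{dfn_calP2} and hence gives (iv); and (e) differentiate \eq{dfn_calP2} once more, and simplify using \eq{Equation2}, \eq{KID1}, \eq{gen_vanishing_MST}, \eq{candidate_field_Y}, \eq{Equation1B}, the Ricci identity and the vacuum constraints, to obtain $\mcD_i\mcD_j\sigma=\sigma(\mathring R_{ij}+KK_{ij}-2K_{ik}K_j{}^k-\Lambda h_{ij})-\mcL_YK_{ij}$, which is the second KID equation \eq{KID2} and gives (iii). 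Once (i)--(iv) hold, Proposition~\ref{interm_prop_1}, together with Lemma~\ref{lemma_dq_qC} and Corollary~\ref{cor_vanishing_MST}, delivers a KVF $X$ with $X^\mu|_\Sigma=(\sigma,Y^i)$ whose associated MST vanishes in a neighborhood of $\Sigma$; non-triviality of $X|_\Sigma$ follows from \eq{relation_P_Y_sigma}, since $X|_\Sigma=0$ would force $\mathcal{P}_k=0$, contradicting $\mathcal{P}^2\ne0$.

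The hard part is purely computational rather than conceptual: in steps (c)--(e) one has to recognise that several $K$-quadratic index contractions vanish identically by virtue of the Hamiltonian constraint \eq{constraint1}, and that combining them with the algebraic relation $\mathcal{E}_{ij}=q(\mathcal{P}_i\mathcal{P}_j)\breve{}$ of \eq{gen_vanishing_MST} is exactly what collapses the remaining terms --- these are the cancellations flagged by the underbraces in the displayed formulas, and carrying them out without error is the delicate step.
\qed
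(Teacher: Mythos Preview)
Your proposal is correct and follows essentially the same route as the paper: you reduce the statement to Proposition~\ref{interm_prop_1} and then verify that hypotheses (iii) and (iv) there are automatic consequences of (i)--(ii), via exactly the sequence of computations the paper carries out (derivative of $\sigma$, second derivative of $\mathcal{P}$, assembly of $\mcD_iY_j$, symmetric/antisymmetric split, and a final differentiation of \eq{dfn_calP2}). The non-triviality argument from \eq{relation_P_Y_sigma} and $\mathcal{P}^2\ne0$ is also the paper's.
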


\subsection{The equations for $\mathcal{P}$ revisited}

Proposition~\ref{interm_prop_2} requires the existence of  a function $q$ and a co-vector field $\mathcal{P}$ such that, for given Cauchy data $(\Sigma, h_{ij}, K_{ij}$)
(recall the definition of $\mathcal{E}_{ij}$ \eq{dfn_cal_E}, $\sigma$ \eq{cand_sigma} and  $\mathcal{Q}_i$ \eq{dfn_calQ}),
\begin{eqnarray}
\mathcal{E}_{ij} &=&  q (\mathcal{P}_i\mathcal{P}_j)\breve{}
\;,
\label{gen_vanishing_MST2}
\\
 \mcD_i q &=& - \frac{1}{4} \frac{q\mathcal{P}^2 -5 \Lambda}{q\mathcal{P}^2 - \frac{\Lambda}{2}} q \mcD_i \log\mathcal{P}^2
\;,
\label{q_PDE2}
\\
\mcD_{(i} \mathcal{P}_{j)}
&=&
\frac{3}{4}  \frac{q\mathcal{P}^2}{  q \mathcal{P}^2-\frac{\Lambda}{2}} \mathcal{P}_{(i}\mcD_{j)} \log\mathcal{P}^2
+ i \mathring\epsilon_{(i}{}^{kl} K_{j)k}\mathcal{P}_l
 -\frac{\sigma}{3}(q\mathcal{P}^2 + \Lambda) h_{ij}
\;,
\phantom{xx}
\label{Equation2_2}
\\
\mathcal{Q}_{i}
&=&  \mathcal{P}^{-2} \mathcal{P}^k \mathcal{Q}_k\mathcal{P}_{i}   -  \frac{1}{4} \frac{q\mathcal{P}^2 -2 \Lambda}{q\mathcal{P}^2 - \frac{\Lambda}{2}}
\mathring \epsilon_{i}{}^{kl} \mathcal{P}_k  \mcD_l \log\mathcal{P}^2
  +i   K_{i}{}^k\mathcal{P}_k
\nonumber
\\
&&\qquad
   -i \mathcal{P}^{-2}  K^{kl}\mathcal{P}_k\mathcal{P}_{l}\mathcal{P}_i
\;.
\label{Equation1B_2}
\end{eqnarray}
In this section  we want to analyze to what extent these equations are independent of each other, or rather if one of them is implied by the remaining ones.
The most promising  starting point is undoubtedly to differentiate  \eq{gen_vanishing_MST2}.
With \eq{q_PDE2} we obtain
\begin{equation}
\mcD_k\mathcal{E}_{ij} +  \frac{1}{4} \frac{q\mathcal{P}^2 -5 \Lambda}{q\mathcal{P}^2 - \frac{\Lambda}{2}} q  \mathcal{P}_i\mathcal{P}_j\mcD_k \log\mathcal{P}^2
 - 2q \mathcal{P}_{(i} \mcD_{|k|}\mathcal{P}_{j)}
 +  \frac{1}{4} \frac{q\mathcal{P}^2 + \Lambda}{q\mathcal{P}^2 - \frac{\Lambda}{2}} q h_{ij}\mcD_k \mathcal{P}^2  \,=\,0
\;.
\label{deriv_main_eqn}
\end{equation}
To extract  one of the above equations, though,  we need to eliminate the derivative of $\mathcal{E}_{ij}$.
For this purpose, recall that the vacuum constraints impose  restrictions \eq{constraints_E} 
on $\mathcal{E}_{ij}$. Taking also \eq{gen_vanishing_MST2}
into account that yields an equation of a form we are looking for,
\begin{equation}
\mcD_j \mathcal{E}_{i}{}^{j}
\,=\,    i q\mathring\epsilon_{i}{}^{jk}K_j{}^l \mathcal{P}_{k}\mathcal{P}_{l}
\;.
\label{constraints_EP}
\end{equation}
On the other hand, applying $h^{jk}$ to  \eq{deriv_main_eqn}  yields
\begin{equation}
\mcD_j\mathcal{E}_{i}{}^{j}
 - 2q \mathcal{P}^{j}\mcD_{(i}\mathcal{P}_{j)}
 - q \mathcal{P}_{i} \mcD_{j}\mathcal{P}^{j}
+  \frac{1}{4} \frac{q\mathcal{P}^2 -5 \Lambda}{q\mathcal{P}^2 - \frac{\Lambda}{2}} q  \mathcal{P}_i\mathcal{P}^j\mcD_j \log\mathcal{P}^2
 +  \frac{3}{4} \frac{q\mathcal{P}^2 }{q\mathcal{P}^2 - \frac{\Lambda}{2}} q \mcD_i \mathcal{P}^2  \,=\,0
\;.
\end{equation}
Combined we obtain (recall \eq{cand_sigma} and note that $q\ne 0$ in our  current setting of Proposition~\ref{interm_prop_2})
\begin{equation}
 2 \mathcal{P}^{j}\mcD_{(i}\mathcal{P}_{j)}
 +  \mathcal{P}_{i} \mcD_{j}\mathcal{P}^{j}
-  \frac{1}{3}\sigma (q\mathcal{P}^2 -5 \Lambda)  \mathcal{P}_i
 -  \frac{3}{4} \frac{q\mathcal{P}^2 }{q\mathcal{P}^2 - \frac{\Lambda}{2}}  \mcD_i \mathcal{P}^2
-  i \mathring\epsilon_{i}{}^{jk}K_j{}^l \mathcal{P}_{k}\mathcal{P}_{l}
\,=\,0
\;.
\label{interm_div_relation}
\end{equation}
We contract this equation with $\mathcal{P}^i$ to recover \eq{divP} as the trace of   \eq{Equation2_2},
\begin{equation}
 \mcD_{j}\mathcal{P}^{j}
\,=\, -\Lambda \sigma
\,.
\label{sigma_Lambda}
\end{equation}
We insert this into \eq{interm_div_relation}
to  recover \eq{Equation2_contr} as the contraction of  \eq{Equation2_2} with $\mathcal{P}^j$,
\begin{equation}
\mathcal{P}^j\mcD_{(i} \mathcal{P}_{j)}
\,=\,   \frac{1}{6}\sigma(q\mathcal{P}^2 -2 \Lambda)\mathcal{P}_i
+\frac{3}{8}  \frac{q\mathcal{P}^2}{  q \mathcal{P}^2-\frac{\Lambda}{2}}\mcD_{i} \mathcal{P}^2
+ \frac{i}{2} \mathring\epsilon_{i}{}^{jk} K_{j}{}^l\mathcal{P}_k\mathcal{P}_l
\;.
\label{Equation2_contr_2}
\end{equation}
%
%
%
Let us rewrite \eq{Equation2_contr_2}.
With \eq{cand_sigma} and  the identity
$
\mathcal{P}^j\mcD_{(i} \mathcal{P}_{j)}  \,\equiv\, \frac{1}{2}\mcD_{i} \mathcal{P}^2 - \frac{1}{2}\mathring\epsilon_{i}{}^{jk}\mathcal{P}_j\mathcal{Q}_k
$
we obtain, after contraction with~$\mathring\epsilon_{pq}{}^i$,
\begin{equation}
2\mathcal{P}_{[p}\mathcal{Q}_{q]}
\,=\, -\frac{1}{3} \sigma(q\mathcal{P}^2 -2 \Lambda)\mathring\epsilon_{pq}{}^k\mathcal{P}_k
+\frac{1}{4} \frac{ q\mathcal{P}^2 -2\Lambda}{  q \mathcal{P}^2-\frac{\Lambda}{2}}\mathring\epsilon_{pq}{}^k\mcD_{k} \mathcal{P}^2
-2i K_{[p}{}^k\mathcal{P}_{q]}\mathcal{P}_k
\;.
\label{anti_P_Q}
\end{equation}
If we contract this equation with $\mathcal{P}^q$ we recover \eq{Equation1B_2}.

By way of summary, the vacuum constraints, \eq{gen_vanishing_MST2}-\eq{q_PDE2} \emph{imply}
\eq{Equation1B_2} and certain components of \eq{Equation2_2}, namely its trace \eq{sigma_Lambda} and its contraction with $\mathcal{P}^j$ \eq{Equation1B_2}.

Let us bring \eq{Equation2_2} in a form which takes care of the fact that some of its components are redundant.
For this purpose, set
\begin{equation}
\mathcal{A}_{ij}:=\mathcal{P}^2\mcD_{(i}\mathcal{P}_{j)}\,,
\end{equation}
 and note that \eq{sigma_Lambda} and \eq{cand_sigma} imply
\begin{equation}
\mathrm{tr}(\mathcal{A}) \,=\, -\Lambda\sigma\mathcal{P}^2\,=\,-\frac{3}{2} \Lambda \Big(  q\mathcal{P}^2 - \frac{\Lambda}{2}\Big)^{-1}\mathcal{P}^k \mathcal{P}^l\mathcal{A}_{kl}
\;.
\label{trace_A}
\end{equation}
On the other hand, we  use \eq{cand_sigma} to write \eq{Equation2_contr_2}  as
\begin{equation}
  q \mathcal{P}^4\mathring\epsilon_{i}{}^{kl}\mathcal{P}_k\mathcal{Q}_l
=
\frac{2}{3} (q\mathcal{P}^2- 2\Lambda)
\Big(\mathcal{P}^k\mathcal{A}_{ik} - \mathcal{P}^k\mathcal{P}^l \mathcal{A}_{kl}\mathcal{P}_i\Big)
- \frac{4}{3} i \Big( q \mathcal{P}^2-\frac{\Lambda}{2}\Big)\mathcal{P}^2\mathring\epsilon_{i}{}^{jk} K_{j}{}^l\mathcal{P}_l\mathcal{P}_k
\;.
\label{PQ_antisym}
\end{equation}
Finally, we employ \eq{trace_A} and \eq{PQ_antisym} to  rewrite \eq{Equation2_2},
%
\begin{eqnarray}
\mathcal{A}_{ij}
&=& \frac{3}{2}  \frac{q}{  q \mathcal{P}^2-\frac{\Lambda}{2}} \mathcal{P}_{(i}\mathcal{P}^k\mathcal{A}_{j)k}
  - \frac{1}{2} \frac{q\mathcal{P}^2 + \Lambda}{q\mathcal{P}^2 - \frac{\Lambda}{2}} \mathcal{P}^{-2}\mathcal{P}^k\mathcal{P}^l \mathcal{A}_{kl} h_{ij}
\nonumber
\\
&&\hspace{7em}
+\frac{3}{4}  \frac{q\mathcal{P}^2}{  q \mathcal{P}^2-\frac{\Lambda}{2}}\mathring\epsilon_{(i}{}^{kl} \mathcal{P}_{j)}\mathcal{P}_k \mathcal{Q}_l
+ i \mathcal{P}^2\mathring\epsilon_{(i}{}^{kl} K_{j)k}\mathcal{P}_l
\nonumber
\\
&=&2\mathcal{P}^{-2}   \mathcal{P}_{(i}\mathcal{P}^k\mathcal{A}_{j)k}
- \frac{1}{2}\mathcal{P}^{-2}\mathcal{P}^k\mathcal{P}^l \mathcal{A}_{kl} \Big( \mathcal{P}^{-2}\mathcal{P}_i\mathcal{P}_{j}
  +  h_{ij}\Big)
\nonumber
\\
&&
-\frac{\mathrm{tr}(\mathcal{A})}{2}\Big(\mathcal{P}^{-2}\mathcal{P}_i\mathcal{P}_j - h_{ij}\Big)
+  i\mathring\epsilon_{(i}{}^{kl} \Big(  \mathcal{P}^2K_{j)k}- K_{|k|}{}^m  \mathcal{P}_{j)}\mathcal{P}_m
\Big)\mathcal{P}_l
\;.
\phantom{xxxx}
\label{second_main_eqn}
\end{eqnarray}
Conversely, this equation implies \eq{Equation2_2} when using  \eq{sigma_Lambda} and  \eq{Equation2_contr_2}  (which in turn
follow from  \eq{gen_vanishing_MST2}, \eq{q_PDE2}, and the vacuum constraints).
Its trace and its contraction with $\mathcal{P}^ j$ are automatically satisfied, which reflects the fact that the same is true for the corresponding components of \eq{Equation2_2}.
Equation  \eq{second_main_eqn} therefore has two non-trivial independent components
which need to be fulfilled.

One may replace \eq{second_main_eqn} by a scalar equation. Since $(\Sigma,h_{ij})$ is a Riemannian manifold,
such an equation can immediately be obtained:
We write all terms on one side and compute its norm. Proceeding this way
 we find that
\eq{second_main_eqn}  is equivalent to
(we set $\mathcal{B}_i :=\mathcal{A}_{ij}\mathcal{P}^j$, $\mathcal{C}:=\mathcal{A}_{ij}\mathcal{P}^i\mathcal{P}^j$,
$\mathfrak{b}_i:=K_{ij}\mathcal{P}^j$, $\mathfrak{c}:=K_{ij}\mathcal{P}^i\mathcal{P}^j$),
%
\begin{eqnarray}
0 \,=\,\mathfrak{K}
&:=&
\mathcal{P}^4 |\mathcal{A}|^2
-2\mathcal{P}^2|\mathcal{B}|^2
+  \frac{1}{2}\mathcal{C}^2
  + \mathcal{P}^2\mathrm{tr}\mathcal{A}\Big( \mathcal{C}
- \frac{1}{2} \mathcal{P}^2\mathrm{tr}\mathcal{A}
\Big)
\nonumber
\\
&&
+\mathcal{P}^6\Big( 2\mathcal{P}^2|\mathfrak{b}|^2
- \mathcal{P}^4 ( |K|^2  + \frac{1}{2}K^2)
-  \frac{1}{2}\mathfrak{c}^2
-K\mathcal{P}^{2} \mathfrak{c}\Big)
\nonumber
\\
&&
+ 2 i\mathring\epsilon^{ijk}\mathcal{P}^4\mathcal{P}_k\Big(\mathcal{B}_i\mathfrak{b}_j
-\mathcal{P}^2 \mathcal{A}_{il}  K_{j}{}^{l}\Big)
\;.
\label{main_scalar}
\label{alg_cond2}
\end{eqnarray}
%

%
%

\subsection{Cauchy data for vacuum space-times with vanishing MST}
\label{sect_Cauchy_data}

It follows from Lemma~\ref{some_intermediate} that the conditions \eq{gen_vanishing_MST2}-\eq{q_PDE2}, needed to apply
Proposition~\ref{interm_prop_2},
can be replaced by the condition
\begin{equation}
\mathcal{E}_{ij} - q_{\mathcal{C}} (\mathcal{P}_i\mathcal{P}_j)\breve{}  \,=\,0
\;,
\label{alg_cond1}
\end{equation}
where
\begin{equation}
q_{\mathcal{C}}\,:=\, Q_{\mathcal{C}}|_{\Sigma} \,=\,\pm
\widetilde\varkappa^{-2}(\mathcal{E}^2)^{5/6}\Big( \pm\sqrt{\mathcal{E}^2} -  \sqrt{ \frac{2}{3}}\Lambda \Big)^{-2}
\;, \quad \widetilde\varkappa\in\mathbb{C}\setminus\{0\}
\;.
\label{dfn_qC}
\end{equation}
One checks
that $ \widetilde\varkappa\mapsto \lambda \widetilde\varkappa$, $\lambda\in \mathbb{C}$ implies
  $(\sigma,Y^i)\mapsto (\lambda\sigma, \lambda Y^i)$.
The constant $\widetilde\varkappa$ therefore provides a gauge freedom which reflects the freedom to choose a scale  of the KVF.
It may be set equal to $1$.

\label{section_solv_SC}

In this section we discuss the solvability of \eq{alg_cond1}.
Assume that \eq{all_ineqs_Cauchy1}  holds.
We deduce from \eq{F2_C2_relation},
\begin{equation}
\mathcal{P}^2 \,=\, -\frac{1}{4}\mathcal{F}^2|_{\Sigma}
 \,=\, - \sqrt{\frac{3}{2}}\,\widetilde\varkappa^2(\mathcal{E}^2)^{-1/3}\Big(\pm\sqrt{\mathcal{E}^2} -  \sqrt{ \frac{2}{3}}\Lambda \Big)^2
\label{P_E_relation}
\;.
\end{equation}
Let us analyze the vanishing of the MST  $\mathcal{S}^{(\mathcal{C})}_{\alpha\beta\mu\nu}$ on $\Sigma$,%
\footnote{
We note  that \eq{P_E_relation} and \eq{main_condition2}   imply
$
\mathcal{P}^k \mathcal{P}^l\mathcal{E}_{kl}\mathcal{P}^{-2} = \pm \sqrt{\frac{2}{3}\mathcal{E}^2}
$,
%
and  observe that \eq{main_condition2_0}
is equivalent  \eq{main_condition}, as one should expect from Lemma~\ref{lemma_equivalence_Qs}.
}
\begin{eqnarray}
&&\mathcal{S}^{(\mathcal{C})}_{\alpha\beta\mu\nu}|_{\Sigma}\,=\, 0 \quad \Longleftrightarrow \quad \mathcal{S}^{(\mathcal{C})}_{titj}|_{\Sigma}\,=\, 0
\\
&\Longleftrightarrow &
(\mathcal{P}_i\mathcal{P}_j)\breve{}
\,=\,\mp \widetilde\varkappa^2 (\mathcal{E}^2)^{-5/6}\Big(\pm \sqrt{ \mathcal{E}^2}
-\sqrt{\frac{2}{3}}\,\Lambda\Big)^{2}\mathcal{E}_{ij}
 \label{main_condition2_0}
\\
& \Longleftrightarrow &
\mathcal{P}_i\mathcal{P}_j
\,=\, \mp  \widetilde\varkappa^2  (\mathcal{E}^2)^{-5/6}\Big(\pm \sqrt{ \mathcal{E}^2}
-\sqrt{\frac{2}{3}}\,\Lambda\Big)^{2} \Big(\mathcal{E}_{ij} \pm \sqrt{\frac{\mathcal{E}^2}{6}}\,  h_{ij}\Big)
\;.
\phantom{xxx}
\label{main_condition2}
\end{eqnarray}
%


Conditions  which characterize  solvability and uniqueness of  equations of a form such as in \eq{main_condition2}, regarded as equations for $\mathcal{P}_i$,  and different approaches to
construct  solutions thereof are discussed in paper I \cite{kerr1}.
Let us summarize the results:
\begin{enumerate}
\item[(i)]A solution exists for at most one choice of $\pm$.
It is then uniquely determined
up to a sign.
\item[(ii)] A solution exists if and only if
\begin{equation}
\mathcal{E}_{ik}\mathcal{E}^k{}_j \mp \sqrt{\frac{\mathcal{E}^2}{6}} \mathcal{E}_{ij} - \frac{\mathcal{E}^2}{3}h_{ij}\,=\, 0
\;.
\label{solvability_cond}
\end{equation}
Again, this happens  at most for either $+$ or $-$.
%
\item[(iii)]
Let $\mathcal{W}^i$ be any vector with $|\big(\mathcal{E}_{ij} \pm \sqrt{\frac{\mathcal{E}^2}{6}}\,  h_{ij}\big)\mathcal{W}^j|^2=1$ (its existence is ensured by  $\mathcal{E}^2\ne 0$).
Then
%
\begin{equation}
\mathcal{P}_i \,=\,
 i \Big(\frac{3}{2}\Big)^{1/4}\widetilde\varkappa\,(\mathcal{E}^2)^{-1/6}\Big( \pm\sqrt{ \mathcal{E}^2}
-\sqrt{\frac{2}{3}}\,\Lambda\Big) \Big(\mathcal{E}_{ij} \pm \sqrt{\frac{\mathcal{E}^2}{6}}\,  h_{ij}\Big)\mathcal{W}^j
\label{P_via_Cauchy_data}
\end{equation}
solves  \eq{main_condition2}, supposing that a solution exists, i.e.\ supposing that \eq{solvability_cond}~holds.
\end{enumerate}


Because of  
\eq{solvability_cond}-\eq{P_via_Cauchy_data} we have \eq{P_E_relation} and
\begin{eqnarray}
\mathcal{Q}_i
&=&
 i \Big(\frac{3}{2}\Big)^{1/4}\widetilde\varkappa\,(\mathcal{E}^2)^{-1/6}\mathring\epsilon_i{}^{jk}
\Big[ \frac{1}{3}\Big( \pm\sqrt{ \mathcal{E}^2}
+\frac{\Lambda}{\sqrt{6}}
\Big)
 \Big(\mathcal{E}_{kl}\mathcal{W}^l \pm \sqrt{\frac{\mathcal{E}^2}{6}}\mathcal{W}_k\Big)\mcD_j \log\mathcal{E}^2
\nonumber
\\
&&
+\Big( \pm\sqrt{ \mathcal{E}^2}
-\sqrt{\frac{2}{3}}\,\Lambda\Big) \mcD_j\Big(\mathcal{E}_{kl}\mathcal{W}^l \pm \sqrt{\frac{\mathcal{E}^2}{6}}\mathcal{W}_k\Big)
\Big]
\;.
\label{Q_via_Cauchy_data}
\end{eqnarray}
We employ  \eq{relation_qP_E} and \eq{P_via_Cauchy_data}-\eq{Q_via_Cauchy_data} to express $\sigma$ and $Y^i$
(given by \eq{cand_sigma} and \eq{candidate_field_Y})  in terms of  the Cauchy data and $\mathcal{W}^i$,
\begin{eqnarray}
 \sigma
 &=&
-\frac{1}{\sqrt{6}}
\Big(\pm\sqrt{ \mathcal{E}^2} -\sqrt{\frac{2}{3}}\,\Lambda\Big)^{-1} \mathcal{P}^k\mcD_k  \log \mathcal{E}^2
\\
 &=&
- \frac{i}{2}\Big(\frac{2}{3}\Big)^{1/4} \widetilde\varkappa (\mathcal{E}^2)^{-1/6} \Big(\mathcal{E}^{kl}\mathcal{W}_l \pm \sqrt{\frac{\mathcal{E}^2}{6}} \mathcal{W}^k\Big) \mcD_k\log \mathcal{E}^2
\;,
\label{final_sigma}
\end{eqnarray}
and
\begin{eqnarray}
Y_i&=&
\sqrt{\frac{3}{2}} \Big( \pm\sqrt{\mathcal{E}^2} -\sqrt{\frac{2}{3}} \Lambda\Big)^{-1}\Big[ \frac{i}{2}\mathring\epsilon_i{}^{kl}
\frac{\pm\sqrt{\mathcal{E}^2}}{ \pm \sqrt{ \mathcal{E}^2}
-\sqrt{\frac{2}{3}}\,\Lambda}\mathcal{P}_k\mcD_l \log \mathcal{E}^2
\nonumber
\\
&&\qquad
+ i \mathcal{Q}_i
 +K_{i}{}^{k}\mathcal{P}_{k}-  K \mathcal{P}_{i}
\Big]
\\
&=&
 \Big(\frac{3}{2}\Big)^{3/4}\widetilde\varkappa(\mathcal{E}^2)^{-1/6}
 \Big[-\mathring\epsilon_i{}^{jk}\mcD_j
 +\frac{1}{6}\mathring\epsilon_i{}^{jk}\mcD_j \log\mathcal{E}^2
 +i K_{i}{}^{k}- iK\delta_i{}^k
\Big]
\nonumber
\\
&&\qquad
\times \Big(\mathcal{E}_{k}{}^{l} \mathcal{W}_l\pm \sqrt{\frac{\mathcal{E}^2}{6}}\mathcal{W}_k\Big)
\;.
\label{final_Y}
\end{eqnarray}



\begin{remark}
{\rm
As for \eq{second_main_eqn}, since $(\Sigma,h_{ij})$ is a Riemannian manifold, \eq{solvability_cond} can be replaced by the equation which requires the vanishing of the norm of its
left-hand side,
\begin{equation}
\mathfrak{H}\,:=\,\mathrm{tr}(\mathcal{E}\cdot\mathcal{E}\cdot\mathcal{E}\cdot\mathcal{E})
 \mp \sqrt{\frac{2}{3}}\sqrt{\mathrm{tr}(\mathcal{E}\cdot\mathcal{E})}\,\mathrm{tr}(\mathcal{E}\cdot\mathcal{E}\cdot\mathcal{E})
- \frac{1}{6}(\mathrm{tr}(\mathcal{E}\cdot\mathcal{E}))^2
\,=\, 0
\;.
\label{solvability_condB_norm}
\end{equation}
}
\end{remark}

We have proven the first main result, which provides an algorithmic characterization of Cauchy data which generate vacuum space-times with vanishing MST
(cf.\ \cite[Theorem~4.8]{kerr1} for the space-time pendant).
It  brings together all the results of the previous sections.
\begin{theorem}
\label{thm_first_main_result}
Consider Cauchy data $(\Sigma,h_{ij}, K_{ij})$ which solve the vacuum constraint equations and satisfy
(cf.\ footnote~\ref{footnote_E_conditions})
\begin{equation}
 \mathrm{tr}(\mathcal{E}\cdot\mathcal{E}) \,\ne \,0
\;, \enspace
\mathrm{tr}(\mathcal{E}\cdot\mathcal{E})   -\frac{2}{3}\,\Lambda^2 \,\ne \, 0
\;, \enspace
\mathrm{tr}(\mathcal{E}\cdot\mathcal{E})   - \frac{1}{6}\,\Lambda^2 \,\ne \, 0
\;, \enspace
\mathrm{tr}(\mathcal{E}\cdot\mathcal{E})   - \frac{8}{3}\,\Lambda^2 \,\ne \, 0
\;,
\label{final_assumptions}
\end{equation}
where
\begin{equation*}
\mathcal{E}_{ij} \,:=\, \mathring R_{ij}  + K K_{ij} - K_{ik}K_j{}^k -\frac{2}{3}\Lambda h_{ij}
 -i\mathring\epsilon_{i}{}^{kl}\mcD_{k}K_{lj}
\;.
\end{equation*}
Moreover, let $\mathcal{W}^i$ be any vector with $|\big(\mathcal{E}_{ij} \pm \sqrt{\frac{\mathcal{E}^2}{6}}\,  h_{ij}\big)\mathcal{W}^j|^2=1$
 (which exists),
and set
%
\begin{equation}
\mathcal{P}_i \,:=\,  
 i \Big(\frac{3}{2}\Big)^{1/4}\widetilde\varkappa\,(\mathcal{E}^2)^{-1/6}\Big( \pm\sqrt{ \mathcal{E}^2}
-\sqrt{\frac{2}{3}}\,\Lambda\Big) \Big(\mathcal{E}_{ij} \pm \sqrt{\frac{\mathcal{E}^2}{6}}\,  h_{ij}\Big)\mathcal{W}^j
\label{P_via_Cauchy_data2}
\end{equation}
(the ``right'' signs are determined by condition (i) below).
Then the emerging $\Lambda$-vacuum space-time admits a non-trivial (possibly complex) KVF such that the associated MST vanishes
(at least in some neighborhood of $\Sigma$)
if and only if  
 \eq{alg_cond2} and \eq{solvability_condB_norm} hold, i.e.
\begin{enumerate}
\item[(i)]
$\mathfrak{H}(\mathcal{E},h) \equiv \mathrm{tr}(\mathcal{E}\cdot\mathcal{E}\cdot\mathcal{E}\cdot\mathcal{E})
 \mp \sqrt{\frac{2}{3}\mathrm{tr}(\mathcal{E}\cdot\mathcal{E})}\,\mathrm{tr}(\mathcal{E}\cdot\mathcal{E}\cdot\mathcal{E})
- \frac{1}{6}[\mathrm{tr}(\mathcal{E}\cdot\mathcal{E})]^2
\,=\, 0$, and
\item[(ii)]
 $\mathfrak{K} (\mathcal{E},\mcD\mathcal{E},h,K)  \,=\,0$.
\end{enumerate}
In that case the Cauchy data are complemented to Killing initial data via \eq{final_sigma} and \eq{final_Y}.
\end{theorem}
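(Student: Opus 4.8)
The plan is to obtain the theorem by chaining the structural results of Section~\ref{section_3} --- Proposition~\ref{interm_prop_2}, Lemma~\ref{lemma_dq_qC}, the reduction of the $\mathcal{P}$-system carried out around \eq{constraints_EP}--\eq{second_main_eqn}, and the solvability analysis of Section~\ref{section_solv_SC} --- with Lemma~\ref{some_intermediate}. The first thing I would do is translate the hypotheses: by \eq{relation_qP_E}, squaring $q\mathcal{P}^2=\mp\sqrt{3/2}\sqrt{\mathcal{E}^2}$ turns the running nondegeneracy assumptions $q\mathcal{P}^2\neq0$, $q\mathcal{P}^2+\Lambda\neq0$, $q\mathcal{P}^2-\frac{\Lambda}{2}\neq0$, $q\mathcal{P}^2-2\Lambda\neq0$ (that is, \eq{spec_ineqs}, \eq{spec_ineqs2}, \eq{all_ineqs_toget2}) precisely into the four inequalities \eq{final_assumptions} on $\mathcal{E}^2$, which are also \eq{all_ineqs_Cauchy1}--\eq{all_ineqs_Cauchy2} up to the choice of sign (cf.\ footnote~\ref{footnote_E_conditions}). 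Hence Lemmas~\ref{some_intermediate}, \ref{lemma_dq_qC}, \ref{lemma_equivalence_Qs}, Corollary~\ref{cor_vanishing_MST} and Proposition~\ref{interm_prop_2} are all available, and every square root in \eq{dfn_qC}, \eq{P_E_relation}, \eq{P_via_Cauchy_data2} is of a nowhere-vanishing function, taken throughout for the one sign for which $\mathcal{S}^{(\mathcal{C})}$ can vanish.

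Next I would streamline the system of Proposition~\ref{interm_prop_2}. That proposition says the emerging development admits a (necessarily non-trivial, by \eq{relation_P_Y_sigma} together with $\mathcal{P}^2\neq0$) possibly complex KVF with vanishing MST if and only if there exist $q:\Sigma\to\mathbb{C}$ and a co-vector field $\mathcal{P}$ solving \eq{gen_vanishing_MST2}, \eq{q_PDE2}, \eq{Equation2_2} and \eq{Equation1B_2}, and in that case $(\sigma,Y^i)$ is given by \eq{cand_sigma}, \eq{candidate_field_Y}. By Lemma~\ref{some_intermediate} and Lemma~\ref{lemma_dq_qC}, the pair \eq{gen_vanishing_MST2}--\eq{q_PDE2} is equivalent, under the standing hypotheses, to the single algebraic relation \eq{alg_cond1} with $q=q_{\mathcal{C}}$, i.e.\ to \eq{main_condition2} read as an equation for $\mathcal{P}_i$. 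As shown around \eq{constraints_EP}--\eq{second_main_eqn}, once \eq{gen_vanishing_MST2}, \eq{q_PDE2} and the vacuum constraints hold, equation \eq{Equation1B_2} and the trace \eq{sigma_Lambda} and the $\mathcal{P}^j$-contraction \eq{Equation2_contr_2} of \eq{Equation2_2} follow automatically, so the only residual content of \eq{Equation2_2} sits in its two independent components, repackaged as \eq{second_main_eqn} and --- $(\Sigma,h_{ij})$ being Riemannian --- as the scalar equation $\mathfrak{K}=0$, \eq{alg_cond2}. Finally, the solvability analysis of Section~\ref{section_solv_SC} states that \eq{main_condition2} is solvable for at most one sign, uniquely up to that sign and the scale $\widetilde\varkappa$, and that it is solvable if and only if \eq{solvability_cond} --- equivalently, since the metric is positive definite, $\mathfrak{H}=0$ --- holds, in which case the solution is exactly \eq{P_via_Cauchy_data2} and \eq{P_E_relation}, \eq{Q_via_Cauchy_data} ensue.

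Both implications of the theorem then fall out. If $\mathfrak{H}=0$ and $\mathfrak{K}=0$: take $\mathcal{P}_i$ as in \eq{P_via_Cauchy_data2} and $q:=q_{\mathcal{C}}$; then \eq{gen_vanishing_MST2}, \eq{q_PDE2} hold, \eq{Equation1B_2} and the redundant components of \eq{Equation2_2} hold automatically, and $\mathfrak{K}=0$ supplies the two remaining components of \eq{Equation2_2}, so Proposition~\ref{interm_prop_2} produces the required non-trivial KVF with vanishing MST; substituting \eq{relation_qP_E}, \eq{P_via_Cauchy_data2}, \eq{Q_via_Cauchy_data} into \eq{cand_sigma}, \eq{candidate_field_Y} yields the closed forms \eq{final_sigma}, \eq{final_Y}. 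Conversely, if such a KVF exists, Proposition~\ref{interm_prop_2} gives $q,\mathcal{P}$ solving all four equations; Lemma~\ref{lemma_dq_qC} forces $q=q_{\mathcal{C}}$, so \eq{main_condition2} is solvable, which forces \eq{solvability_cond}, i.e.\ $\mathfrak{H}=0$, and this $\mathcal{P}$ agrees with \eq{P_via_Cauchy_data2} up to sign and scale; that \eq{Equation2_2} holds for it then gives its two independent components, i.e.\ $\mathfrak{K}=0$.

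Conceptually the argument is just a bookkeeping of equivalences, and the genuinely laborious steps are the ones already performed in Section~\ref{section_3}: that the candidate fields \eq{cand_sigma}, \eq{candidate_field_Y} satisfy the KID equations \eq{KID1}--\eq{KID2} and that \eq{dfn_calP} holds automatically (the computations culminating in \eq{relation_P_Y_sigma} and in the second-KID identity), which here are merely invoked. The one point that demands care is the sign discipline --- the $\pm$ in $\mathfrak{H}$, in $\mathcal{P}_i$, in $q_{\mathcal{C}}$ and in the solvability criterion must all be the single sign for which $\mathcal{S}^{(\mathcal{C})}$ vanishes (footnote~\ref{footnote_E_conditions}) --- together with the parallel caution that the trace and the $\mathcal{P}^j$-contraction of \eq{Equation2_2} may not be used to eliminate $\sigma$ before $\sigma$ has been tied to the Cauchy data through \eq{cand_sigma}.
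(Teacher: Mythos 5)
Your proposal is correct and follows exactly the route the paper itself takes: Theorem~\ref{thm_first_main_result} is obtained in the text precisely by assembling Lemma~\ref{some_intermediate}, Lemma~\ref{lemma_dq_qC}, Proposition~\ref{interm_prop_2}, the reduction of the $\mathcal{P}$-system to \eq{second_main_eqn}/\eq{alg_cond2}, and the solvability criterion \eq{solvability_cond}/\eq{solvability_condB_norm}, with the same translation of \eq{all_ineqs_toget2} into \eq{final_assumptions} via \eq{relation_qP_E}. Your handling of the sign discipline and of the non-triviality of the KVF via \eq{relation_P_Y_sigma} also matches the paper.
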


\begin{remark}
{\rm
(i) and (ii) may be replaced by their tensor-equivalents \eq{solvability_cond} and  \eq{second_main_eqn}.
Alternatively, they may be combined into one single scalar equation,
\begin{equation}
\mathfrak{L} \,:=\, \mathfrak{H}^2 + \mathfrak{K}^2 \,=\, 0
\;,
\end{equation}
which depends only on $h_{ij}$, $K_{ij}$ and derivatives thereof.
}
\end{remark}

\begin{corollary}
The function $\mathfrak{L}$ provides, under the hypotheses \eq{final_assumptions}, a measure for the  deviation of  $\Lambda$-vacuum initial data $(\Sigma, h_{ij}, K_{ij})$
to initial data  which admit a (possibly complex) KVF whose  associated MST vanishes.
\end{corollary}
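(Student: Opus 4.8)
I would deduce this immediately from Theorem~\ref{thm_first_main_result}. That theorem states that, under the hypotheses \eq{final_assumptions}, the maximal globally hyperbolic development of $(\Sigma,h_{ij},K_{ij})$ admits a non-trivial (possibly complex) KVF whose associated Mars--Simon tensor vanishes in a neighbourhood of $\Sigma$ \emph{precisely} when the two scalars $\mathfrak{H}$ of \eq{solvability_condB_norm} and $\mathfrak{K}$ of \eq{alg_cond2} both vanish. Hence it suffices to check that $\mathfrak{L}$ is a bona fide scalar field, built from $h_{ij}$, $K_{ij}$ and finitely many of their derivatives by differentiation and extraction of roots alone, whose zero set is exactly $\{\mathfrak{H}=0\}\cap\{\mathfrak{K}=0\}$, and that $\mathfrak{L}\ge 0$.

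For well-definedness I would first fix the gauge freedom by setting $\widetilde\varkappa=1$. Then $\mathfrak{H}$ is built from $\mathcal{E}_{ij}$ and $h_{ij}$ only, hence from $h$, $K$ and one $\mcD$-derivative of $K$, with no reference to $\mathcal{P}$. For $\mathfrak{K}$, note that the algebraic relation \eq{solvability_cond} forces the endomorphism $\mathcal{E}_{ij}\pm\sqrt{\mathcal{E}^2/6}\,h_{ij}$ to have one-dimensional range (its square is a nonzero multiple of itself, and, since $h^{ij}\mathcal{E}_{ij}=0$, its trace pins the multiplicity of the nonzero eigenvalue to one); consequently the field $\mathcal{P}_i$ of \eq{P_via_Cauchy_data2} is determined by the Cauchy data up to an overall sign, independently of the auxiliary vector $\mathcal{W}^i$. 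A direct inspection of \eq{alg_cond2} --- where $\mathfrak{K}$ is written through $\mathcal{A}_{ij}=\mathcal{P}^2\mcD_{(i}\mathcal{P}_{j)}$, $\mathcal{B}_i$, $\mathcal{C}$, $\mathfrak{b}_i$, $\mathfrak{c}$ --- shows that every monomial carries an even total power of $\mathcal{P}$ (the $\mathring\epsilon$-term included), so $\mathfrak{K}$, and hence $\mathfrak{L}=\mathfrak{H}^2+\mathfrak{K}^2$, does not depend on the leftover data (the branch of the square roots, the sign of $\mathcal{P}$, the vector $\mathcal{W}^i$): it is a genuine function of $h_{ij}$, $K_{ij}$ and their derivatives up to the order entering $\mcD\mathcal{E}$ and $\mcD\mathcal{P}$. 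Because $\mathcal{E}_{ij}=E_{ij}+iB_{ij}$ is complex, I would read $\mathfrak{H}^2+\mathfrak{K}^2$ as $|\mathfrak{H}|^2+|\mathfrak{K}|^2$ --- equivalently the sum of the squares of the real and imaginary parts of $\mathfrak{H}$ and $\mathfrak{K}$ --- so that $\mathfrak{L}\ge 0$ and the equivalence $\mathfrak{L}=0\Longleftrightarrow\mathfrak{H}=\mathfrak{K}=0$ is exact.

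Combining this equivalence with Theorem~\ref{thm_first_main_result} yields: within the regime \eq{final_assumptions}, $\mathfrak{L}=0$ if and only if the development of $(\Sigma,h_{ij},K_{ij})$ admits a non-trivial KVF with vanishing Mars--Simon tensor near $\Sigma$, and $\mathfrak{L}>0$ otherwise; so $\mathfrak{L}$ is the asserted deviation measure, computable from the initial data by differentiation and root extraction only. The only genuinely non-routine part of filling this in is exactly the bookkeeping just sketched: verifying that all sign, phase and $\mathcal{W}^i$-ambiguities cancel in $\mathfrak{L}$, that the required branch of the square roots and the sign of $\mathcal{P}$ can be chosen smoothly near each point of $\Sigma$ so that $\mcD\mathcal{P}$ is meaningful, and that the complex-valuedness of $\mathcal{E}$ is accounted for so that $\mathfrak{L}$ is a true non-negative scalar rather than merely a (complex) scalar vanishing on the correct set.
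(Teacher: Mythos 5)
Your proposal is correct and follows the same route the paper intends: the corollary is an immediate consequence of Theorem~\ref{thm_first_main_result} together with the preceding remark, and the paper offers no further argument. Your extra bookkeeping (evenness of $\mathfrak{K}$ in $\mathcal{P}$, independence of $\mathcal{W}^i$ and of the sign of $\mathcal{P}$) is consistent with what the paper already asserts in items (i)--(iii) of Section~\ref{sect_Cauchy_data}. One genuinely worthwhile point you add: since $\mathcal{E}_{ij}$ is complex, $\mathfrak{H}$ and $\mathfrak{K}$ are complex scalars, so $\mathfrak{H}^2+\mathfrak{K}^2=0$ as literally written does not force $\mathfrak{H}=\mathfrak{K}=0$; reading $\mathfrak{L}$ as $|\mathfrak{H}|^2+|\mathfrak{K}|^2$ is needed for it to be a non-negative deviation measure with the correct zero set, and this is a small but real sharpening of the paper's formulation.
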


\begin{remark}
{\rm
\label{remark_reality}
The conditions \eq{final_assumptions} only make sure that the evolution equations for the MST are regular \emph{near $\Sigma$}, whence the vanishing of the MST can merely be concluded in a corresponding neighborhood.
If the KIDs are real, though,  the KVF will be real as well, and the KIDs will generate one of the vacuum space-times
 contained in the class of space-times described in \cite{mars_senovilla}.
All these space-times have the property that the MST vanishes everywhere,
whence we conclude  that the MST actually vanishes
in the whole domain of dependence of $\Sigma$,  and not just in some neighborhood of $\Sigma$. In fact, one should expect that the same is true for MSTs associated to complex KVFs.
}
\end{remark}

\section{Algorithmic characterization of Cauchy data for the Kerr-NUT-(A)dS family}
\label{section_4}

\subsection{Vanishing of the MST associated to real KIDs and the Kerr-NUT-(A)dS family}

A necessary condition for Cauchy data $(\Sigma, h_{ij}, K_{ij})$ to generate a member of the Kerr-NUT-(A)dS family is
that the MST vanishes w.r.t.\ a KVF which is \emph{real}.
This will be the case whenever there  exists a choice $\widetilde\varkappa\in\mathbb{C}\setminus\{0\}$ for which the Killing initial data $\sigma$ and $Y^i$, as given by
\eq{final_sigma} and \eq{final_Y}, are real.
In that case there only remains the freedom to multiply $\widetilde\varkappa$ with real constants $\lambda\in\mathbb{R}\setminus\{0\}$.

Once it is known that the initial data set $(\Sigma,h_{ij},K_{ij})$ leads to a $\Lambda$-vacuum space-time which admits a
\emph{real} KVF w.r.t.\ which the MST vanishes,
the characterization result in \cite{mars_senovilla} (which we have recalled in paper I \cite{kerr1}) can be consulted to check whether the emerging space-time belongs to the Kerr-NUT-(A)dS family.
Moreover, that result can be used to compute the Kerr-NUT-(A)dS parameters $m$, $a$ and $\ell$ from $h_{ij} $ and $K_{ij}$, so  that one gains
insight which member of the Kerr-NUT-(A)dS family is generated by $(\Sigma,h_{ij},K_{ij})$.
For this we need to determine the constants $b_1$, $b_2$, $c$ and $k$ \eq{equation_b1b2}-\eq{equation_k}.

Using \eq{relation_qP_E},  \eq{dfn_qC}, and \eq{P_E_relation}
%
we obtain (cf.\ \cite{kerr1}, but note that $\widetilde\varkappa$ differs from the one used there)
\begin{eqnarray}
 b_1
&=& 
 18  \Big(  \frac{2}{3}\Big)^{1/4}  \mathrm{Im}(  \widetilde\varkappa^{3} )
\label{equation_b1_expl}
\,,
\\
 b_2
&=& 
-  18 \Big(  \frac{2}{3}\Big)^{1/4} \mathrm{Re} ( \widetilde\varkappa^{3} )
\label{equation_b2_expl}
\,,
\\
c
 &=& \sigma^2 - |Y|^2\mp 6\mathrm{Re}\Big( \widetilde\varkappa^2  (\mathcal{E}^2)^{1/6}\Big) -  \sqrt{6}\,\Lambda\,\mathrm{Re}\Big( \widetilde\varkappa^2   (\mathcal{E}^2)^{-1/3} \Big)
\,,
\label{equation_c_expl}
\\
k
 &=& 9  \Big(\frac{2}{3}\Big)^{1/2} |\widetilde\varkappa^2( \mathcal{E}^2 )^{-1/3} | \Big(\mcD_{i}Z\mcD^{i}Z - (\ol{\nabla_{0}Z})^2 \Big) -  b_2Z +cZ^2 +\frac{\Lambda}{3} Z^4
\,,
\phantom{xx}
\label{equation_k_expl}
\end{eqnarray}
where
\begin{eqnarray}
 Z|_{\Sigma}
&=&3\Big(\frac{2}{3}\Big)^{1/4}\mathrm{Re} \Big( \widetilde\varkappa( \mathcal{E}^2 )^{-1/6}\Big)
\,,
\label{equation_Z_expl}
\\
\nabla_{0}Z |_{\Sigma}
&\overset{\cite{kerr1}}{=}& -\frac{3}{2}\,\mathrm{Re}\Big(  \frac{ \nabla_{0}\mathcal{F}^2}{\sqrt{\mathcal{F}^2}(Q\mathcal{F}^2 +2\Lambda)}
\Big)\Big|_{\Sigma}
\\
 &\overset{\eq{nabla_F2}}{=}& \Big(\frac{2}{3}\Big)^{1/4}\mathrm{Re}\Big(  \widetilde\varkappa^{-1} \frac{( \mathcal{E}^2 )^{1/6}}{\pm \sqrt{\mathcal{E}^2}  -\sqrt{\frac{2}{3}}  \Lambda } Y^{i}\mathcal{P}_{i }
\Big)
\;.
\label{gen_nablaZ}
\end{eqnarray}
\begin{remark}
\label{remark_weaker}
{\rm
The 6th root $(\mathcal{E}^2)^{1/6}$ is determined by the requirement that the Killing initial data $(\sigma, Y)$ need to be real.
}
\end{remark}

Since it is of particular physical interest and somewhat easier to analyze, we devote ourselves henceforth to Kerr-(A)dS family.

\subsection{Kerr-(A)dS family}

To end up with  an algorithmic local characterization result for the Kerr-(A)dS metrics  in terms of Cauchy data we will
employ the space-time characterization Theorem~\ref{thm_charact}.
We assume that we have been given Cauchy data $(\Sigma, h_{ij},K_{ij})$  which fulfill  all hypotheses of Theorem~\ref{thm_first_main_result}.
In particular   \eq{final_assumptions} implies that $Q\mathcal{F}^2$ and $Q\mathcal{F}^2-4\Lambda$
are not identically zero, as required by Theorem~\ref{thm_charact}.
Then we supplement the data via \eq{cand_sigma} and \eq{candidate_field_Y} to Killing initial data  $(\Sigma, h_{ij},K_{ij}, \sigma, Y^i)$,
where we assume that there exists a choice of $\widetilde\varkappa\in\mathbb{C}\setminus\{0\}$ for which  $\sigma$ and $Y^i$ are real (in other words we  require
 $(\sigma, Y^i)$  to be real up to some multiplicative complex constant).

According to Theorem~\ref{thm_charact}  a necessary condition  for the Cauchy development of $(\Sigma,h_{ij}, K_{ij})$ to be locally isometric
to a Kerr-(A)dS space-time is $b_2 = 0$, equivalently   $\mathrm{Re} ( \widetilde\varkappa^{3} )=0$.
There remains a gauge freedom
concerning the choice of the  complex constant  $\widetilde\varkappa$, namely  to prescribe
its  length.  It arises from the freedom to rescale the KVF.
One may therefore  impose the gauge condition
\begin{equation}
\widetilde\varkappa 
 \,=\, i
\,.
\label{gauge_tildekappa}
\end{equation}
%

Let us analyze the validity of \eq{final_assumptions} in the KdS-case.
In \cite{kerr1} it has been shown that the Kerr-(A)dS family  satisfies
\begin{equation}
\mathcal{C}^2 \,=\, \frac{96m^2}{(r+ia\cos\theta)^6}
\;,
\end{equation}
and as in \cite{kerr1} we define $\sqrt{\,\cdot\,}$ in such a way that
\begin{equation}
\sqrt{\mathcal{C}^2} \,=\, \frac{\sqrt{96} m}{(r+ia\cos\theta)^3}
\end{equation}
(then \eq{solvability_condB_norm} holds with ``$-$'').
For $m\ne0$ we thus have  $\mathcal{E}^2\ne 0$ on any Cauchy surface $\Sigma$.
In particular  \eq{final_assumptions}  holds everywhere in the $\Lambda\ne0$-case.
Moreover, observe that that $\mathrm{grad}(\mathrm{Re}[(\mathcal{C}^2)^{-1/6}])$ is nowhere vanishing.

So let us consider the case $\Lambda\ne 0$ (and $m\ne 0 $).
It has been shown  in \cite{kerr1} that  $\sqrt{\mathcal{C}^2}\ne \sqrt{\frac{32}{3}}\Lambda$  holds if and only if
\begin{eqnarray}
\text{for $a=0$:} &&    r  \,=\,  \big(3 m\Lambda^{-1}\big)^{1/3}
\;,
\label{possi1}
\\
\text{for $a>  0$:} && \theta  \,=\,\pi/2
\quad \text{and} \quad
    r  \,=\,  \big(3 m\Lambda^{-1}\big)^{1/3}
\label{possi2}
\\
&& \hspace{5em}  \text{or}
\nonumber
\\
&&
  \cos\theta \,=\, \pm \Big(\frac{9}{8}\sqrt{3}\, ma^{-3} \Lambda^{-1}\Big)^{1/3}
\quad \text{and} \quad
r   \,=\,  \mp \frac{a}{\sqrt{3}}  \cos\theta
\;.
\phantom{xx}
\label{possi3}
\end{eqnarray}
Clearly, the solution \eq{possi3} exists only for $\frac{9}{8}\sqrt{3}\, ma^{-3} \Lambda^{-1}\leq 1$.
Moreover,
$\sqrt{\mathcal{C}^2}\ne -\sqrt{\frac{8}{3}}\Lambda$  is equivalent to \cite{kerr1}
\begin{eqnarray}
\text{for $a=0$:} &&    r  \,=\, - \big(6 m\Lambda^{-1}\big)^{1/3}
\;,
\label{possiB1}
\\
\text{for $a>  0$:} && \theta  \,=\,\pi/2
\quad \text{and} \quad
    r  \,=\,  -\big(6 m\Lambda^{-1}\big)^{1/3}
\label{possiB2}
\\
&& \hspace{5em}  \text{or}
\nonumber
\\
&&
  \cos\theta \,=\, \pm \Big(\frac{9}{4}\sqrt{3}\, ma^{-3} \Lambda^{-1}\Big)^{1/3}
\quad \text{and} \quad
 r   \,=\,  \pm \frac{a}{\sqrt{3}}  \cos\theta
\;, \phantom{xxx}
\label{possiB3}
\end{eqnarray}
where \eq{possiB3} exists only for $\frac{9}{4}\sqrt{3}\, ma^{-3} \Lambda^{-1}\leq 1$.
It remains to consider the last condition in \eq{final_assumptions}:
%
%
\begin{eqnarray}
\sqrt{\mathcal{C}^2} \,=\,- \sqrt{ \frac{128}{3}}\Lambda     & \Longleftrightarrow  &
  \frac{3}{2}m \Lambda^{-1}   \,=\, - (r + i a\cos\theta)^3
\;,
\label{ineq_eq3}
\end{eqnarray}
happens if and only if
\begin{eqnarray}
\text{for $a=0$:} &&    r  \,=\, - \Big(\frac{3}{2} m\Lambda^{-1}\Big)^{1/3}
\;,
\label{possiC1}
\\
\text{for $a>  0$:} && \theta  \,=\,\pi/2
\quad \text{and} \quad
    r  \,=\,  -\Big(\frac{3}{2} m\Lambda^{-1}\Big)^{1/3}
\label{possiC2}
\\
&& \hspace{5em}  \text{or}
\nonumber
\\
&&
  \cos\theta \,=\, \pm \Big(\frac{9}{16}\sqrt{3}\, ma^{-3} \Lambda^{-1}\Big)^{1/3}
\quad \text{and} \quad
 r   \,=\,  \pm \frac{a}{\sqrt{3}}  \cos\theta
\;. \phantom{xxx}
\label{possiC3}
\end{eqnarray}
The solution \eq{possiC3} exists only for $\frac{9}{16}\sqrt{3}\, ma^{-3} \Lambda^{-1}\leq 1$.

To sum it up,   for $\Lambda=0$, the conditions \eq{final_assumptions}  are satisfied everywhere by the Kerr family.
For $\Lambda\ne 0 $ and $a\ne 0$ some of  the conditions in \eq{final_assumptions}  are violated on certain   $\{r,\theta=\mathrm{const.}\}$-2-surfaces,
in the Schwarzschild-(A)de Sitter case   on certain
$\{r=\mathrm{const.}\}$-hypersurfaces.
For the latter ones,
\begin{equation}
g(\partial_r,\partial_r) \,=\, \Big(1  - \frac{2m}{r} - \frac{\Lambda}{3}r^2\Big)^{-1}
\;,
\end{equation}
so for appropriate choices of $\Lambda$ and $m$ these surfaces will be spacelike.
Our results do not apply on these 3-surfaces.

Finally, we state our second main result:

\begin{theorem}
\label{thm_second_main_result}
Consider Cauchy data $(\Sigma,h_{ij}, K_{ij})$ which solve the vacuum constraint equations and satisfy
(cf.\ footnote~\ref{footnote_E_conditions})
%
\begin{equation*}
 \mathrm{tr}(\mathcal{E}\cdot\mathcal{E}) \,\ne \,0
\;, \enspace
\mathrm{tr}(\mathcal{E}\cdot\mathcal{E})   - \frac{2}{3}\Lambda^2 \,\ne \, 0
\;, \enspace
\mathrm{tr}(\mathcal{E}\cdot\mathcal{E}) - \frac{1}{6}\Lambda^2 \,\ne \, 0
\;, \enspace
\mathrm{tr}(\mathcal{E}\cdot\mathcal{E})   - \frac{8}{3}\Lambda^2 \,\ne \, 0
\;,
\end{equation*}
where
\begin{equation*}
\mathcal{E}_{ij} \,:=\, \mathring R_{ij}  + K K_{ij} - K_{ik}K_j{}^k -\frac{2}{3}\Lambda h_{ij}
 -i\mathring\epsilon_{i}{}^{kl}\mcD_{k}K_{lj}
\;,
\end{equation*}
and for which $\mathrm{Im}\Big(\frac{\sqrt{\mathcal{F}^2}}{Q\mathcal{F}^2-4\Lambda}\Big)$ has non-zero gradient somewhere.

Then the emerging $\Lambda$-vacuum space-time
is locally isometric to a member of the Kerr-(A)dS family if and only if (i)-(iv) hold:
\begin{enumerate}
\item[(i)]
$\mathrm{tr}(\mathcal{E}\cdot\mathcal{E}\cdot\mathcal{E}\cdot\mathcal{E})
 \mp\sqrt{\frac{2}{3}\mathrm{tr}(\mathcal{E}\cdot\mathcal{E})}\,\mathrm{tr}(\mathcal{E}\cdot\mathcal{E}\cdot\mathcal{E})
- \frac{1}{6}[\mathrm{tr}(\mathcal{E}\cdot\mathcal{E})]^2
\,=\, 0$.
\end{enumerate}
 Let $\mathcal{W}^i$ be any
vector field with $|\big(\mathcal{E}_{ij} \pm \sqrt{\frac{\mathcal{E}^2}{6}}\,  h_{ij}\big)\mathcal{W}^j|^2=1$
 (which exists because $\mathrm{tr}(\mathcal{E}\cdot\mathcal{E}) \ne 0$ );
then,
set
\begin{eqnarray*}
\mathcal{P}_i &=&
 - \Big(\frac{3}{2}\Big)^{1/4}
(\mathcal{E}^2)^{-1/6}\Big( \pm\sqrt{ \mathcal{E}^2}
-\sqrt{\frac{2}{3}}\,\Lambda\Big) \Big(\mathcal{E}_{ij} \pm \sqrt{\frac{\mathcal{E}^2}{6}}\,  h_{ij}\Big)\mathcal{W}^j
\;,
\end{eqnarray*}
(the signs are determined by (i)).
\begin{enumerate}
\item[(ii)]
 $\mathfrak{K}  \,=\,0$ (the scalar $\mathfrak{K}$ has been defined in \eq{main_scalar} in terms of the Cauchy data and $\mathcal{P})$),
\item[(iii)] The fields  $\sigma$ and $Y$ are real, where
\begin{eqnarray*}
\sigma &=&
\frac{1}{2}\Big(\frac{2}{3}\Big)^{1/4}
(\mathcal{E}^2)^{-1/6} \Big(\mathcal{E}^{kl}\mathcal{W}_l \pm \sqrt{\frac{\mathcal{E}^2}{6}} \mathcal{W}^k\Big) \mcD_k\log \mathcal{E}^2
\;,
\\
Y^i &=&
i\Big(\frac{3}{2}\Big)^{3/4}
(\mathcal{E}^2)^{-1/6}
 \Big[-\mathring\epsilon^{ijk}\mcD_j
 +\frac{1}{6}\mathring\epsilon^{ijk}\mcD_j \log\mathcal{E}^2
 +i K^{ik}- iKh^{ik}
\Big]
\\
&&\qquad \times  \Big(\mathcal{E}_{k}{}^{l} \mathcal{W}_l\pm \sqrt{\frac{\mathcal{E}^2}{6}}\mathcal{W}_k\Big)
\;.
\end{eqnarray*}
\item[(iv)] 
$\mathrm{grad}(\mathrm{Re} [( \mathcal{E}^2 )^{-1/6}])$
 is not identically zero, and
\item[(v)] the constants $c$ and $k$, given by \eq{equation_c_expl}-\eq{equation_k_expl} (with \eq{gauge_tildekappa}), satisfy, depending on the sign of the cosmological constant, \eq{second_condition1}-\eq{second_condition3}, respectively.
\end{enumerate}
If (i)-(iv) are fulfilled, the K(A)dS-space-time generated by $(\Sigma,h_{ij}, K_{ij})$ has parameters
%
\begin{equation}
m\,=\,
9  \Big(  \frac{2}{3}\Big)^{1/4}
\big(\frac{\Lambda}{3} \zeta_1^2 + c\big)^{-3/2}\,, \quad a\,=\, \zeta_1\big(\frac{\Lambda}{3} \zeta_1^2 + c\big)^{-1/2}\,,
\end{equation}
where $\zeta_1$ is given by \eq{first_condition1b}-\eq{first_condition3b}.
The KVF whose associated MST vanishes (in Boyer-Lindquist-type coordinates this is a multiple of the $\partial_t$-KVF restricted to $\Sigma$)
is then generated by $(\sigma,Y)$.
\end{theorem}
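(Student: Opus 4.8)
The plan is to derive Theorem~\ref{thm_second_main_result} by feeding the output of Theorem~\ref{thm_first_main_result} into the space-time characterization Theorem~\ref{thm_charact}. First I would note that the hypotheses \eqref{final_assumptions} are exactly those of Theorem~\ref{thm_first_main_result}, so that conditions (i) and (ii) — that is, $\mathfrak{H}=0$ and $\mathfrak{K}=0$ — are equivalent to the statement that the maximal Cauchy development of $(\Sigma,h_{ij},K_{ij})$ carries a non-trivial (a priori complex) KVF $X$ whose associated MST vanishes in a neighbourhood of $\Sigma$; in that case $X|_\Sigma=(\sigma,Y^i)$ is given, after fixing the scaling freedom by $\widetilde\varkappa=i$, by \eqref{final_sigma}--\eqref{final_Y}, and $\mathcal{P}_i$ by \eqref{P_via_Cauchy_data2}.

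Next I would impose reality: a member of the Kerr-(A)dS family is obtained only through a \emph{real} KVF, and since Theorem~\ref{thm_first_main_result} produces an essentially unique candidate KVF (unique up to multiplication of $\widetilde\varkappa$ by a complex constant), reality of $X$ amounts to a choice of $\widetilde\varkappa$, of the sign $\pm$ (the one for which $\mathcal{S}^{(\mathcal{C})}$ can vanish), and of the branch of $(\mathcal{E}^2)^{1/6}$, making $\sigma,Y^i$ in \eqref{final_sigma}--\eqref{final_Y} come out real — this is condition (iii). I would then check that the remaining hypothesis of Theorem~\ref{thm_charact}, namely that $\mathrm{Im}\bigl(\sqrt{\mathcal{F}^2}/(Q\mathcal{F}^2-4\Lambda)\bigr)$ has non-vanishing gradient somewhere, reduces to a condition on the Cauchy data: using the restriction formulas \eqref{F2_C2_relation}, \eqref{relation_A_P_Q_F} and \eqref{P_E_relation} that express $\mathcal{F}^2$ and $Q$ on $\Sigma$ in terms of $\mathcal{E}^2$, together with $\widetilde\varkappa=i$, this gradient is a real multiple of $\mathrm{grad}\bigl(\mathrm{Re}[(\mathcal{E}^2)^{-1/6}]\bigr)$, which is condition (iv); the non-identical vanishing of $Q\mathcal{F}^2$ and $Q\mathcal{F}^2-4\Lambda$ required by Theorem~\ref{thm_charact} follows from \eqref{final_assumptions} via the equivalences \eqref{ineq_calC_1_0}--\eqref{ineq_calC_1} and \eqref{all_ineqs_Cauchy1}.

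With these preliminaries Theorem~\ref{thm_charact} applies directly: the functions $b_1,b_2,c,k$ are constant, their values on $\Sigma$ are the explicit expressions \eqref{equation_b1_expl}--\eqref{equation_Z_expl} together with \eqref{gen_nablaZ}, and the extra hypothesis $b_2=0$ of that theorem translates into $\mathrm{Re}(\widetilde\varkappa^3)=0$, which is ensured by the gauge \eqref{gauge_tildekappa}; hence the only surviving conditions from Theorem~\ref{thm_charact} are the sign/inequality conditions \eqref{second_condition1}--\eqref{second_condition3} on $c$ and $k$, i.e.\ condition (v). When (i)--(v) hold, Theorem~\ref{thm_charact} yields local isometry to a Kerr-(A)dS space-time with parameters $m=b_1/\bigl(2(\tfrac{\Lambda}{3}\zeta_1^2+c)^{3/2}\bigr)$ and $a=\zeta_1/(\tfrac{\Lambda}{3}\zeta_1^2+c)^{1/2}$; substituting $b_1=18(2/3)^{1/4}\mathrm{Im}(\widetilde\varkappa^3)$ with $\widetilde\varkappa=i$ gives the stated value of $m$. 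For the converse, if the development is Kerr-(A)dS then (i)--(ii) hold because Kerr-(A)dS has vanishing MST (Theorem~\ref{thm_first_main_result}), (iii) holds because the stationary KVF is real, and (iv)--(v) follow from the explicit Kerr-(A)dS formulas reviewed in Section~\ref{section_4} and in \cite{kerr1}, noting in particular that $\mathcal{C}^2=96m^2(r+ia\cos\theta)^{-6}$, so that $\mathrm{Re}[(\mathcal{C}^2)^{-1/6}]$ has nowhere vanishing gradient.

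The main obstacle will be the branch-and-sign bookkeeping in the reality step: one must show that a consistent simultaneous choice of $\pm$, of the sixth root $(\mathcal{E}^2)^{1/6}$ rendering $(\sigma,Y^i)$ real, and of the phase of $\widetilde\varkappa$ compatible with $b_2=0$, actually exists and is essentially unique, and that this is the choice which makes the expressions \eqref{equation_c_expl}--\eqref{equation_k_expl} and the final formulas for $m$ and $a$ well defined. Establishing the equivalence of the Mars--Senovilla gradient hypothesis with condition (iv) is similarly delicate because it passes through the relation between $\sqrt{\mathcal{F}^2}/(Q\mathcal{F}^2-4\Lambda)$ and $(\mathcal{E}^2)^{-1/6}$ on $\Sigma$. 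However, all the identities needed for both points are already assembled in Sections~\ref{section_2} and~\ref{section_4}, so beyond this bookkeeping the proof is an assembly of previously established facts rather than new analysis.
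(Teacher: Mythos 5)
Your proposal is correct and follows essentially the same route as the paper: Section~\ref{section_4} establishes Theorem~\ref{thm_second_main_result} precisely by combining Theorem~\ref{thm_first_main_result} with the Mars--Senovilla characterization Theorem~\ref{thm_charact}, translating the reality of the KVF into the choice of $\widetilde\varkappa$ and the branch of $(\mathcal{E}^2)^{1/6}$, identifying $b_2=0$ with $\mathrm{Re}(\widetilde\varkappa^3)=0$ and fixing the gauge \eqref{gauge_tildekappa}, and expressing $b_1$, $c$, $k$, $Z$ through \eqref{equation_b1_expl}--\eqref{gen_nablaZ}. The only caveat is the sign bookkeeping you flag yourself (e.g.\ $\mathrm{Im}(\widetilde\varkappa^3)=-1$ for $\widetilde\varkappa=i$ versus the stated positive $m$), which is present in the paper's own presentation and does not affect the structure of the argument.
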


An  issue of interest would be to do an analog analysis for the characteristic initial value problem.

\vspace{1.2em}
\noindent {\textbf {Acknowledgements}}
I am grateful to  Piotr  Chru\'sciel  for many helpful comments to improve the manuscript.
The research  was funded by the Austrian Science Fund (FWF): P 23719-N16 and P 28495-N27.

\end{document}